\newcommand{\size}{N}
\newcommand{\radius}{\rho} 
\newcommand{\bsize}{6w^3}
\newcommand{\bfsize}{2w^3}
\newcommand\given[1][]{\:#1\vert\:}
\newcommand\ddfrac[2]{\frac{\displaystyle #1}{\displaystyle #2}}
\newcommand{\qed}{\nobreak \ifvmode \relax \else 
      \ifdim\lastskip<1.5em \hskip-\lastskip
      \hskip1.5em plus0em minus0.5em \fi \nobreak
      \vrule height0.75em width0.5em depth0.25em\fi} 
\newtheorem{theorem*}{\textbf{Theorem}}
\newtheorem{proposition*}{\textbf{Proposition}}
\newtheorem{lemma*}{\textbf{Lemma}}
\newtheorem{conjecture*}{\textbf{Conjecture}}
\newtheorem{corollary*}{Corollary}
\begin{document}

\title{
Self-organized Segregation on the Grid
\thanks{This is a pre-print of an article published in Journal of Statistical Physics. The final authenticated version is available online at: \url{https://doi.org/10.1007/s10955-017-1942-4}. This work was partially supported by Army Research Office (ARO), award number W911NF-15-1-0253.   An extended abstract of   this paper  has appeared in the proceedings of ACM Symposium on Principles of Distributed Computing 2017 without rigorous proofs and with some results omitted.}
 }



 \author{Hamed Omidvar, 
Massimo Franceschetti \\
Department of Electrical and Computer Engineering, \\University of California, San Diego \\
\texttt{ \{homidvar, mfrances\}@ucsd.edu}}

\maketitle

\maketitle

\begin{abstract}
We consider an agent-based model with exponentially distributed waiting times  in which two types of agents interact locally over a graph,  and based on this interaction and on the value of a common intolerance threshold $\tau$, decide whether to change their types. This is equivalent to  a zero-temperature Ising model with Glauber dynamics, an Asynchronous Cellular Automaton (ACA) with extended Moore neighborhoods, or a Schelling model of self-organized segregation  in an open system, and has applications in the analysis of social and biological networks, and spin glasses systems. Some rigorous results were recently obtained in the theoretical computer science literature, and this work provides several extensions. We enlarge the intolerance interval leading to the expected formation of  large segregated regions  of agents of a single type from the known size $\epsilon>0$ to size $\approx 0.134$. Namely, we show that for $0.433 < \tau < 1/2$ (and by symmetry $1/2<\tau<0.567$), the  expected size of the largest segregated region containing  an arbitrary agent  is exponential in the size of the neighborhood. 
We further extend the interval  leading to  expected large segregated regions to size $\approx 0.312$ considering ``almost segregated'' regions, namely regions where the ratio of the  number of agents of one type and the  number of agents of the other type vanishes quickly as the size of the neighborhood grows. In this case,  we show that for $0.344 < \tau \leq 0.433$   (and by symmetry for $0.567 \leq \tau<0.656$) the expected size of the  largest almost segregated region containing an arbitrary agent is exponential in the size of the neighborhood. This behavior is reminiscent of supercritical percolation, where small clusters of empty sites can be observed within any sufficiently large region of the occupied  percolation cluster. The exponential bounds that we provide  also imply that complete segregation, where agents of a single type cover the whole grid, does not occur with high probability for $p=1/2$ and the range of intolerance considered. 
\end{abstract}

\section{Introduction}
\label{intro}
\subsection{Background}
A basic observation   made by Thomas Schelling while   studying the mechanisms leading to social segregation in the   United States \cite{schelling1969models,schelling1971dynamic} was that   individuals in a social network   have interactions  with their friends and neighbors rather than with the entire population,  and this often triggers  global effects that were not originally intended, nor desired.   Schelling proposed a simple 
 stochastic model to predict these global outcomes, which has become  popular in the social sciences. Two types of agents are randomly placed at the vertices of a two-dimensional grid and interact with a small subset of nodes located in their local neighborhood. Based on these interactions, the boolean state of each agent is determined as follows. All agents have a common intolerance threshold,   indicating the minimum fraction  of agents of their same type that must be located in their neighborhood to make their state happy.  Unhappy agents   randomly move to vacant locations where they will be happy. A peculiar effect observed by simulating several variants of this model is that when the system reaches a stable state, large areas of segregated agents of the same type are observed, for a wide range of the intolerance threshold value. Individuals, Schelling concluded, tend to spontaneously self-segregate. See Figure~\ref{Fig:Sim_results_1} for a simulation of this behavior.
 
\begin{figure}
\begin{center}
{\includegraphics[width=\textwidth]{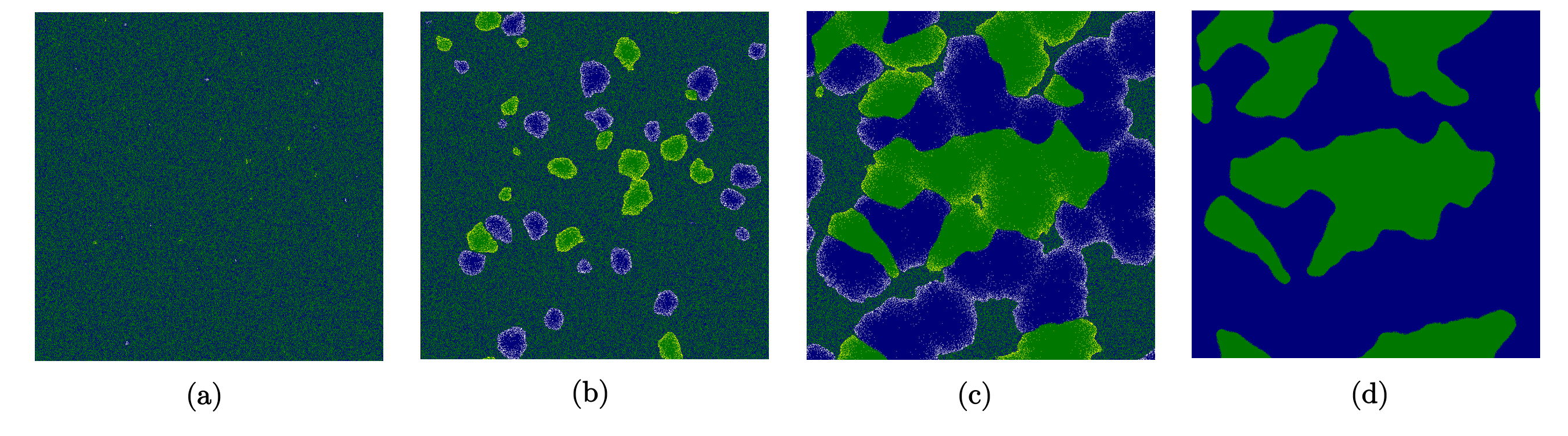}}
\end{center}
\caption{{\footnotesize Self-segregation arising over time for a value of the intolerance $\tau=0.42$ on a grid of size $1000 \times 1000$ and neighborhood size $441$. Green and blue indicate areas of ``happy'' agents of type (+1) and (-1), respectively. White and yellow indicate areas of ``unhappy" agents of type (+1) and (-1) respectively.  Initial configuration (a),   intermediate configurations (b)-(c),  final configuration (d). When the process terminates all agents are happy but large segregated regions can be observed. }}
\label{Fig:Sim_results_1}
\end{figure}

%


Similar models have been considered in the statistical physics literature  well before Schelling's observation. For an intolerance value of $1/2$, for example,  agents take the same value of the majority of their neighbors, and self-organized segregation in the Schelling model corresponds to spontaneous magnetization in the Ising model with zero temperature, where spins align along the direction of the local field~\cite{stauffer2007ising, castellano2009}. In computation theory, mathematics, physics, complexity theory, theoretical biology and microstructure modeling, the model is known as  a two-dimensional, two-state Asynchronous Cellular Automaton (ACA) with extended Moore neighborhoods and exponential waiting times \cite{chopard1998cellular}.  Other related models appeared in epidemiology \cite{hethcote2000mathematics, draief2010epidemics}, economics \cite{jackson2002formation}, engineering and computer sciences \cite{kleinberg2007cascading, easley2010networks}. 
Mathematically, all  of these models fall in the general area of interacting particle systems, or contact processes, and exhibit phase transitions~\cite{liggett2012interacting,liggett2013stochastic}. 

Schelling-type models can be roughly divided into two classes.
A   \textit{Kawasaki dynamic} model assumes
  there are  no vacant positions in the underlying graph, and a pair of unhappy agents swap their locations  if this will make both of them happy.  A \textit{Glauber dynamic} model assumes single agents to simply flip their types if this makes them happy. This flipping action   indicates that the agent has moved out of the system and a new agent has occupied its location. While in a Kawasaki model   the  system is ``closed" and the number of agents of the same type is fixed, in a Glauber model the system is ``open" and the number of agents of the same type may change over time. Sometimes the model dynamics are defined as having unhappy agents swap (or flip) regardless of whether this makes them happy or not. We  assume throughout Glauber dynamics and  agents to  flip only if this makes them happy.
Another possible variant is to assume that agents have a small probability of acting differently than what the general rule prescribes, 
other variants also consider having multiple intolerance levels,  multiple agent types, different agent distributions, and time-varying intolerance  \cite{young2001individual, zhang2004dynamic, zhang2004residential, zhang2011tipping, mobius2000formation, meyer2003immigration,bhakta2014clustering, schulze2005potts,barmpalias2015minority,barmpalias2015tipping}.


\subsection{Contribution}
We focus on  the  case of two types of agents placed  uniformly  at random on a two-dimensional grid according to a Bernoulli distribution of parameter $p  = 1/2$ and having  a single intolerance level $0<\tau<1$, and study the range of intolerance leading to the formation of large segregated regions.  Even for the one-dimensional version of this problem  rigorous results appeared only recently.   

Brandt et al.~\cite{brandt2012analysis}  considered  a  ring graph for the Kawasaki model of evolution.  In this setting, letting the  neighborhood of an agent be the set of nearby agents that is used to determine whether the agent is happy or not, they  showed that    for an intolerance level $\tau=1/2$,  the expected  size of the largest segregated region containing  an arbitrary agent in steady state is polynomial in the size of the neighborhood.  Barmpalias et al.~\cite{barmpalias2014digital} showed
that there exists a value of $\tau^*\approx 0.35$, such that for all  $\tau<\tau^*$ the initial configuration remains static with high probability (w.h.p.), while for all $\tau^*<\tau<1/2$  
the   size of the largest segregated region in steady state  becomes   exponential in the size of the neighborhood w.h.p. On the other hand, for all $\tau>1/2$  the system evolves w.h.p.\ towards a  state with only two segregated components. For the   Glauber model the behavior is similar, but symmetric around $\tau=1/2$, with a first transition from a static configuration to exponential segregation occurring at $\tau \approx 0.35$, a special point $\tau=1/2$ with the largest segregated  region of expected polynomial size,  then again exponential segregation until $\tau \approx 0.65$, and finally a static configuration for larger values of $\tau$.


In a two-dimensional grid graph on a  torus, the case $\tau=1/2$ is open. Immorlica et al.~\cite{immorlica2015exponential}  have shown for the Glauber model  the existence of a value $\tau^*< 1/2$, such that for all $\tau^*<\tau<1/2$  the  expected size of the largest segregated region is exponential in the size of the neighborhood. This shows that segregation is expected  in the small interval $\tau \in (1/2-\epsilon, 1/2)$.  
Note that this does not imply exponential segregation w.h.p., but only expected segregated regions of exponential size.  Barmpalias et al.~\cite{barmpalias2016unperturbed} considered a model in which each type of agent has a different intolerance, i.e.,  $\tau_1$ and $\tau_2$. For the special case of $\tau_1 = \tau_2 = \tau$, they have shown that when  $\tau>3/4$, or  $\tau<1/4$,  the initial configuration remains static  w.h.p.
 
Our main contribution is depicted in Figure~\ref{fig:tau}.  We consider the Glauber model for the two-dimensional grid graph on a torus. First,  we enlarge the intolerance interval that leads to the formation of large segregated regions from the known size $\epsilon>0$ to size $\approx 0.134$, namely  we show that when $0.433 < \tau < 1/2$ (and by symmetry $1/2<\tau<0.567$), the  expected size of the largest segregated region is exponential in the size of the neighborhood. 
Second, we further extend the interval  leading to large segregated regions to size $\approx 0.312$. In this case,  the main contribution is that we  consider 
``almost segregated''  regions, namely regions where the ratio of the  number of agents of one type and the  number of agents of the other type quickly vanishes   as the size of the neighborhood grows,
 and show that for $0.344 < \tau \leq 0.433$   (and by symmetry for $0.567 \leq \tau<0.656$) 
the expected size of the largest  almost segregated region  is   exponential in the size of the neighborhood.

As  shown for the one dimensional case in~\cite{barmpalias2014digital} and   conjectured for the two-dimensional case in~\cite{barmpalias2016unperturbed}, we show that as the intolerance parameter gets farther from one half, in both directions, the average size of both the segregated and almost segregated regions gets larger: higher tolerance in our model does not necessarily lead to less segregation. On the contrary,  it can increase the size of the segregated areas. This result is depicted in Figure~\ref{fig:aaprime}. The intuitive explanation is that  highly tolerant agents are seldom unhappy in the initial configuration, and the segregated regions of  opposite types that unhappy agents may  ignite are likely to start from far apart, and may grow larger  before meeting at their boundaries. 

Finally, the exponential upper bound that we provide on the expected size of the largest segregated region implies that complete segregation, where agents of a single type cover the whole grid, does not occur w.h.p. for the range of intolerance considered.  In contrast, Fontes et al.~ \cite{siv} have shown the existence of a critical probability  $1/2<p^*<1$ for the initial Bernoulli distribution of the agents such that for $\tau =1/2$  and $p > p^*$ the Glauber model on the $d$-dimensional grid converges to a state where only one type of agents are present.
This shows that complete segregation occurs w.h.p.\ for $\tau=1/2$ and $p \in (1-\epsilon,1)$. Morris~\cite{morris2011zero} has shown that $p^*$   converges to $1/2$ as $d \rightarrow \infty$. Caputo and Martinelli \cite{caputo2006phase} have shown the same result for $d$-regular trees, while  Kanoria and Montanari~\cite{montanaritree} derived it for  $d$-regular trees in a synchronous setting where  flips occur simultaneously,  and obtained lower bounds on $p^*(d)$ for small values of $d$. The case $d = 1$   was first investigated by Erd\"{o}s and Ney~\cite{erdos1974some}, and Arratia~\cite{arratia1983site} has proven that $p^*(1)=1$. 

\subsection{Techniques}
Our proofs are based on a typicality argument showing a self-similar structure of the neighborhoods in the initial state of the process, and on the identification of geometric configurations igniting a cascading process leading to segregation.
 We make extensive use of     tools from percolation theory, including the exponential decay of the radius of the open  cluster below criticality~\cite{grimmett1999percolation},
concentration bounds on the passage time~\cite{kesten1993speed} (see also \cite{damron2014subdiffusive, talagrand1995concentration}), and on the chemical distance between percolation sites~\cite{garet2007large}. We also make frequent use of  renormalization, and correlation inequalities for contact processes~\cite{liggett2010stochastic}.  In this framework, we provide an extension of the Fortuin-Kasteleyn-Ginibre (FKG) inequality in a dynamical setting that can be of independent interest.
\begin{figure}[!t] 
\centering
\includegraphics[width=3in]{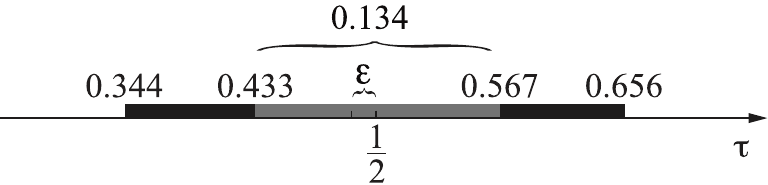}
\caption{We enlarge the width of the intolerance interval for which the expected size of the largest segregated region containing an arbitrary agent is exponential in the size of the neighborhood from the known value $\epsilon>0$ to $\approx 0.134$ (grey region). We also show that the expected size of the largest almost segregated region containing an arbitrary agent is exponential in the size of the neighborhood for an intolerance interval of width $\approx 0.312$ (grey plus black region).  }
\label{fig:tau}
\end{figure}

The  paper is organized as follows. In section \ref{Sec:Model} we introduce the model, state our results, and give a summary of the proof construction. In section~\ref{Sec:Triggering} we study the initial configuration and derive some properties of the sub-neighborhoods of the unhappy agents. 
In section \ref{Sec:Segregation} we study the dynamics of the segregation process and derive the main results. Concluding remarks are given in section \ref{Sec:Conclusion}.

\section{Model and main results} \label{Sec:Model}

\subsection{The Model} 
\textit{Initial Configuration.}
We consider an $n \times n$ grid graph $G_n$ embedded on a torus $\mathbb{T}=[0,n)\times[0,n)$, an integer $w \in O(\sqrt{\log n})$ called \emph{horizon}, and a rational $0 \leq \tau \leq 1$  called \emph{intolerance}.  All arithmetic operations over the coordinates are performed modulo $n$, i.e., $(x,y)=(x+n,y)=(x,y+n)$. We place an agent at each node of the grid and choose its type independently at random  to be (+1) or (-1) according to a Bernoulli distribution of parameter $p  = 1/2$.

\

A \textit{neighborhood} is a connected sub-graph of $G_n$. 
A  \textit{neighborhood of radius ${\radius}$} is the set of all agents with $l_\infty$ distance  at most ${\radius}$ from a central node, and is denoted by  $\mathcal{N}_{\radius}$. The \textit{size} of a neighborhood is the  number of agents in  it.   
The \textit{neighborhood of an agent $u$} is a neighborhood of radius equal to the horizon and centered at $u$, and is denoted by $\mathcal{N}(u)$.

\

\textit{Dynamics.}
 We let the rational $\tau$ called \emph{intolerance} be $\lceil \tilde{\tau} {\size} \rceil/{\size}$, where $\tilde{\tau} \in [0,1]$   and 
${\size} = (2w+1)^2$  is the size of the neighborhood of an  agent. The integer $\tau {\size}$ represents the minimum number of agents of the same type as $u$ that must be present in $\mathcal{N}(u)$  to make   $u$  \emph{happy}. More precisely,
for every agent $u$, we let $s(u)$ be the ratio between the number of agents of the same type as $u$ in its neighborhood and the size of the neighborhood. 
At any point in continuous-time, if ${s(u) \ge \tau}$ then $u$ is labeled \textit{happy}, otherwise it is labeled \textit{unhappy}. We assign independent and identical Poisson clocks to all agents, and every time a clock rings the type of the agent is flipped if and only if the agent is unhappy and this flip will make the agent happy.
 Two observations are now  in order. First, for $\tau< 1/2$  flipping its type will always make an unhappy agent happy,  but this is not the case  for $\tau>1/2$. Second, the process dynamics are equivalent to a discrete-time model where at each discrete time step one unhappy agent is chosen uniformly at random and its type is flipped if this will make the agent happy.


\

\textit{Termination.}
The process continues until there are no unhappy agents left, or there are no unhappy agents that can become happy by flipping their type. By defining a Lyapunov function to be  the sum over all agents $u$ of the number of  agents of the same type as $u$ present in its neighborhood, it is easy to argue that the process   indeed terminates.

\

\textit{Segregation.}
The \textit{monochromatic region} of an agent $u$ is the neighborhood with largest radius containing agents of a single type and  that  also  contains $u$ when the process stops. 
Let    $\epsilon>0$ and ${\size}=(2 w+1)^2$.
The \textit{almost monochromatic region} of an agent $u$, is the neighborhood with largest radius such that the ratio of the number of agents of one type and the number of agents of the other type is bounded by $e^{-{\size}^{\epsilon}}$ and that also contains $u$ when the process stops.

\

Throughout the paper we use the terminology \textit{with high probability} (w.h.p.) meaning that the probability of an event approaches one as $N$ approaches infinity. 

\subsection{The Results}
To state our results, 
we let  $\tau_1 \approx 0.433$ be the solution of 
\begin{align}
\frac{3}{4} \left[1-H\left(\frac{4}{3} \tau_1 \right)\right]- \left[1-H\left( \tau_1 \right) \right] = 0,
\end{align}
where $H$ is the binary entropy function  
\begin{align}
H(\tau_1) = - \tau_1 \log_2 \tau_1 - (1-\tau_1) \log_2 (1-\tau_1), 
\end{align}
and   $\tau_2 \approx 0.344$ be the solution of
\begin{align}\label{Eq:tau2_eq}
1024\tau_2^2-384\tau_2+11= 0
\end{align}

We also let $M$ and $M'$ be the sizes of the monochromatic and almost monochromatic regions  of an arbitrary agent, respectively.

We consider  values of the intolerance $\tau \in (\tau_2,1-\tau_2) \setminus\{1/2\}$. Most of the work is devoted to the study of the intervals $(\tau_2, \tau_1]$ and $(\tau_1,1/2)$, 
 a symmetry argument  extends the analysis to the intervals $(1/2,1-\tau_1)$ and $[1-\tau_1,1-\tau_2)$. 
The following theorems show that segregation occurs for values of $\tau$ in the grey region of Figure~\ref{fig:tau}, where we expect an exponential monochromatic region, and in the black region of Figure~\ref{fig:tau}, where we expect an exponential almost monochromatic region.
 


\begin{theorem*}\label{Thrm:first_theorem}
For all $\tau \in (\tau_1,1-\tau_1) \setminus \{1/2\}$ and for sufficiently large ${\size}$, we have
\begin{align}
2^{a(\tau){\size}-o({\size})} \le \mathbb{E}[M] \le 2^{b(\tau){\size} +o({\size})},
\end{align}
where $a$ and $b$ are decreasing functions of $\tau$ for $\tau < 1/2$ and increasing for $\tau > 1/2$.
\end{theorem*}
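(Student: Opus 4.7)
The plan is to establish matching exponential lower and upper bounds on $\mathbb{E}[M]$ by analyzing, on the one hand, the density of \emph{trigger} configurations in the initial Bernoulli$(1/2)$ assignment that can ignite a monochromatic cascade, and on the other, the density of opposite-type triggers that arrest such a cascade. The threshold $\tau_1$ arises as the point at which the large-deviation exponent $[1-H(\tau)]$ for a full neighborhood to exhibit a $\tau$-bias coincides with $\frac{3}{4}[1-H(4\tau/3)]$, the corresponding exponent for a three-quarter subneighborhood to hold the same number of surplus same-type agents; below $\tau_1$ the single-step trigger becomes too rare for the expansion argument to close, which is why this theorem is restricted to $\tau > \tau_1$. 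Throughout, by the symmetry $\tau \leftrightarrow 1-\tau$ (swapping the roles of $+1$ and $-1$) it suffices to treat $\tau \in (\tau_1,1/2)$.

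For the lower bound, I would first identify, for each target radius $r$, a class of ``$(+1)$-triggers'' at a site $v$: local initial configurations on $\mathcal{N}_{cr}(v)$ (with a constant $c=c(\tau)$) guaranteeing that an unhappy $(+1)$ agent inside flips, and that its flip renders adjacent $(-1)$ agents unhappy while leaving their neighborhoods with a surplus of $(+1)$ agents, so they too flip. The self-similar subneighborhood structure studied in Section~\ref{Sec:Triggering} permits this cascade to iterate across $\mathcal{N}_r(v)$, yielding a radius-$r$ monochromatic region with positive conditional probability. A Stirling-based count gives trigger probability $2^{-a(\tau)N-o(N)}$ at a fixed site. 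Because triggers at sites separated by more than a trigger radius are only weakly dependent, a second-moment argument on the number of triggers in a box of side $2^{a(\tau)N/2}$ shows that within positive probability the nearest same-type trigger sits at $l_\infty$ distance at least $2^{a(\tau)N/2}/2$, so the cascade produces a monochromatic region of area $\gtrsim 2^{a(\tau)N-o(N)}$.

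For the upper bound I would bound $M$ above by the $l_\infty$ distance from the reference agent to the nearest \emph{opposite-type} trigger: once such a trigger fires, its cascade creates an opposite-type monochromatic region that intersects $\mathcal{N}_M$, forcing $M$ to be smaller. A matching large-deviation count gives expected distance at most $2^{b(\tau)N/2+o(N)}$, hence $\mathbb{E}[M]\le 2^{b(\tau)N+o(N)}$. The monotonicity claims on $a$ and $b$ reduce to monotonicity of $1-H(\cdot)$ away from $1/2$, transferred through the definitions of the trigger events.

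The chief technical obstacles will be: (i) proving that a local trigger deterministically, or with $1-o(1)$ probability, grows into a radius-$r$ monochromatic region under the Glauber dynamics; this requires controlling successive flips and propagating the ``surplus'' bias outward, and is exactly where the exponent $\frac{3}{4}[1-H(4\tau/3)]$ appears, since at each expansion step only a three-quarter subneighborhood is already known to be well-biased, while the remaining quarter must merely be typical; (ii) handling the non-independence of trigger events in overlapping neighborhoods, which I will address with a renormalization scheme combined with the dynamical FKG extension announced in the techniques overview; and (iii) ensuring that the cascade's passage time does not outrun competing opposite-type cascades reaching the same region, which I will control through the Kesten passage-time concentration bounds and the Garet--Marchand chemical-distance estimates cited in the introduction. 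Obstacle~(i) is the most delicate, because it determines the exact exponent and therefore the precise value of $\tau_1$.
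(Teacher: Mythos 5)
Your overall architecture --- trigger configurations counted by large deviations, the origin of $\tau_1$ from comparing the boundary-expansion exponent $\tfrac34[1-H(\tfrac43\tau)]$ with $1-H(\tau)$, the use of FKG, Kesten's passage-time concentration, and the $\tau\leftrightarrow 1-\tau$ symmetry --- matches the paper, and your upper bound is essentially the paper's (monochromaticity of a large box forces the absence of any opposite-type trigger in it, an event of doubly exponentially small probability). But your lower bound has a genuine gap. You propose a second-moment argument showing that with \emph{positive} probability the nearest trigger sits at distance $\approx 2^{a(\tau)N/2}$ from $u$ and that its cascade then covers the intervening area of order $2^{a(\tau)N}$. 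That coverage claim is exactly what these techniques cannot deliver: the inter-trigger spacing far exceeds $p_u^{-1/2}\approx 2^{[1-H(\tau')]N/2}$, the typical spacing of ordinary unhappy agents, so the territory the cascade must cross contains w.h.p.\ many unhappy agents of the opposite type whose flips can destroy the ``region of expansion'' property before the cascade arrives. The paper deliberately avoids this by paying an exponentially small probability instead: it conditions on an expandable radical region lying within the much smaller radius $r=2^{[1-H(\tau')]N/2-o(N)}$ of $u$ \emph{and} on no opposite-flip-prone agent existing within radius $2^{[1-H(\tau')]N/2}$, and the bound on $\mathbb{E}[M]$ is then (exponentially small probability) $\times$ (area $r^2$), not (positive probability) $\times$ (area). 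The paper's concluding remarks explicitly note that the w.h.p.\ version --- which is what your positive-probability coverage amounts to --- is open. Relatedly, $a(\tau)$ is not a per-site trigger exponent: a radical region has per-site probability $2^{-(1+\epsilon')^2[1-H(\tau')]N-o(N)}$, and $a(\tau)=[1-(2\epsilon'+\epsilon'^2)][1-H(\tau')]$ emerges only after multiplying by the search area $r^2$ and the output area $r^2$, with $r$ dictated by the unhappy-agent density rather than the trigger density.

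The second missing idea is permanence. $M$ is measured when the process terminates, so controlling the cascade's growth is not enough; you must also rule out the grown region being eroded afterwards by exterior dynamics. The paper's device is the \emph{firewall}: a monochromatic annulus of width $\sqrt{2}\,w$ and radius at least $w^3$ which, once formed, is provably static and shields its interior from any exterior configuration. Kesten's bound is then needed only to guarantee the firewall closes before any exterior interference reaches it, after which no further timing control is required. Without such an indestructible barrier your obstacles (i) and (iii) do not combine into a statement about the terminal configuration. (A minor point: the Garet--Marchand chemical-distance estimate is needed only for the almost-monochromatic regime of the second theorem, not here.)
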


\begin{theorem*}\label{Thrm:second_theorem}
For all $\tau \in (\tau_2,\tau_1]\cup[1-\tau_1,1-\tau_2)$ and for sufficiently large ${\size}$, we have
\begin{align}
2^{a(\tau){\size}-o({\size})} \le \mathbb{E}[M'] \le 2^{b(\tau){\size} +o({\size})},
\end{align}
where $a$ and $b$ are decreasing functions of $\tau$ for $\tau < 1/2$ and increasing for $\tau > 1/2$.
\end{theorem*}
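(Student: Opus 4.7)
The plan is to parallel the proof strategy of the first theorem (the monochromatic case) but replace exactly monochromatic seed configurations with almost monochromatic ones, in order to inflate the triggering probability enough to overcome the weaker intolerance $\tau \in (\tau_2,\tau_1]$. In the first theorem a small seed neighborhood of a single type ignites a percolation-like expansion whose interior remains pure. When $\tau \le \tau_1$, a purely monochromatic seed of the required size is too rare to appear within an exponentially large expected region, but an almost monochromatic seed (one whose minority fraction vanishes like $e^{-\size^{\epsilon}}$) is exponentially more likely while still unstable enough to propagate. By symmetry around $\tau=1/2$ it suffices to treat $\tau \in (\tau_2,\tau_1]$.

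For the lower bound, I would first identify the family of admissible seeds and the admissible minority fractions. A seed is a neighborhood of radius $O(w)$ in which the minority fraction is small enough that every boundary agent is unhappy (so the cascade ignites) yet large enough that the entropy cost of the seed, of order $\size \cdot (1-H(\cdot))$, stays below the target rate $b(\tau)\size$. The specific threshold $\tau_2 = 11/32$, the larger root of $1024\tau_2^2-384\tau_2+11=0$, is consistent with an optimization carried out on sub-neighborhoods obtained by partitioning $\mathcal{N}(u)$ into quadrants of side $w/2$, balancing the entropy cost against the propagation condition in each quadrant. Once the existence of a seed in a box of appropriate size is established by a coarse-graining argument over the torus, I would run the same cascading argument as in the first theorem, now tracking the minority in the expanding region via a renormalized site-percolation model in which each macroscopic block is declared \textit{good} if its initial random imbalance combined with the propagated flips leaves at most $e^{-\size^{\epsilon}}$ fraction of minority agents. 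Percolation of the good field, coupled to supercritical Bernoulli site percolation through the dynamical extension of the FKG inequality announced in the introduction, then yields an almost monochromatic region whose complementary cluster decays exponentially in the dual subcritical phase, giving the claimed lower bound on $\mathbb{E}[M']$.

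For the upper bound, I would dominate the probability that the almost monochromatic region of a fixed agent has a given radius $R$ and sum over $R$. The initial Bernoulli$(1/2)$ law implies that any specific region of $s$ sites contains approximately $s/2$ agents of each type, and the entropy cost of the deviation required by almost monochromaticity is exponential in $s$ at a rate matching $b(\tau)$. Because the dynamics only flip strictly unhappy agents, a coupling/monotonicity argument of the type used for the first theorem shows that the minority count in any large region cannot be driven below the initial-configuration bound except by an exponentially rare event; a union bound over all neighborhoods of radius $R$ containing the fixed agent then closes the estimate. The main obstacle will be the propagation step of the lower bound: unlike the setting of the first theorem, where flipped blocks are purely monochromatic and boundary agents deterministically see $\tau\size$ like-type neighbors, here each newly cascaded block still contains an $e^{-\size^{\epsilon}}$ fraction of minority impurities, and one must verify that these impurities do not cause enough boundary agents to remain happy as the minority type and thereby stall the cascade. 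Making this precise while controlling the dependencies introduced by the dynamics, through the dynamical FKG extension together with the concentration bounds on percolation passage times of Kesten and on the chemical distance of Garet--Marchand, is the most delicate ingredient of the argument.
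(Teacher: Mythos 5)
There is a genuine gap, and it starts with a misdiagnosis of \emph{why} the argument of the first theorem breaks down on $(\tau_2,\tau_1]$. In the paper the seed is unchanged between the two theorems: it is the same radical region, occurring with the same probability $2^{-[1-H(\tau')](2\epsilon'+\epsilon'^2)N-o(N)}$, which is why $a(\tau)$ has the identical formula in both statements. What fails for $\tau\le\tau_1$ is not the rarity of a monochromatic seed but the \emph{propagation}: the defining equation of $\tau_1$, namely $\tfrac34[1-H(\tfrac43\tau_1)]=1-H(\tau_1)$, is exactly the point at which the probability that a $(-1)$ agent adjacent to a monochromatic $w$-block is unhappy, about $1-2^{-[1-H(4\tau/3)]\frac34 N}$, is no longer strong enough to survive the FKG product over the $2^{[1-H(\tau)]N+o(N)}$ placements needed for a region of expansion. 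Your proposed fix, an ``almost monochromatic seed,'' does not address this; an impure seed only makes boundary agents less likely to be unhappy. The paper keeps the monochromatic seed and instead reroutes the propagation through a supercritical site percolation of \emph{good} $6w^3$-blocks, where goodness is a property of the \emph{initial} configuration (every intersection with a $w$-block is balanced to within $N^{1/2+\epsilon}$), so the block field is i.i.d.\ and the spread inside a good block is essentially deterministic; the geometric spread condition there (trapezoids and rectangles around a $3w/2$-block) is where $1024\tau_2^2-384\tau_2+11=0$ actually comes from, not an entropy optimization. Your definition of a good block in terms of the \emph{post-flip} minority fraction is circular: one cannot apply Bernoulli percolation and Garet--Marchand to a field defined by the outcome of the dynamics being controlled. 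Moreover the $e^{-N^{\epsilon}}$ impurity is not carried inside each cascaded block; it is the ratio of bad blocks to good blocks inside the chemical firewall, and the cascade routes \emph{around} bad clusters (of radius $O(N^2)$ blocks by subcritical exponential decay) rather than pushing through them, so your ``main obstacle'' dissolves under the correct construction.

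The upper bound as you describe it also fails. You assert that monotonicity prevents the minority count in a large region from being driven below its initial value except on an exponentially rare event; but the lower bound shows precisely that the dynamics do drive the minority in exponentially large regions to nearly zero — with high probability conditionally on the seed, and unconditionally with probability at least $2^{-a(\tau)N+o(N)}$ — so no such monotonicity holds. The paper's upper bound instead uses the lower bound for the \emph{opposite} type: a neighborhood of radius $2^{(1+\epsilon')^2[1-H(\tau')]N/2+\delta N/2}$ contains so many disjoint opportunities for a $(-1)$ firewall to form that at least one forms except with doubly exponentially small probability, which caps the radius of any $(+1)$ almost monochromatic region; summing $m^2p_m$ against this tail yields $b(\tau)$.
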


 \begin{figure}
\centering
\includegraphics[width=3.5in]{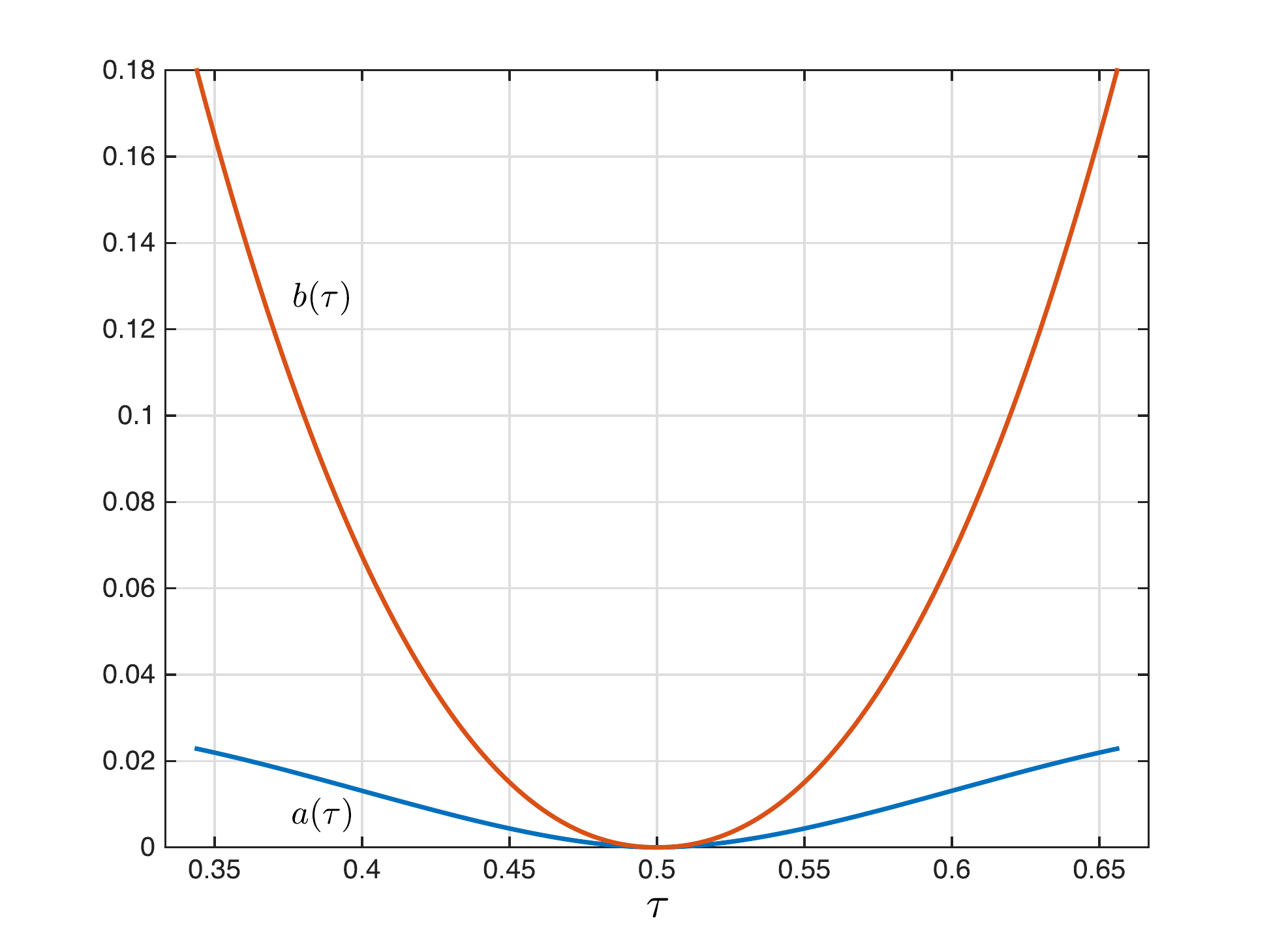}
\caption{Exponent multipliers $a(\tau)$ and $b(\tau)$ for the lower bound and upper bounds on the expected size of the largest segregated region $\mathbb{E}[M]$, and the expected size of the largest almost segregated region $\mathbb{E}[M']$. } 
\label{fig:aaprime}
\end{figure}

The numerical values for $a(\tau)$ and $b(\tau)$ derived in  the proofs of the above theorems  are plotted in Figure \ref{fig:aaprime}. For $\tau \in (\tau_1,1-\tau_1)\setminus \{1/2\}$,  as the intolerance gets farther from one half in both directions,   larger monochromatic regions are expected.

\subsection{Proof Outline} \label{Subsec:proof_const}

The main idea of the proof  is to identify a local initial configuration that can  potentially  trigger a cascading process   leading to segregation. 
We then  bound the probability of occurrence of such a configuration in the initial state, and of  the conditions  to trigger segregation.


 To identify this  local configuration, we study the relationship between  the typical neighborhood of an unhappy agent and the sub-neighborhoods contained within this neighborhood, showing a self-similar structure. Namely, the fraction of agents of the same type, when scaled by the size of the neighborhood, remains roughly the same (Proposition~\ref{Prop:firstprop}). 
We then define a \emph{radical   region}     that contains a nucleus of unhappy agents (Lemma~\ref{Lemma:unhappy_region}), and  using the self-similar structure  of the neighborhoods  we construct a geometric configuration where a sequence of flips can lead to  the formation of a neighborhood of  agents of the same type inside a radical region (Lemma~\ref{Lemma:Trigger}). Finally,
we provide a lower bound for the probability of occurrence of this configuration     in the initial state of the system (Lemma~\ref{Lemma:expandable}), which   
can initiate the segregation process.

The second part of the proof is concerned with the process dynamics, and shows a cascading effect ignited by the radical regions that leads to the formation of exponentially large segregated areas. We  consider an indestructible and impenetrable structure   around a radical region called a \textit{firewall} and show that  once formed it remains static and protects  the radical region  inside it from vanishing (Lemma~\ref{Lemma:firewall}).  Conditioned on certain events occurring in the area surrounding the radical region, including the formation of the initial configuration described in the first part of the proof, we show that an agent close to the radical region will be trapped w.h.p.\ inside an exponentially large firewall whose interior becomes monochromatic (Lemma~\ref{Lemma:fw}), see Figure~\ref{fig:aaprime1}(a).  
 We then obtain a lower bound on the  joint probability of the conditioning events and this leads to a lower bound on the probability that an agent is eventually contained in a monochromatic region of exponential size.   
 Since the lower bound holds for both type of agents, we expect to have both types of exponential monochromatic regions in a large area by the end of the process.  This leads to an exponential upper bound on the expected size of the largest  monochromatic region of each type. To perform our computations, we rely on a bound  on   the passage time on the square lattice~\cite{kesten1993speed}  to upper bound the rate of spread of other monochromatic regions  outside the firewall,  and ensure that they do not interfere with its formation during the dynamics of the process.
 
 \begin{figure}[!t] 
\centering
\includegraphics[width=\textwidth]{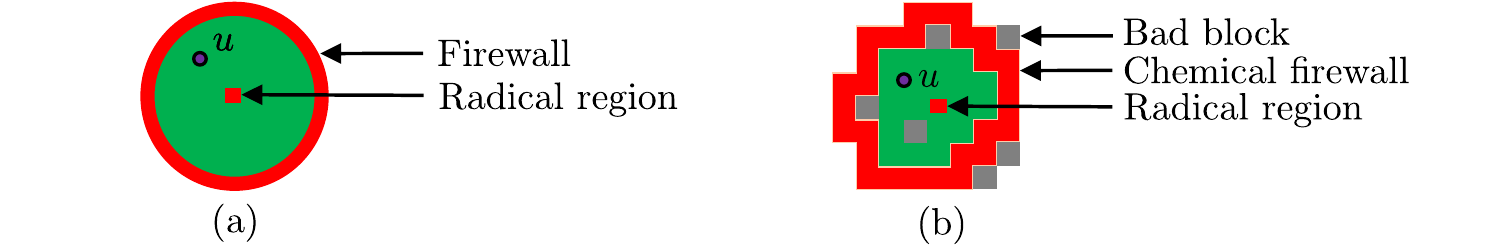}
\caption{An arbitrary agent $u$ that is close to a radical region will be trapped inside a firewall  of exponential size whose interior will eventually become   monochromatic (a), or almost monochromatic (b).} 
\label{fig:aaprime1}
\end{figure}

The construction described above works for all $\tau_1< \tau<1/2$. For smaller values of $\tau$, agents are more tolerant and this may cause the construction of a firewall to fail, since tolerant agents do not easily become unhappy and flip their types igniting the cascading process.  In order to overcome this difficulty, we introduce a  \emph{chemical  firewall}  through a comparison with a Bernoulli  site percolation model, see Figure~\ref{fig:aaprime1}(b). This firewall is constructed through renormalization and is initially  made  of  \emph{good}  blocks that occur independently and with probability above the critical threshold for site percolation on the square grid. Using a theorem  in~\cite{garet2007large} on the chemical distance between good blocks, we show that they form a large cycle that, once it becomes monochromatic,  isolates its interior. Finally, using  the exponential decay of  the size of the   clusters of bad blocks~\cite{grimmett1999percolation}, we show that
the region inside the chemical firewall  becomes \emph{almost monochromatic}, namely   for  all $\tau_2 < \tau \le \tau_1$, we expect the formation of exponentially large regions where the ratio  of number of agents of one type and the number of agents of the other type quickly vanishes.  

All results are extended to the interval  $1/2 <\tau < 1-\tau_2$ using a symmetry argument.

Compared to the proof in~\cite{immorlica2015exponential}, our derivation differs in the following aspects. The definition of radical region  is  fundamentally different from the viral nodes considered in~\cite{immorlica2015exponential}, and the identification of the radical regions gives us an   immediate understanding of the arrangement of the agents in the initial configuration in terms of  self-similarity arising at different scales.
Our definition of an annular firewall that forms quickly enough eliminates the need for additional arguments from first passage percolation that are used in~\cite{immorlica2015exponential}, it allows for a wider range of intolerance parameters, and it is easily generalized to the notion of chemical firewall using the results from \cite{garet2007large}.
The  renormalization of the grid for the study of the growth of the monochromatic regions is  also different from~\cite{immorlica2015exponential}  and works for a wider range of the intolerance. 
The  idea of considering almost monochromatic regions is   new, and so are the approaches that we use from percolation theory to argue the existence of the chemical firewall and the size of the minority clusters. Finally, we  rigorously apply a variation of the  FKG inequality to show  positive correlation of certain events,  while in~\cite{immorlica2015exponential} it is often informally argued that similar correlations exist in their setting.  

\section{Triggering configuration} \label{Sec:Triggering}
We start our analysis considering the initial configuration of the system.  
Proposition~\ref{Prop:firstprop} shows a similarity relationship between the neighborhood of an agent and its  sub-neighborhoods. This relationship is exploited in Lemma~\ref{Lemma:Trigger} to construct  an initial configuration of agents that can trigger the segregation process.  Lemma~\ref{Lemma:expandable} provides a   bound on the probability of occurrence of this triggering configuration.


Let $\mathcal{N}(u)$ be the neighborhood of an arbitrary agent $u$ containing ${\size}$ agents. Consider a sub-neighborhood $\mathcal{N}'(u) \subset \mathcal{N}(u)$ containing ${\size}'$ agents and let $\gamma$ be the scaling factor ${{\size}'}/{{\size}}$.
Let $W$ and $W'$ be the random variables representing the number of (-1) agents in $\mathcal{N}(u)$ and $\mathcal{N}'(u)$ respectively. The following proposition  shows that,  conditioned on $W$ being less than $\tau {\size}$,  $W'$ is very close to the rescaled quantity $ \gamma \tau {\size}$, with overwhelming probability as ${\size} \rightarrow \infty$.

\begin{proposition*}\label{Prop:firstprop}
For any $\epsilon \in (0,1/2)$ and $c \in \mathbb{R}^+$ there exists  $c' \in \mathbb{R}^{+}$  such that for all $\size \ge 1$
\begin{align*}
P\left(|W' -  \gamma \tau {\size}| <  c{\size}^{1/2+\epsilon} \given[\Big] W < \tau {\size}\right) \ge 1 - e^{-c'{\size}^{2\epsilon}}.
\end{align*}
\end{proposition*}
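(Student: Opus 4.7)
The plan is to decompose the error as $W' - \gamma\tau\size = (W' - \gamma W) + \gamma(W - \tau\size)$, bound each piece conditional on $W < \tau\size$, and combine by the triangle inequality and a union bound. I will restrict to $\tau < 1/2$ throughout; the range $\tau > 1/2$ follows by the agent-type swap symmetry used elsewhere in the paper. Writing $W = W' + W''$ with $W''$ the number of $(-1)$ agents in the annulus $\mathcal{N}(u)\setminus\mathcal{N}'(u)$, the independence of the agent types makes $W'$ and $W''$ independent binomials, and identifies the conditional law of $W'$ given $W=k$ as hypergeometric with $\size$ items, $\size'$ sampled and $k$ marked, so that the conditional mean equals $\gamma k$.

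For the piece $|W' - \gamma W|$, I will apply Hoeffding's inequality for sampling without replacement, which gives $P(|W' - \gamma k| \ge t \mid W=k) \le 2\exp(-2t^2/\size')$ uniformly in $k$. Setting $t = (c/2)\size^{1/2+\epsilon}$ produces a bound of order $\exp(-\Omega(\size^{2\epsilon}))$, and this bound is preserved after averaging $k$ against the conditional distribution of $W$ given $W<\tau\size$.

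For the piece $\gamma|W - \tau\size|$, I will exploit the likelihood ratio identity $P(W=k-1)/P(W=k) = k/(\size - k + 1)$, which is bounded above by $\tau/(1-\tau) < 1$ for every $k \le \lceil\tau\size\rceil$ since $\tau < 1/2$. This forces the conditional distribution of $\tau\size - W$ given $W < \tau\size$ to be stochastically dominated by a geometric random variable of positive rate $\lambda = \ln((1-\tau)/\tau)$, and therefore $P(\gamma|W - \tau\size| \ge s \mid W < \tau\size) \le C\exp(-\Omega(s))$. Choosing $s = (c/2)\size^{1/2+\epsilon}$ gives a bound of order $\exp(-\Omega(\size^{1/2+\epsilon}))$, far smaller than the target.

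A union bound then delivers the advertised $\exp(-c'\size^{2\epsilon})$ inequality, with the hypergeometric piece being the dominant contribution. The main technical point is just careful bookkeeping of constants so that $c'$ is strictly positive and uniform in $\size$; since the geometric piece decays much faster than $\exp(-\size^{2\epsilon})$, this reduces to ensuring that the Hoeffding exponent of order $c^2/\gamma$ is bounded below, which is routine in the regime of admissible $\gamma$ and $\tau$ considered in the paper.
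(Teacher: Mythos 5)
Your proof is correct, and it takes a genuinely different route from the paper's. The paper splits the claim into two one-sided bounds (Lemmas~\ref{Lemma:lemma_1} and~\ref{Lemma:lemma_2}): for the upper tail it argues that replacing the conditioning event $\{W<\tau N\}$ by $\{W=\tau N\}$ can only increase $P(W'\ge \gamma\tau N+t)$ and then applies an Azuma bound for sampling without replacement (Lemma~\ref{Lemma:lemma_0}); for the lower tail it performs an explicit comparison of $\sum_k\binom{N'}{k}\sum_m\binom{N''}{m}$ against $\sum_n\binom{N}{n}$, which is the most delicate and least transparent part of the argument. Your decomposition $W'-\gamma\tau N=(W'-\gamma W)+\gamma(W-\tau N)$ replaces both: the first term is the same hypergeometric fluctuation the paper controls via Azuma, except you invoke Hoeffding's sampling-without-replacement inequality directly and uniformly in $k$, so the conditioning washes out; the second term is the genuinely new ingredient --- the paper never isolates the fluctuation of $W$ itself, whereas your likelihood-ratio bound $P(W=k-1)/P(W=k)=k/(N-k+1)\le\tau/(1-\tau)<1$ shows that $\tau N-W$ given $W<\tau N$ has geometric tails, handling both deviation directions at once and avoiding the binomial-coefficient bookkeeping of Lemma~\ref{Lemma:lemma_2} entirely. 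Two points are worth making explicit in a full write-up: first, the restriction $\tau<1/2$ is genuinely needed for your geometric-domination step (the paper uses it too, tacitly, in Lemma~\ref{Lemma:lemma_2}; the statement would fail for $\tau>1/2$, where the conditional law of $W$ concentrates near $N/2$ rather than $\tau N$); second, the Hoeffding exponent $c^2N^{2\epsilon}/(2\gamma)$ is automatically at least $c^2N^{2\epsilon}/2$ because $\gamma\le 1$, so no condition on $\gamma$ is actually required there --- the ``regime of admissible $\gamma$'' caveat at the end of your sketch is unnecessary.
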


To prove this proposition, where the two constants $\epsilon$ and $c$  are introduced for technical convenience in its later applications, we need the following three lemmas.

\begin{lemma*}\label{Lemma:lemma_0}
Let $\mathcal{N}$ be a set of $(+1)$ and $(-1)$ arbitrary agents in the grid such that it has exactly $K$ agents of type $(-1)$ and ${\size}-K$ agents of type $(+1)$. Then, if we choose a set  $\mathcal{N}'$ of size ${\size}'$ of agents uniformly at random from $\mathcal{N}$, we have
\begin{align} \label{eq:Azuma1}
P(W' \ge \gamma K + t) \le e^{\frac{-t^2}{2{\size}'}} ,
\end{align}
and
\begin{align}\label{eq:Azuma2}
P(W' \le \gamma K - t) \le e^{\frac{-t^2}{2{\size}'}},
\end{align} 
where $W'$ is the random variable indicating the number of $(-1)$ agents in $\mathcal{N}'$, and $\gamma = {{\size}'}/{{\size}}$.
\end{lemma*}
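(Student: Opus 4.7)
My plan is to prove the concentration inequality using a Doob martingale combined with the Azuma--Hoeffding inequality. The key observation is that although sampling without replacement produces dependent indicators, the dependence is mild enough that a bounded-differences martingale argument goes through with unit increments.

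First I would realize the uniformly random subset $\mathcal{N}'$ of size $\size'$ as the output of a sequential sampling procedure: draw the $\size'$ elements of $\mathcal{N}'$ one at a time, uniformly from the remaining agents in $\mathcal{N}$, and let $X_i \in \{0,1\}$ indicate whether the $i$-th drawn element is of type $(-1)$. Then $W' = \sum_{i=1}^{\size'} X_i$, and by the tower property $\mathbb{E}[W'] = \size' \cdot K/\size = \gamma K$, which is exactly the centering constant appearing in \eqref{eq:Azuma1} and \eqref{eq:Azuma2}.

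Next I would define the Doob martingale $M_i = \mathbb{E}[W' \mid X_1, \ldots, X_i]$ for $i=0,1,\dots,\size'$, so that $M_0 = \gamma K$ and $M_{\size'} = W'$. Writing $S_j = \sum_{k \le j} X_k$, a direct computation using the fact that after $i$ draws the remaining population contains $K - S_i$ type-$(-1)$ agents out of $\size - i$ gives
\begin{align*}
M_i = S_i + (\size' - i)\,\frac{K - S_i}{\size - i}.
\end{align*}
The crucial step is to verify the bounded-differences property $|M_i - M_{i-1}| \le 1$. Since conditional on $X_1,\dots,X_{i-1}$ the martingale increment $M_i - M_{i-1}$ takes only two possible values (corresponding to $X_i = 0$ or $X_i = 1$), and the martingale property forces both values to have the same magnitude bound, one gets $|M_i - M_{i-1}| \le |M_i(1) - M_i(0)|$. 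Plugging the closed-form into this difference produces
\begin{align*}
|M_i(1) - M_i(0)| = \left| 1 - \frac{\size' - i}{\size - i} \right| \le 1,
\end{align*}
where the last inequality uses $\size' \le \size$. Applying the one-sided Azuma--Hoeffding inequality to $M_{\size'} - M_0 = W' - \gamma K$ in each direction with increment bound $1$ and $\size'$ steps yields both \eqref{eq:Azuma1} and \eqref{eq:Azuma2}.

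The only real obstacle is verifying the bounded-differences property with constant $1$; once that is secured, Azuma--Hoeffding is plug-and-play. Everything else is bookkeeping involving the explicit form of the Doob martingale for hypergeometric sampling. (An alternative would be to invoke Hoeffding's classical result that a sum drawn without replacement from a finite population is at least as concentrated as the corresponding sum drawn with replacement, which would give an even sharper constant, but the martingale route is self-contained and suffices for the stated bound.)
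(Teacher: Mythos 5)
Your proposal is correct and follows essentially the same route as the paper: both realize $W'$ via sequential sampling without replacement, form the Doob martingale $M_i = \mathbb{E}[W'\mid X_1,\dots,X_i]$ with the explicit hypergeometric closed form $S_i + (\size'-i)(K-S_i)/(\size-i)$, verify the unit bounded-differences property, and conclude by Azuma--Hoeffding. Your justification of $|M_i-M_{i-1}|\le 1$ via the two-point increment $1-(\size'-i)/(\size-i)$ is in fact slightly more explicit than the paper's, which simply asserts the bound after writing out the increment.
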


\begin{proof}
Let $W'_i$ be a random variable indicating the type of the $i$'th agent in $\mathcal{N}'$, namely $W'_i$ is one if the type is (-1) and zero otherwise. Let $\mathcal{F}_i = \sigma(W'_1,...,W'_i)$, where $\sigma(X)$ denotes the sigma field generated by random variable $X$. It is easy to see that for all  $n \in \{1,...,{\size}' \}$,  $M_n = \mathbb{E}[W'|\mathcal{F}_n]$ is a martingale. It is also easy to see that $M_0 = \mathbb{E}[W'] = \gamma {\size}_\tau$, and $M_{{\size}'} = W'$. For  all $n \in\{1,2,...,{\size}'\}$,  we also have 
\begin{align*} 
|M_n - M_{n-1}|  &= \left|\mathbb{E}\left(\sum_{i=1}^{{\size}'}W'_i\given[\Big]\mathcal{F}_n\right) - \mathbb{E}\left(\sum_{i=1}^{{\size}'}W'_i\given[\Big]\mathcal{F}_{n-1}\right)\right|  \\
&=  \left|W'_n + \frac{K-\sum_{i=1}^{n}W'_i}{{\size}-n}({\size}'-n) 
 - \frac{K-\sum_{i=1}^{n-1}W'_i}{{\size}-(n-1)}[{\size}'-(n-1)]\right| \\
&\le 1.
\end{align*}
Now, using Azuma's inequality \cite{janson2011random}, we have
\begin{align*}
P\left(W'_i \ge \gamma K + t\right) = P\left(M_{{\size}'} \ge M_0 + t)\right) \le e^{\frac{-t^2}{2{\size}'}}.
\end{align*} 
With the same argument we can derive (\ref{eq:Azuma2}).   
\end{proof}

\begin{lemma*}\label{Lemma:lemma_1}
Let $\epsilon \in (0,1/2)$ and $c \in \mathbb{R}^+$. There exists $c' \in \mathbb{R}^{+}$ such that for all $\size \ge 1$
\begin{align*}
{P\left(W' < \gamma \tau {\size}  + c{\size}^{1/2+\epsilon} \given[\Big] W < \tau {\size}\right) \ge 1 - e^{-c'{\size}^{2\epsilon}}}.
 \end{align*}
\end{lemma*}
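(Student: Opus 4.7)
The plan is to reduce this one-sided tail bound to Lemma~\ref{Lemma:lemma_0} by conditioning on the total number of $(-1)$ agents in $\mathcal{N}$. First I would observe that since the types of the $\size$ agents in $\mathcal{N}$ are i.i.d.\ Bernoulli($1/2$), conditional on the event $\{W=K\}$ the resulting configuration is uniformly distributed over all arrangements of $\mathcal{N}$ with exactly $K$ agents of type $(-1)$. Under this conditional law, the number $W'$ of $(-1)$ agents in the fixed sub-neighborhood $\mathcal{N}'$ has exactly the same distribution as if we had fixed the $K$ agents of type $(-1)$ in $\mathcal{N}$ and then drawn a uniformly random subset of size $\size'$; this is the hypergeometric setup of Lemma~\ref{Lemma:lemma_0}.

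Next I would apply Lemma~\ref{Lemma:lemma_0} with the upper-tail inequality \eqref{eq:Azuma1} and the choice $t = c\size^{1/2+\epsilon}$, giving
\begin{align*}
P\!\left(W' \ge \gamma K + c\size^{1/2+\epsilon} \given[\Big] W=K\right) \le \exp\!\left(-\frac{c^{2}\size^{1+2\epsilon}}{2\size'}\right) \le \exp\!\left(-\tfrac{c^{2}}{2}\size^{2\epsilon}\right),
\end{align*}
where in the last step I used $\size' \le \size$. Because we are conditioning on $W<\tau\size$, we have $K<\tau\size$ and therefore $\gamma K < \gamma \tau \size$, so the event $\{W' \ge \gamma\tau\size + c\size^{1/2+\epsilon}\}$ is contained in $\{W' \ge \gamma K + c\size^{1/2+\epsilon}\}$ on $\{W=K\}$.

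Finally I would integrate over the possible values of $K$ in the conditioning: writing
\begin{align*}
P\!\left(W' \ge \gamma\tau\size + c\size^{1/2+\epsilon} \given[\Big] W<\tau\size\right) = \sum_{K<\tau\size} P\!\left(W' \ge \gamma\tau\size + c\size^{1/2+\epsilon} \given[\Big] W=K\right) P(W=K \mid W<\tau\size),
\end{align*}
the bound above gives a uniform estimate of $e^{-(c^{2}/2)\size^{2\epsilon}}$ on each conditional probability, and summing the weights yields the same bound for the whole sum. Setting $c' = c^{2}/2$ and taking complements delivers the stated lower bound $1-e^{-c'\size^{2\epsilon}}$.

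The main obstacle, and the only subtle point, is the first step: justifying that the hypergeometric-sampling picture of Lemma~\ref{Lemma:lemma_0} really does describe the conditional law of $W'$ given $W=K$. Everything after that is a routine application of \eqref{eq:Azuma1} combined with the observation that $\gamma K \le \gamma\tau\size$ on the conditioning event, so the exchange between $K$ and $\tau\size$ only strengthens the inequality and preserves the exponent.
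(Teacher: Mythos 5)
Your proof is correct, and it takes a cleaner route than the paper's at the one step where care is needed. The paper bounds $P(W' \ge \gamma\tau\size + v(\size) \given W < \tau\size)$ by passing to the worst-case conditioning $P(\cdot \given W = \tau\size)$ via two monotonicity claims ("the first inequality is trivial," "it is easy to see that this probability can only increase if $W=\tau\size$"); these are stochastic-domination statements about the hypergeometric family in $K$ that are plausible but not actually proved. You avoid them entirely: by decomposing over the exact values $\{W=K\}$ with $K<\tau\size$, identifying the conditional law of $W'$ as the hypergeometric distribution of Lemma~\ref{Lemma:lemma_0} (the exchangeability point you flag — fixed sub-neighborhood with random types versus fixed types with random sub-neighborhood — is indeed the only thing to check, and it holds because both give $W'\sim\mathrm{Hypergeometric}(\size,K,\size')$), and then using the elementary event containment $\{W'\ge\gamma\tau\size+t\}\subseteq\{W'\ge\gamma K+t\}$ valid for every $K<\tau\size$, you get a uniform bound on each conditional term and average. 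Both proofs ultimately rest on Lemma~\ref{Lemma:lemma_0} with $t=c\size^{1/2+\epsilon}$ and the estimate $\size'\le\size$, so the quantitative conclusion is identical; what your version buys is that every step is an equality or a set inclusion, with no appeal to an unproven monotonicity of conditional probabilities in the conditioning threshold.
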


\begin{proof}
Let us denote $c{\size}^{1/2+\epsilon}$ by $v({\size})$. We let
\begin{align*}
p_w &= P\left(W' \ge\gamma \tau {\size} + v({\size}) \given[\Big] W < \tau {\size}\right) \\
&\le P\left(W' \ge \gamma \tau {\size} + v({\size}) \given[\Big] W \le \tau {\size}\right) \\
&\le P\left(W' \ge \gamma \tau {\size} + v({\size})\given[\Big] W = \tau {\size}\right) 
\end{align*}
The first inequality is trivial. The second inequality follows from   
\begin{align*}
P\left(W' \ge \gamma \tau {\size} + v({\size}) \given[\Big]W \le \tau {\size}\right) 
\end{align*} being the probability of choosing $W' \ge \gamma \tau {\size} + v({\size})$ agents from a set with $W \le \tau {\size}$. It is easy to see that this probability can only increase if we have $W = \tau {\size}$.
The result follows by applying Lemma~\ref{Lemma:lemma_0}. 
\end{proof}

Let $\mathcal{N}''(u) = \mathcal{N}(u) \setminus \mathcal{N}'(u)$. Let us denote the number of agents in $\mathcal{N}''(u)$ by ${\size}''$. Let $W''$ denote the random variable representing the number of (-1) agents in $\mathcal{N}''(u)$.
\begin{lemma*}\label{Lemma:lemma_2}
Let $\epsilon \in (0,1/2)$ and $c \in \mathbb{R}^+$. There exist  $c' \in \mathbb{R}^{+}$ such that for all $\size \ge 1$ 
\begin{align*}
{P\left(W' > \gamma \tau {\size}  - c{\size}^{1/2+\epsilon} \given[\Big] W < \tau {\size}\right) \ge 1 -  e^{-c'{\size}^{2\epsilon}}}. 
\end{align*}
\end{lemma*}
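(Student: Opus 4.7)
The plan is to combine Lemma~\ref{Lemma:lemma_1}, applied to the complementary sub-neighborhood $\mathcal{N}''(u)$, with a direct concentration estimate for $W$ itself. The obstacle, unlike in the proof of Lemma~\ref{Lemma:lemma_1}, is that one cannot simply replace the conditioning $W < \tau{\size}$ by the boundary value $W = \tau{\size}$, since the lower-tail event $\{W' \le \gamma\tau{\size} - c{\size}^{1/2+\epsilon}\}$ becomes \emph{more} likely as $W$ decreases rather than less.

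First, I would invoke Lemma~\ref{Lemma:lemma_1} with the sub-neighborhood $\mathcal{N}''(u)$ in place of $\mathcal{N}'(u)$. Since the scaling factor of $\mathcal{N}''(u)$ inside $\mathcal{N}(u)$ equals $\gamma'' := {\size}''/{\size} = 1 - \gamma$, the lemma yields
\begin{align*}
P\left(W'' < \gamma''\tau{\size} + c{\size}^{1/2+\epsilon} \given[\Big] W < \tau{\size}\right) \ge 1 - e^{-c_1{\size}^{2\epsilon}}.
\end{align*}
Second, I would prove the complementary one-sided concentration statement
\begin{align*}
P\left(W > \tau{\size} - c{\size}^{1/2+\epsilon} \given[\Big] W < \tau{\size}\right) \ge 1 - e^{-c_2{\size}^{2\epsilon}}.
\end{align*}
Because $W \sim \mathrm{Bin}({\size}, 1/2)$ and $\tau < 1/2$ in the relevant regime, the consecutive ratio $P(W = k-1)/P(W = k) = k/({\size} - k + 1)$ is bounded above by $q := \tau/(1-\tau) < 1$ for every $k \le \tau{\size}$. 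Summing the resulting geometric tail above and retaining only the single term $P(W = \tau{\size} - 1)$ below gives
\begin{align*}
\frac{P(W \le \tau{\size} - s)}{P(W < \tau{\size})} \le \frac{q^{s-1}}{1 - q},
\end{align*}
so the choice $s = c{\size}^{1/2+\epsilon}$ makes this ratio of order $e^{-\Theta({\size}^{1/2+\epsilon})}$, much smaller than $e^{-c_2{\size}^{2\epsilon}}$ for any fixed $c_2 > 0$.

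Combining the two displays via a union bound shows that their intersection has conditional probability at least $1 - e^{-c_3{\size}^{2\epsilon}}$, and on this intersection the identity $W' = W - W''$ forces
\begin{align*}
W' > (\tau{\size} - c{\size}^{1/2+\epsilon}) - (\gamma''\tau{\size} + c{\size}^{1/2+\epsilon}) = \gamma\tau{\size} - 2c{\size}^{1/2+\epsilon}.
\end{align*}
Since $c$ is an arbitrary positive constant, applying the previous steps with $c/2$ in place of $c$ recovers the stated bound with an appropriate $c'$. The main obstacle is the ratio-of-binomials computation, which exploits the geometric decay of the binomial pmf below its mode (relying on $\tau < 1/2$, with the case $\tau > 1/2$ handled by the symmetry between the two agent types) to produce a concentration bound much stronger than the target exponent $e^{-c'{\size}^{2\epsilon}}$.
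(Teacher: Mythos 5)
Your proof is correct, but it takes a genuinely different route from the paper's. The paper attacks $P(W' \le \gamma\tau{\size} - c{\size}^{1/2+\epsilon} \given W < \tau{\size})$ head-on as a ratio of binomial sums: it lower-bounds the denominator by the single term $\binom{{\size}}{{\size}_\tau}$, upper-bounds the double sum in the numerator by (a polynomial factor times) its dominant product term, and thereby reduces the conditioning on $\{W < \tau{\size}\}$ to conditioning on the boundary value $W = {\size}_\tau$, where the hypergeometric Azuma bound of Lemma~\ref{Lemma:lemma_0} applies. You instead factor the problem into two cleaner pieces: (i) the upper-tail statement of Lemma~\ref{Lemma:lemma_1} applied to the complement $\mathcal{N}''$ with scaling $1-\gamma$ (legitimate, since that lemma's proof uses nothing about $\mathcal{N}'$ beyond its being a fixed sub-region), and (ii) an explicit proof that the conditional law of $W$ given $\{W<\tau{\size}\}$ concentrates at the boundary, via the geometric decay $P(W=k-1)/P(W=k)\le \tau/(1-\tau)<1$ of the $\mathrm{Bin}({\size},1/2)$ pmf below its mode; the identity $W'=W-W''$ then closes the argument, with the factor $2c$ absorbed by rescaling the arbitrary constant $c$. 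Both proofs ultimately rest on the same fact — for $\tau<1/2$ the event $\{W<\tau{\size}\}$ is dominated by $W\approx\tau{\size}$ — but your version isolates it as a transparent one-dimensional estimate and reuses Lemma~\ref{Lemma:lemma_1} as a black box, which is more modular and avoids the paper's somewhat loosely justified manipulation of the double sum (the unexplained constants $c^k$ and the ratio-of-consecutive-terms step). The only caveats are cosmetic and shared with the paper: integrality of $\tau{\size}$ and the claim "for all ${\size}\ge 1$" require the same mild adjustment of $c'$ in either argument.
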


\begin{proof}
Let us denote $c{\size}^{1/2+\epsilon}$ by $v({\size})$, and $ \tau {\size}  - 1$ by ${\size}_\tau$. Let
\begin{align}
p_w &= P\left(W' \le \tau\gamma {\size} - v({\size}) | W < \tau {\size}\right) \nonumber \\
&= P\left(W' \le \tau {\size}' - v({\size}) | W' + W'' < \tau {\size}\right) \nonumber \\
&\le \ddfrac{P\left(W' \le \tau {\size}' - v({\size}), W'+W'' \le {\size}_\tau \right)}{P(W \le {\size}_\tau)} \nonumber \\
&\le \ddfrac{\sum\limits_{k=0}^{\lfloor \tau {\size}' - v({\size}) \rfloor}P(W' = k)\sum\limits_{m=0}^{\min\{{\size}_\tau - k,{\size}''\}}P(W''=m)}{P(W \le {\size}_\tau)} \nonumber \\
&= \ddfrac{\sum\limits_{k=0}^{\lfloor \tau {\size}' - v({\size}) \rfloor}{{\size}' \choose k}\sum\limits_{m=0}^{\min\{{\size}_\tau - k,{\size}''\}}{{\size}'' \choose m}}{\sum\limits_{n=0}^{{\size}_\tau}{{\size} \choose n}}. 
\label{denominator}
\end{align}
We use the following inequality, valid for all     $a \in (0,0.5)$ 
\begin{equation*}
{{{\size} \choose a{\size}} \le\sum_{m=0}^{a{\size}}{{\size} \choose m} \le \frac{1-a}{1-2a}{{\size} \choose a{\size}}}.
\end{equation*}
Since $\tau < 1/2$, it follows that ${\size} \choose {\size}_\tau$ is a lower bound for the denominator of (\ref{denominator}). We  also have the following upper bound for the numerator  
\begin{align*}
\sum\limits_{k=0}^{\lfloor \tau {\size}' - v({\size}) \rfloor} & {{\size}' \choose k}\sum\limits_{m=0}^{\min\{{\size}_\tau - k,{\size}''\}}{{\size}'' \choose m}  \le \\ 
&\sum\limits_{k=0}^{\lfloor \tau {\size}' - v({\size}) \rfloor}c^k{{\size}' \choose k}{{\size}'' \choose {\min\{{\size}_\tau - k,\lfloor {\size}''/2 \rfloor \}}},
\end{align*}
where $\{c^k\}$ are positive constants for $k=0,1,...,\lfloor \tau {\size}' - v({\size}) \rfloor$. Since for all $l \in \{0,1,...,\lfloor \tau {\size}' - v({\size}) \rfloor\}$, we have
\begin{align*}
\ddfrac{{{\size}' \choose \lfloor \tau {\size}' - v({\size}) \rfloor}{{\size}'' \choose {\min\{{\size}_\tau - \lfloor \tau {\size}' - v({\size}) \rfloor,\lfloor {\size}''/2 \rfloor \}}}}{{{\size}' \choose \lfloor \tau {\size}' - v({\size}) \rfloor - l}{{\size}'' \choose {\min\{{\size}_\tau - \lfloor \tau {\size}' - v({\size}) \rfloor + l,\lfloor {\size}''/2 \rfloor \}}}} \ge 1,
\end{align*}
it follows that there exist a constant $c_1 \in \mathbb{R}^+$ such that
\begin{align*}
c_1{\size}{{\size}' \choose \lfloor\tau {\size}' - v({\size})\rfloor}{{\size}'' \choose {\size}_\tau  - \lfloor\tau {\size}' - v({\size})\rfloor}
\end{align*}
is an upper bound for the numerator. Putting things together,  we have
\begin{align*}
p_w &\le c_1{\size}\ddfrac{{{\size}' \choose \lfloor\tau {\size}' - v({\size})\rfloor}{{\size}'' \choose {\size}_\tau  - \lfloor\tau {\size}' - v({\size})\rfloor}}{{{\size} \choose {\size}_\tau}} \\
&\le c_1{\size} P(W'\le \tau {\size}'-v({\size})|W= {\size}_\tau ). 
\end{align*}
Using the same argument as in Lemma \ref{Lemma:lemma_1}, we now have 
\begin{align*}
p_w \le e^{-c'{\size}^{2\epsilon}},
\end{align*}
where $c' \in \mathbb{R}^+$ is a constant.  
\end{proof}

\begin{proof}[{Proposition~\ref{Prop:firstprop}}] 
Let 
\begin{align*}
A=\left\{\tau\gamma {\size} - c{\size}^{1/2+\epsilon} < W'\right\}, \\
B = {\left\{W' < \tau\gamma {\size} + c{\size}^{1/2+\epsilon}\right\}}, \\
C=\left\{W < \tau {\size}\right\}.
\end{align*}
By Lemmas \ref{Lemma:lemma_1} and \ref{Lemma:lemma_2} there exist constants $c_1,c_2>0$ such that we have
\begin{align*}
P(A\cap B|C) &= 1 -  P\left(A^C\cup B^C\given[\Big]C\right) \\
&\ge 1 - \left(P\left(A^C\given[\Big]C\right)+P\left(B^C\given[\Big]C\right)\right) \\
&\ge 1 - \left(e^{-c_1{\size}^{2\epsilon}}+e^{-c_2{\size}^{2\epsilon}}\right).
\end{align*}
Hence, there exists a constant $c' \in \mathbb{R}^+$ such that
\begin{align*}
P\left(A\cap B\given[\Big]C\right) \ge 1 - e^{-c'{\size}^{2\epsilon}},
\end{align*}
and the proof is complete.  
\end{proof}


We now identify a configuration that has the potential to trigger a cascading process. 
We show that a neighborhood that is slightly larger than the neighborhood of an agent and that contains a fraction of same type agents that is slightly less than $\tau$ has the desired configuration.
For any $\epsilon, \epsilon' \in {(0,1/2)}$ let ${\hat{\tau} = \tau [1- 1/ (\tau {\size}^{1/2-\epsilon})]}$ and define  a
\textit{radical region} $\mathcal{N}_{(1+\epsilon')w}$  to be a neighborhood 
of radius ${(1+\epsilon')w}$ containing  less than $\hat{\tau}(1+\epsilon')^2{\size} $  agents of type (-1). We also define
an \textit{unhappy region} $\mathcal{N}_{\epsilon'w}$ to be a neighborhood of radius $\epsilon' w$, containing at least $\lfloor \tau \epsilon'^2 {\size}-{\size}^{1/2+\epsilon} \rfloor$ unhappy agents of type  (-1). 
\begin{lemma*} \label{Lemma:unhappy_region}
A radical region $\mathcal{N}_{(1+\epsilon')w}$ contains an unhappy region $\mathcal{N}_{\epsilon'w}$ at its center w.h.p.
\end{lemma*}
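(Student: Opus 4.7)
The plan is to apply Proposition~\ref{Prop:firstprop} twice inside the radical region $\mathcal{N}_{(1+\epsilon')w}$, treating it as the ``outer'' neighborhood of size $\tilde{N}\approx(1+\epsilon')^2 N$ with threshold $\hat{\tau}$ in place of $\tau$. The statement and proof of the proposition only use $\tau<1/2$, so they extend verbatim to any threshold in $(0,1/2)$, and for large $N$ we have $\hat{\tau}<\tau<1/2$. The strategy is: (a)~lower bound the number of $(-1)$ agents inside the central sub-neighborhood $\mathcal{N}_{\epsilon'w}$, and (b)~show that, with overwhelming probability, the neighborhood $\mathcal{N}(v)$ of every agent $v\in\mathcal{N}_{\epsilon'w}$ inherits the low $(-1)$-density forced by the radical-region condition, so that every $(-1)$-type agent at the center is unhappy.

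For step~(a), I take $\mathcal{N}_{(1+\epsilon')w}$ as the outer neighborhood and $\mathcal{N}_{\epsilon'w}$ as the inner sub-neighborhood, so that $\gamma=\epsilon'^2/(1+\epsilon')^2+o(1)$. Conditioned on $W<\hat{\tau}\tilde{N}$, Proposition~\ref{Prop:firstprop} gives, w.h.p.,
\[
W'>\gamma\hat{\tau}\tilde{N}-c\tilde{N}^{1/2+\epsilon}=\hat{\tau}\epsilon'^2 N-c(1+\epsilon')^{1+2\epsilon}N^{1/2+\epsilon}+o(N^{1/2+\epsilon}).
\]
Using $\hat{\tau}\epsilon'^2 N=\tau\epsilon'^2 N-\epsilon'^2 N^{1/2+\epsilon}$ and choosing the arbitrary constant $c$ small enough that $\epsilon'^2+c(1+\epsilon')^{1+2\epsilon}<1$, this yields $W'\ge\lfloor\tau\epsilon'^2 N-N^{1/2+\epsilon}\rfloor$, which is the count of $(-1)$ agents required by the definition of an unhappy region.

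For step~(b), any $v\in\mathcal{N}_{\epsilon'w}$ lies at $l_\infty$-distance at most $\epsilon'w$ from the center, so $\mathcal{N}(v)\subset\mathcal{N}_{(1+\epsilon')w}$. A second application of Proposition~\ref{Prop:firstprop}, with $\mathcal{N}(v)$ as the sub-neighborhood ($\gamma=1/(1+\epsilon')^2+o(1)$), bounds the count $W_v$ of $(-1)$ agents in $\mathcal{N}(v)$ by
\[
W_v<\hat{\tau}N+c(1+\epsilon')^{1+2\epsilon}N^{1/2+\epsilon}=\tau N-\bigl(1-c(1+\epsilon')^{1+2\epsilon}\bigr)N^{1/2+\epsilon}<\tau N
\]
w.h.p., with the same small $c$ chosen above. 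Hence every $(-1)$-type $v$ at the center has $s(v)<\tau$ and is unhappy. A union bound over the $O(N)$ agents $v\in\mathcal{N}_{\epsilon'w}$ costs only a polynomial-in-$N$ factor against the $e^{-c'N^{2\epsilon}}$ failure bound of the proposition, and the failure probability still vanishes as $N\to\infty$. Combining steps~(a) and~(b), conditional on the radical-region event the central sub-neighborhood contains at least $\lfloor\tau\epsilon'^2 N-N^{1/2+\epsilon}\rfloor$ $(-1)$-agents that are \emph{simultaneously} unhappy, which is exactly the definition of an unhappy region. The main delicate point is the joint balancing of constants across the two applications: the gap between $\hat{\tau}N$ and $\tau N$ is exactly $N^{1/2+\epsilon}$, which is the fluctuation scale of the proposition, so $c$ must be chosen small enough that the aggregated fluctuations fit strictly inside this gap in \emph{both} steps.
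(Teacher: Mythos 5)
Your proof is correct and follows essentially the same route as the paper's: two applications of Proposition~\ref{Prop:firstprop} conditioned on the radical-region event (one with $\mathcal{N}_{\epsilon'w}$ as the sub-neighborhood to count the $(-1)$ agents, one with each agent's own neighborhood as the sub-neighborhood to show unhappiness), followed by a union bound over the $O(N)$ agents. Your explicit bookkeeping of how the constant $c$ must be chosen so that the fluctuations fit inside the $N^{1/2+\epsilon}$ gap between $\hat{\tau}N$ and $\tau N$ is the same balancing the paper handles implicitly via its constants $c_1$ and $c_u$.
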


\begin{proof}
Let $\epsilon \in (0,1/2)$. 
We show that w.h.p. the region $\mathcal{N}_{\epsilon'w}$ co-centered with $\mathcal{N}_{(1+\epsilon')w}$ has at least ${\lfloor \tau\epsilon'^2{\size} - {\size}^{1/2+\epsilon}}\rfloor$ agents of type (-1) such that all of them are unhappy. 
Let $A$ be the event that there are less than $\tau \epsilon'^2 {\size} - {\size}^{1/2+\epsilon}$ agents of type (-1) in $\mathcal{N}_{\epsilon'w}$, which has $N'$ agents. By Proposition~\ref{Prop:firstprop}, there exists $c_1, c_2>0$ such that
\begin{align*}
P(A) \le P\left(W'\le\hat{\tau} {\size}' - c_1{\size}^{1/2+\epsilon}\given[\Big] W_{(1+\epsilon')w} < (1+\epsilon')^2\hat{\tau} {\size}\right) \le e^{-c_2{\size}^{2\epsilon}},
\end{align*}
 where $W_{(1+\epsilon')w}$ represents the number of (-1) agents in $\mathcal{N}_{(1+\epsilon')w}$. Let $\mathcal{I}$ denote the set of the positions of all the agents in $\mathcal{N}_{\epsilon'w}$, and let $B_i$ be the event that a (-1) agent positioned at $i\in \mathcal{I}$ is happy. By Proposition \ref{Prop:firstprop}, there exists $c_3>0$ such that, for all $i\in\mathcal{I}$ 
 \begin{align*}
 P(B_i) = P\left(W_i\ge \hat{\tau} {\size} + c_u{\size}^{1/2+\epsilon}\given[\Big]W_{(1+\epsilon')w} < (1+\epsilon')^2\hat{\tau} {\size}\right) \le e^{-c_3{\size}^{2\epsilon}}, 
 \end{align*}
 where $W_i$ is the number of (-1) agents in the neighborhood of $i$ and $c_u>0$ is chosen so that the threshold for being happy is met. It follows that there exists $c>0$ such that
\begin{align*}
 P\left(A\cap B_1^C \cap ... \cap B_{|\mathcal{I}|}^C\right) \ge 1 - {\size}e^{-c{\size}^{2\epsilon}}, 
 \end{align*}
where $|\mathcal{I}|$ denotes the cardinality of $\mathcal{I}$.  
\end{proof}

A radical region is \emph{expandable} if there is a sequence of  at most $(w+1)^2$ possible flips inside it that can make  the neighborhood $\mathcal{N}_{w/2}$ at its center monochromatic.

We   consider a geometric configuration 
where a radical region,  and  neighborhoods $\mathcal{N}_{\epsilon'w}$ , $\mathcal{N}_{w/2}$ and $\mathcal{N}_{{\radius}}$  with ${\radius}>3w$, are all co-centered. 
We consider the process dynamics and let $u^+$ denote an arbitrary (+1) agent and 
\begin{align}
T({\radius}) = \inf\left\{t: \exists v\in \mathcal{N}_{\radius}, \; u^{+}  {\normalfont  \mbox{ would be unhappy at the location of $v$} } \right\}.
\label{Tinfdef}
\end{align}

The next lemma shows that   the radical region in this configuration     is expandable w.h.p., provided that  $\epsilon'$ is large enough and no (+1) agent at the location of any agent in $\mathcal{N}_{\radius}$ is unhappy.  The main idea is that the (-1) agents in the  unhappy region at the center of the radical region can  trigger a process that leads to a monochromatic (+1) region of radius $w/2$.


\begin{lemma*} \label{Lemma:Trigger}
For all $\epsilon' > f(\tau)$, where
\begin{align}
f(\tau) = \frac{3(\tau - 0.5)+\sqrt{9(\tau-0.5)^2-7(\tau-0.5)(3\tau+0.5)}}{2(3\tau+0.5)},
\label{eq:ftau}
\end{align}
there exists w.h.p. a sequence of at most $(w+1)^2$ possible flips  in $\mathcal{N}_{(1+\epsilon')w}$ such that if they  happen  before $T({\radius})$,  then all the agents inside   $\mathcal{N}_{w/2}$ will become of  type (+1).
\end{lemma*}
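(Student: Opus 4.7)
The plan is to construct an explicit two-phase sequence of flips. Phase~1 flips every $(-1)$ agent inside the central sub-neighborhood $\mathcal{N}_{\epsilon'w}$; the argument of Lemma~\ref{Lemma:unhappy_region} combined with Proposition~\ref{Prop:firstprop} shows that w.h.p.\ every such $(-1)$ is unhappy, because its own neighborhood $\mathcal{N}(u)$ lies inside the radical region and so contains a $(-1)$ fraction close to $\hat\tau<\tau$, and since $\tau<1/2$ each such flip makes the agent happy and is therefore a possible flip. Phase~2 then flips every remaining $(-1)$ of $\mathcal{N}_{w/2}$, in any order; the essence of the proof is to verify that under the assumption $\epsilon'>f(\tau)$ each of them is also unhappy at the moment of its turn. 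The total sequence length is bounded by the number of $(-1)$ agents in $\mathcal{N}_{w/2}$, hence by $|\mathcal{N}_{w/2}|=(w+1)^2$ as required.

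The Phase~2 verification is where the threshold $f(\tau)$ comes from. Fix an arbitrary $(-1)$ agent $v\in\mathcal{N}_{w/2}\setminus\mathcal{N}_{\epsilon'w}$ at $l_\infty$ distance $d=\alpha w$ from the common center, and decompose $\mathcal{N}(v)$ into three disjoint regions: the already-flipped nucleus $\mathcal{N}(v)\cap\mathcal{N}_{\epsilon'w}$, which contributes $0$ agents of type $(-1)$ after Phase~1; the ``bulk'' $\mathcal{N}(v)\cap(\mathcal{N}_{(1+\epsilon')w}\setminus\mathcal{N}_{\epsilon'w})$, whose $(-1)$ density is close to $\hat\tau\approx\tau$ by Proposition~\ref{Prop:firstprop} applied to the radical region; and the ``exterior'' $\mathcal{N}(v)\setminus\mathcal{N}_{(1+\epsilon')w}$, which lies outside the conditioning event and so has $(-1)$ density close to $1/2$ by the Bernoulli$(1/2)$ typicality of the initial configuration. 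The $O(N^{1/2+\epsilon})$ fluctuations from Proposition~\ref{Prop:firstprop} are absorbed by a union bound over the at most $(w+1)^2$ agents of $\mathcal{N}_{w/2}$, which is where the ``w.h.p.'' qualifier enters.

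Because intersections of $l_\infty$ balls are axis-aligned rectangles and $\mathcal{N}_{\epsilon'w}\subseteq\mathcal{N}(v)$ whenever $v\in\mathcal{N}_{w/2}$ and $\epsilon'\le 1/2$, the areas of the three pieces are explicit polynomials in $w$, $\epsilon'$ and $\alpha$. Substituting them into the inequality ``$(-1)$ count in $\mathcal{N}(v)$ is strictly less than $\tau N$'' yields a polynomial inequality in $\alpha$ and $\epsilon'$; a monotonicity check shows that the worst case is attained at a corner of $\mathcal{N}_{w/2}$, namely $v=(w/2,w/2)$ up to the symmetries of the grid, where $\mathcal{N}(v)$ has maximal protrusion outside the radical region. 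At that worst-case $v$, expanding $(3/2+\epsilon')^2$ and collecting terms, the inequality collapses to
\[
(24\tau+4)\epsilon'^2 \;+\; 12(1-2\tau)\epsilon' \;-\; 7(1-2\tau) \;>\; 0,
\]
whose relevant positive root is exactly $f(\tau)$ from~(\ref{eq:ftau}). Once this is established, every $(-1)$ processed in Phase~2 is unhappy at its turn, each flip is possible, and at the end $\mathcal{N}_{w/2}$ is entirely of type $(+1)$; the hypothesis $t<T(\rho)$ prevents any $(+1)$ in $\mathcal{N}_\rho$ from becoming unhappy in the meantime, ruling out back-flips and any extraneous dynamics that could disrupt the prescribed sequence. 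The main obstacle is precisely this three-region accounting and the verification that the extremum lives at the corner rather than at an interior critical point of the polynomial, which is what pins down the closed-form threshold $f(\tau)$ in~(\ref{eq:ftau}).
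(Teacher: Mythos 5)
Your proposal is correct and follows essentially the same route as the paper's proof: flip the unhappy nucleus in $\mathcal{N}_{\epsilon'w}$, then verify that the worst-case corner agent of $\mathcal{N}_{w/2}$ becomes unhappy by splitting its neighborhood into the part inside the radical region (density $\approx\tau$) and the part outside (density $\approx 1/2$) and subtracting the flipped nucleus, with fluctuations controlled by Proposition~\ref{Prop:firstprop} and a union bound. Your quadratic $(24\tau+4)\epsilon'^2+12(1-2\tau)\epsilon'-7(1-2\tau)>0$ is exactly the paper's condition rescaled, with positive root $f(\tau)$.
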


\begin{figure}[!t]
\centering
\includegraphics[width=\textwidth]{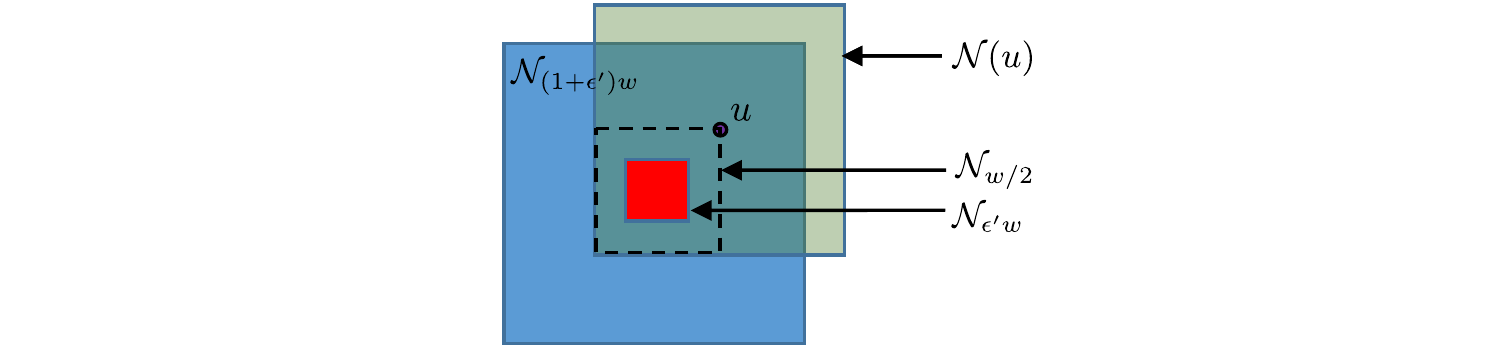}
\caption{Regions discussed in Lemma~\ref{Lemma:Trigger}. $\mathcal{N}_{\epsilon'w}$ is an unhappy region w.h.p., the dashed box is $\mathcal{N}_{w/2}$, $u$ is a corner agent in $\mathcal{N}_{w/2}$, and finally $\mathcal{N}(u)$ is the neighborhood of agent $u$.}
\label{fig:thrm41s}
\end{figure}

\begin{proof}
Let $\epsilon \in (0,1/2)$. Let us denote the neighborhood with radius $\epsilon'w$ and co-centered with the radical region by $\mathcal{N}_{\epsilon'w}$, see Figure \ref{fig:thrm41s}. By Lemma \ref{Lemma:unhappy_region}, with probability at least $1-e^{-O({\size}^{2\epsilon})}$ there are at least $\lfloor \tau\epsilon'^2{\size} - {\size}^{2\epsilon} \rfloor$ agents of type (-1) inside this neighborhood such that all of them are unhappy. Next, we  show that if these unhappy agents flip before $T(\radius)$, all the agents inside the neighborhood $\mathcal{N}_{w/2}$ will be unhappy w.h.p., which gives the desired result. 

First, we notice that if there is a flip of an unhappy (-1) agent in $\mathcal{N}_{\radius}\setminus \mathcal{N}_{w/2}$ it can only increase the probability of the existence of the sequence of flips we are looking for, hence conditioned on having these flips before $T(\radius)$, the worst case is when these flips occur with the initial configuration of $\mathcal{N}_{\radius}\setminus \mathcal{N}_{w/2}$. 
Since a corner agent in $\mathcal{N}_{w/2}$ shares the least number of agents with the radical region, it is more likely for it to have the largest number of (+1) agents in its neighborhood compared to other agents in $\mathcal{N}_{w/2}$. Hence, as a worst case, we may consider a corner agent in $\mathcal{N}_{w/2}$ which is co-centered with the radical region.

Let us assume that $\epsilon' \in (0,1/2)$, in this case $\mathcal{N}_{\epsilon'w}$ is  completely contained in the neighborhood of each of the agents in   $\mathcal{N}_{w/2}$. Let us denote the neighborhood shared between the neighborhood of the agent $u$ at the corner of $\mathcal{N}_{w/2}$ and the radical region by $\mathcal{N}''(u)$. Also, let us denote the  scaling factor corresponding to this shared neighborhood by $\gamma''$. We have
 \begin{align*}
 \gamma''= \frac{(3/2+\epsilon')^2}{4(1+\epsilon')^2} \pm O\left(\frac{1}{\sqrt{{\size}}}\right).
 \end{align*}
By Proposition \ref{Prop:firstprop} it follows that with probability at least ${1-e^{-O({\size}^{2\epsilon})}}$ there are at most
  \begin{align*}
\frac{(3/2+\epsilon')^2\tau}{4}{\size} + o(N),
  \end{align*}
agents of type (-1) in $\mathcal{N}''(u)$. Hence, we can conclude that, for any agent in $\mathcal{N}_{w/2}$, w.h.p., there are at most this many (-1) agents in the intersection of the neighborhood of this agent and the radical region.

Also, using Lemma \ref{Lemma:balanced} of the Appendix, with probability at least $1-e^{-O(N^{2\epsilon})}$ we have at most
  \begin{align*}
  { \frac{1}{2}\left(1-(3/2+\epsilon')^2/4\right){\size} + o({\size}) },
  \end{align*}
 agents of type (-1) in the part of the neighborhood of the corner agent $u$  in $\mathcal{N}_{w/2}$ that is also not  in the radical region. Combining the above results, we can conclude that with probability at least $1-e^{-O(N^{2\epsilon})}$ there are at most 
\begin{align*}
\frac{(3/2+\epsilon')^2\tau}{4}{\size} + \frac{1}{2}\left(1-\frac{(3/2+\epsilon')^2\tau}{4}\right){\size}  + o({\size}),
 \end{align*} 
agents of type (-1) in the neighborhood of an agent in $\mathcal{N}_{w/2}$. Let us denote this event for the corner agent~$u$ by $A_1$. Let us denote the events of having at most this many (-1) agents in the neighborhoods of other agents in $\mathcal{N}_{w/2}$ by $A_2, ..., A_{|\mathcal{N}_{w/2}|}$, where $|\mathcal{N}_{w/2}|$ denotes the number of agents in $\mathcal{N}_{w/2}$. We have
\begin{align*}
P(A_1 \cap ... \cap A_{(w+1)^2}) &\ge 1 - P(A_1^C \cup ... \cup A^C_{|\mathcal{N}_{w/2}|}) \\
&\ge 1 - (w+1)^2 P(A^C_1) \\ 
& \ge 1 - e^{-O(N^{2\epsilon})}.
\end{align*}


The goal is now to find the range of $\epsilon'$ for which $\mathcal{N}_{\epsilon'w}$ is large enough that once all of its unhappy agents flip, all the agents in $\mathcal{N}_{w/2}$ become unhappy w.h.p. 
It follows that we need 
\begin{align*}
\frac{(3/2+\epsilon')^2\tau}{4}{\size} + \frac{1}{2}\left(1-\frac{(3/2+\epsilon')^2\tau}{4}\right){\size} - \tau\epsilon'^2{\size} + o({\size}) < \tau {\size},
 \end{align*} 
to hold w.h.p. Dividing by ${\size}$, and letting ${\size}$ go to infinity,  after some   algebra it follows that
 \begin{align}\label{Eq:eps1}
\epsilon' > \frac{3(\tau - 0.5)+\sqrt{9(\tau-0.5)^2-7(\tau-0.5)(3\tau+0.5)}}{2(3\tau+0.5)} = f(\tau),
\end{align}
where $f(\tau) < 1/2$ for $\tau \in (\tau_2,1/2)$, as desired.  
\end{proof}

\begin{figure}[!t]
\centering
\includegraphics[width=4.5in]{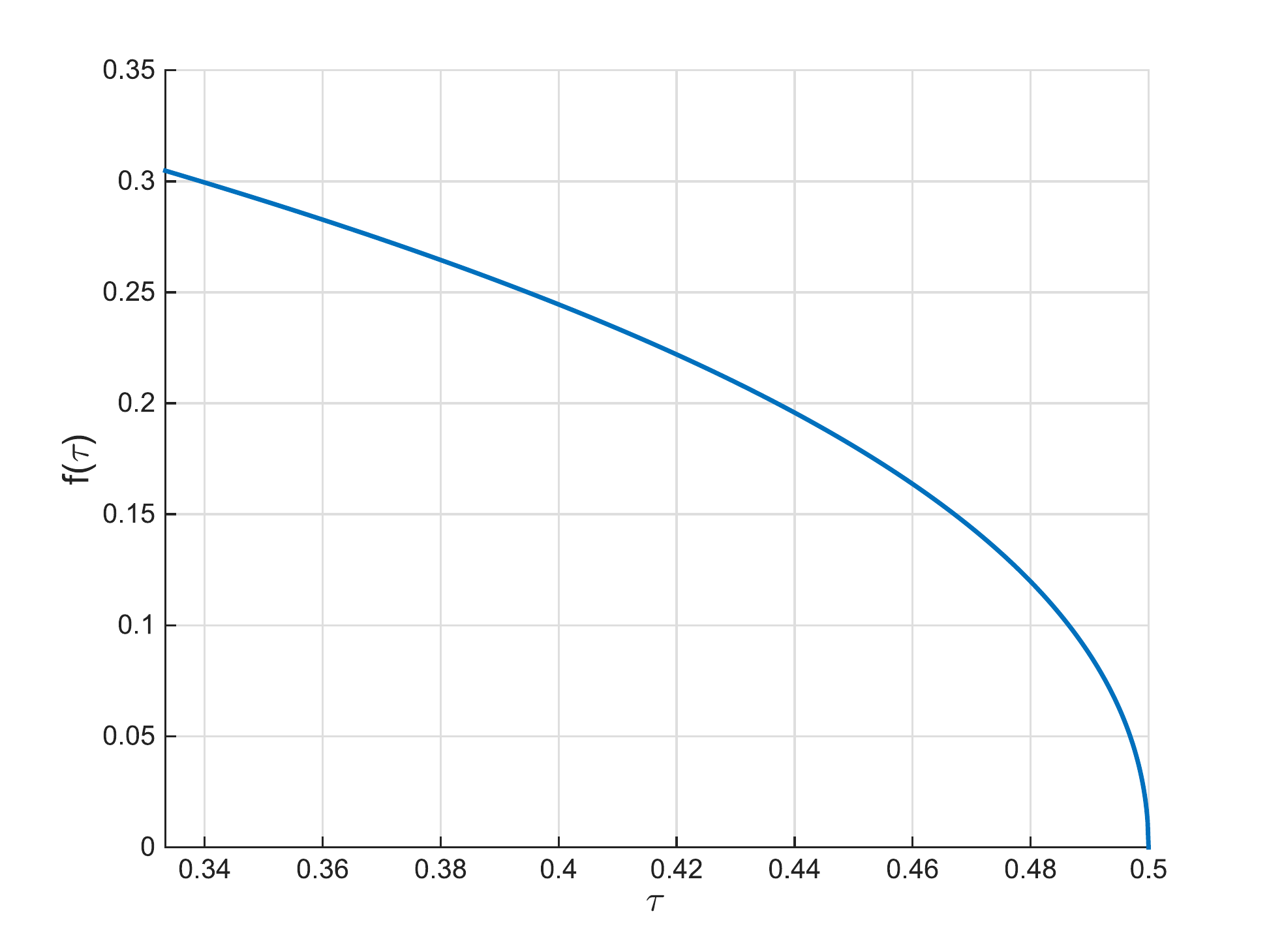}
\caption{The infimum of $\epsilon'$ to potentially trigger a cascading process.}
\label{fig:epsilon}
\end{figure}

Figure \ref{fig:epsilon} depicts $f(\tau)$ as a function of $\tau$. When $\tau$ is close to one half, it is sufficient to have an $\epsilon'$ close to zero to potentially trigger a segregation process. In this case, a small number of agents located in a small unhappy region are needed to flip in order to make other agents in the radical region unhappy. However, as $\tau$ decreases and agents become more tolerant, a larger number of agents must make a flip in the unhappy region in order to make other agents in the radical region unhappy, and hence larger values of $\epsilon'$ are needed. 


Using Lemma~\ref{Lemma:Trigger}, we obtain  an exponential bound  
  on the probability of having an expandable radical region inside a sufficiently large neighborhood.  This shows that the probability that an expandable radical region is sufficiently close to an arbitrary agent $u$ in the initial configuration, is not too small. 
 
\begin{lemma*} \label{Lemma:expandable}
Let $r = 2^{[1-H(\tau')]{\size}/2-o({\size})}$, where $\tau'=(\tau {\size} -2)/({\size}-1)$. Let
\begin{align*}
C = \left\{\text{$\mathcal{N}_r$  \emph{contains an expandable radical region at} $t=0$}\right\}.
\end{align*}
For all $\epsilon' > f(\tau)$ and sufficiently large ${\size}$, we have
\begin{align*}
P(C) \ge 2^{-[1-H(\tau')](2\epsilon'+\epsilon'^2){\size} -o({\size})}.
\end{align*}
\end{lemma*}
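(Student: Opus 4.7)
The plan is a standard first-moment-style lower bound: pack $\mathcal{N}_r$ with many candidate positions whose events of hosting an expandable radical region are mutually independent, estimate the single-candidate probability using the entropy bound for binomial tails together with Lemma~\ref{Lemma:Trigger}, and conclude by Bernoulli's inequality.

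First I would fix a center $v$ and bound $P(R_v)$, where $R_v$ is the event that $\mathcal{N}_{(1+\epsilon')w}(v)$ is a radical region. Since the number of $(-1)$-agents inside is $\mathrm{Binomial}(K,1/2)$ with $K = (2(1+\epsilon')w+1)^2 = (1+\epsilon')^2 N + O(\sqrt{N})$, and $\hat{\tau}(1+\epsilon')^2 N / K \to \hat{\tau} < 1/2$, the standard entropy estimate $\binom{K}{\lfloor \alpha K\rfloor} \ge 2^{KH(\alpha)-o(K)}$ yields $P(R_v) \ge 2^{-(1+\epsilon')^2[1-H(\hat{\tau})]N - o(N)}$. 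Using $\hat{\tau}-\tau' = -N^{-1/2+\epsilon}+O(1/N)$ and the boundedness of $H'$ away from the endpoints, $N|H(\hat{\tau})-H(\tau')|=o(N)$, so
\begin{align*}
P(R_v) \ge p_0 := 2^{-(1+\epsilon')^2 [1-H(\tau')] N - o(N)}.
\end{align*}
Lemma~\ref{Lemma:Trigger} then shows that, conditional on $R_v$, the radical region is expandable with probability $1-o(1)$, and this factor is absorbed into the exponential $o(N)$ correction.

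Next I would pack $\mathcal{N}_r$ with $M$ candidate centers $v_1,\dots,v_M$ whose expansion neighborhoods $\mathcal{N}_{(2+\epsilon')w}(v_i)$ are pairwise disjoint. These larger neighborhoods contain all agents whose initial types can influence whether $v_i$ hosts an expandable radical region (the radical condition lives in $\mathcal{N}_{(1+\epsilon')w}(v_i)$ and the admissibility of any flip in that region depends only on its $w$-neighborhood), so by product independence of the i.i.d.\ Bernoulli initial configuration the $M$ candidate events are mutually independent. A standard packing estimate gives $M \ge (2r+1)^2/(2(2+\epsilon')w+1)^2 = \Theta(r^2/N)$, and with $r = 2^{[1-H(\tau')]N/2-o(N)}$ this yields $\log_2 M \ge [1-H(\tau')]N - o(N)$.

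Finally, independence gives $P(C^c) \le (1 - p_0(1-o(1)))^M \le \exp(-Mp_0/2)$ for large $N$. Since
\begin{align*}
Mp_0 \ge 2^{(1-(1+\epsilon')^2)[1-H(\tau')]N - o(N)} = 2^{-(2\epsilon'+\epsilon'^2)[1-H(\tau')]N - o(N)} \to 0,
\end{align*}
the inequality $1-e^{-x} \ge x/2$ on $[0,1]$ yields $P(C) \ge Mp_0/4$, which is exactly the claimed bound once the constant is absorbed into $o(N)$. The main obstacle will be verifying that the ``expandable'' event at $v_i$ really is measurable with respect to the agent types inside $\mathcal{N}_{(2+\epsilon')w}(v_i)$; this requires revisiting the flip sequence produced in Lemma~\ref{Lemma:Trigger} to confirm that every happiness check it performs is for an agent inside the radical region (whose own $w$-neighborhood then extends at most $w$ beyond it). A secondary bookkeeping task is collecting the error contributions from the binomial estimate, the $H(\hat{\tau})$-vs-$H(\tau')$ swap, and the packing loss, and checking that they collapse into a single $o(N)$ term in the exponent.
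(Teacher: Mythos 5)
Your proposal is correct in substance and arrives at the right exponent, but it combines the ingredients differently from the paper. The paper's proof writes $P(C)\ge P(C\mid A,S_{\epsilon'})\,P(S_{\epsilon'}\cap A)$, where $S_{\epsilon'}$ is the event that $\mathcal{N}_r$ contains some radical region and $A$ is the global event that no $(+1)$ agent in $\mathcal{N}_{\radius}$ would be unhappy; it then invokes Lemma~\ref{Lemma:Trigger} for $P(C\mid A,S_{\epsilon'})\to 1$, the FKG inequality to decouple the increasing events $S_{\epsilon'}$ and $A$, and Lemmas~\ref{Lemma:R_unhappy} and~\ref{Lemma:r_unhappy} for $P(A)\to 1$ and $P(S_{\epsilon'})\ge 2^{-[1-H(\tau')](2\epsilon'+\epsilon'^2)\size-o(\size)}$ (the latter obtained, as in your proposal, by packing $\mathcal{N}_r$ with disjoint blocks of radius $(1+\epsilon')w$ and using the entropy estimate of Lemma~\ref{Lemma:super_unhappy_prob}). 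You instead bundle ``radical'' and ``expandable'' into a single local event per candidate center, observe that this event is measurable with respect to $\mathcal{N}_{(2+\epsilon')w}(v_i)$ (correct: every happiness check in the flip sequence of Lemma~\ref{Lemma:Trigger} concerns an agent within distance $\max\{\epsilon'w,\,w/2\}$ of the center, whose own neighborhood reaches at most $3w/2$ further), and exploit genuine independence of disjoint candidates, which removes the need for FKG and for the event $A$ altogether. This is a legitimate and arguably cleaner route; the quantitative heart (entropy bound for one radical region, $\Theta(r^2/\size)$ disjoint candidates, $Mp_0=2^{-(2\epsilon'+\epsilon'^2)[1-H(\tau')]\size-o(\size)}$) is identical. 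The one point you should make explicit is that Lemma~\ref{Lemma:Trigger}, as stated in the paper, is set up under the additional proviso that no $(+1)$ agent in $\mathcal{N}_{\radius}$ is unhappy; to claim $P(\text{expandable}\mid R_v)=1-o(1)$ without that proviso you must either note that the proof of Lemma~\ref{Lemma:Trigger} uses only Proposition~\ref{Prop:firstprop} conditioned on the radical-region event (the proviso enters only through the dynamical clause ``before $T(\radius)$,'' not through the existence of the flip sequence), or reintroduce the event $A$ and remove it by FKG, since $A$ and $R_v$ are both increasing in the $(+1)$ agents. With that clarification your argument closes.
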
 
\begin{proof}
Let $\mathcal{N}_r$ be an arbitrary neighborhood of radius $r= 2^{[1-H(\tau')]{\size}/2-o({\size})}$ and let $\mathcal{N}_{\radius}$ be a neighborhood of radius $\radius = r + w$ and with the same center as $\mathcal{N}_r$. Let
\begin{align*}
A = \{\forall v\in \mathcal{N}_{\radius}, \;  u^{+}   \mbox{ would be happy at the location of } v \mbox{ at time } t=0\}, 
\end{align*}
\begin{align*}
C = \text{\{$\mathcal{N}_r$ {\normalfont contains an expandable radical region at time $t=0$\}}},
\end{align*}
\begin{align*}
S_{\epsilon'} = \text{\{$\mathcal{N}_r$ contains a radical region of radius $(1+\epsilon')w$ at time $t=0$\}}.
\end{align*}
We have
\begin{align*}
P(C) &\ge P(C\cap S_{\epsilon'} \cap A) \\
&= P\left(C\given[\Big]A,S_{\epsilon'}\right)P(S_{\epsilon'}\cap A).
\end{align*}
Using the FKG inequality and since $S_{\epsilon'}$ and $A$ are increasing events,  we have
\begin{align*}
P(C) \ge P\left(C\given[\Big]A,S_{\epsilon'}\right)P(S_{\epsilon'})P(A).
\end{align*}
By Lemma~\ref{Lemma:Trigger} we have that $P(C|A,S_{\epsilon'})$ occurs w.h.p.  By Lemmas  \ref{Lemma:R_unhappy}  and \ref{Lemma:r_unhappy} of the Appendix we have that 
\begin{align*}
P(S_{\epsilon'}) \ge 2^{-[1-H(\tau')][2\epsilon'+\epsilon'^2]{\size} -o({\size})}.
\end{align*} 
Finally, $P(A)$ tends to one as ${\size} \rightarrow \infty$ which leads to the desired result.  
\end{proof}

So far,  we have identified a local configuration (radical region) that can   lead to the formation of a small monochromatic neighborhood w.h.p. In the following section we show that this monochromatic neighborhood is in fact capable of making a large region monochromatic or almost monochromatic. 

\section{The segregation process}  \label{Sec:Segregation}
We now consider the dynamics of the segregation process and show that for all $\tau \in (\tau_1, 1/2)$ the expected size of the  monochromatic region  in steady state is exponential, while for all $\tau \in (\tau_2,\tau_1]$ the expected size of the almost monochromatic region is exponential. 

\subsection{Monochromatic region}
We need the following definitions and preliminary results for proving the first part of Theorem~\ref{Thrm:first_theorem}.
A \textit{firewall} of radius $r$ and center $u$ is a set of agents of the same type  contained in an annulus 
\begin{align*}
A_r(u) = \left\{y: r-\sqrt{2}w \leq \|u-y\| \leq r\right\},
\end{align*}
where $\|.\|$ denotes  Euclidean distance and $r\ge 3w$. By Lemma \ref{Lemma:firewall}, once formed a firewall of sufficiently large radius  remains  static,  and since its width is  $\sqrt{2} w$ the agents inside the inner circle are not going to be affected by the configurations outside the firewall.

We now call a neighborhood with radius $w/2$ a $w$-block. Consider the grid graph $G_n$. Let us renormalize this grid into $w$-blocks and denote the resulting graph by $G'_n$ where each vertex of it is a $w$-block. Consider i.i.d. random variables $\{t(v):v\in G'_n\}$, each attached to a vertex of $G'_n$. Let $F$ denote the common distribution of these random variables and assume $F(0^{-}) = 0$, $\int_{[0,\infty)}xF(dx) < \infty$, and that $F$ is not concentrated on one point. Consider a path $\eta$ consisting of the vertices $v_1,...,v_k \in G'_n$ and  define the passage time of  this path 
\begin{equation*}
T^*(\eta) = \sum_{i=1}^k t(v_i).
\end{equation*}
We also define
\begin{align*}
T_k &= \inf_{\eta \in (0 \leftrightarrow k\zeta_1)} \{T^*(\eta) \},
\end{align*}
 where $\zeta_1$ is a coordinate vector and $(0 \leftrightarrow k\zeta_1)$ indicates the set of paths between the origin and $k\zeta_1$.

The following theorem,  originally stated for  bond percolation, also holds for   site percolation and appears as Theorem~1 in \cite{kesten1993speed}.

\begin{theorem*} [Kesten] \label{Thrm:Kesten_thrm1}
Let  $F(0) < p_c(\mathbb{Z}^d)$ where $p_c$ is the critical probability for site percolation on $\mathbb{Z}^d$, and $\int e^{\gamma x}F(dx)<\infty$ for some $\gamma > 0$. Then, there exist $c_1,c_2,c_3,c_4 \in \mathbb{R}^+$ independent of $k$ and such that
\begin{align*}
{P\left(|T_k - \mathbb{E}[T_k]|>x\sqrt{k}\right)<c_1e^{-c_2x}}, 
\end{align*} 
 for $x<c_3k$ and $c_4k^{-2} \le \mathbb{E}[T_k]/{k}-\mu$ where $\mu = \lim_{k\rightarrow \infty} {T_k}/{k}$.
\end{theorem*}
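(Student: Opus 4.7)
The plan is to obtain the concentration inequality via a martingale difference argument of Azuma--Hoeffding type, combined with truncation of the weights and a percolation-based control of the length of near-optimal paths. The starting point is that $T_k$ is $1$-Lipschitz with respect to each coordinate: changing a single weight $t(v)$ to $t'(v)$ can alter $T_k$ by at most $|t(v) - t'(v)|$, because any path achieving the infimum under the new weights still has passage time at most its old passage time plus $|t(v) - t'(v)|$, and symmetrically. Fixing an enumeration $v_1, v_2, \ldots$ of the vertices (say, in order of increasing $\ell_\infty$-distance from the origin, so that all potentially relevant vertices appear in the first $O(k^d)$ positions), set $\mathcal{F}_i = \sigma(t(v_1),\ldots,t(v_i))$ and $M_i = \mathbb{E}[T_k \mid \mathcal{F}_i]$; then $M_i$ is a martingale with $M_0 = \mathbb{E}[T_k]$ and $M_\infty = T_k$.

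The first key step is to control $|M_i - M_{i-1}|$. Since the weights are unbounded, a naive Azuma bound fails, so I would truncate: pick a level $L = L(k)$ (eventually $L$ of order $\log k$) and write $t(v) = t(v)\mathbf{1}_{\{t(v)\le L\}} + t(v)\mathbf{1}_{\{t(v)>L\}}$. Define the truncated passage time $\widetilde T_k$ using weights $t(v)\wedge L$. The martingale differences of $\widetilde T_k$ are bounded by $L$ \emph{only when $v_i$ actually lies on some near-optimal path}; otherwise they vanish. This leads to the bound
\begin{equation*}
\bigl|\widetilde M_i - \widetilde M_{i-1}\bigr| \le L \cdot \mathbf{1}_{\{v_i \in \Pi\}},
\end{equation*}
where $\Pi$ denotes a set of vertices lying on some geodesic or near-geodesic, and hence $\sum_i (\widetilde M_i - \widetilde M_{i-1})^2 \le L^2 |\Pi|$. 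The second key step is to show that $|\Pi|$ is linear in $k$ with exponentially good tails. This is exactly where the hypothesis $F(0) < p_c(\mathbb{Z}^d)$ enters: vertices with weight $0$ form a subcritical site percolation, so their clusters are exponentially small; consequently any optimal path is forced to traverse $\Theta(k)$ vertices with strictly positive weight, and deviations from the straight-line route cost at least a constant per step, ruling out long excursions. A standard renormalization argument yields $P(|\Pi| \ge C k) \le c_1 e^{-c_2 k}$.

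Combining these ingredients, I would apply a version of the Azuma inequality for martingales with differences bounded by $L\cdot\mathbf{1}_{\{v_i\in\Pi\}}$ conditioned on $|\Pi|\le Ck$, obtaining
\begin{equation*}
P\bigl(|\widetilde T_k - \mathbb{E}[\widetilde T_k]| > x\sqrt{k}\bigr) \le c_1'\exp\bigl(-c_2' x^2/L^2\bigr) + c_1 e^{-c_2 k},
\end{equation*}
valid for $x \le c_3 k$. Choosing $L$ of order $\log k$ (small enough to preserve the exponential in $x$, large enough that truncation is rarely felt) and transferring the bound from $\widetilde T_k$ back to $T_k$ through the exponential-moment hypothesis $\int e^{\gamma x}\,F(dx) < \infty$ yields the stated estimate, with the mild $\mathbb{E}[T_k]/k - \mu \ge c_4 k^{-2}$ condition arising from the subadditivity correction in Kingman's ergodic theorem. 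The main obstacle is the second step: even though a geodesic has $O(k)$ vertices, one has to control the union of near-geodesics (since an arbitrary small perturbation of weights may switch which path is optimal), and this union must be shown to have linear size with exponentially good tails. The subcriticality of the zero-weight percolation is precisely the tool that tames this union and is what forces the hypothesis $F(0) < p_c$.
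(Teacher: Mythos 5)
First, a point of reference: the paper does not prove this statement at all --- it is imported verbatim as Theorem~1 of Kesten's 1993 paper and used as an external input, so there is no internal proof to compare yours against. Judged against Kesten's actual argument, your sketch correctly identifies the two main pillars: a martingale-difference decomposition of $T_k$ over the vertex weights, and a route-length lemma asserting that optimal and near-optimal paths contain $O(k)$ vertices with exponentially good tails, which is exactly where the subcriticality hypothesis $F(0)<p_c(\mathbb{Z}^d)$ enters. You also correctly flag the main obstacle, namely that the relevant set $\Pi$ is a union of near-geodesics rather than a single geodesic.

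However, the concluding step does not deliver the stated bound. An Azuma-type inequality with increments bounded by $L\cdot\mathbf{1}_{\{v_i\in\Pi\}}$ and $|\Pi|\le Ck$ gives, after normalizing by $\sqrt{k}$, a tail of the form $\exp(-c\,x^2/L^2)$; with $L$ of order $\log k$ this is $\exp(-c\,x^2/(\log k)^2)$, which in the intermediate range $1\ll x\ll(\log k)^2$ is much weaker than the claimed $c_1e^{-c_2x}$ with constants independent of $k$ (take $x=\log k$: your bound is a fixed constant, the theorem's is $k^{-c_2}$). To obtain a genuinely exponential-in-$x$ tail uniformly over $1\le x\le c_3k$ one needs either a truncation level adapted to $x$ or, as Kesten does, a Bernstein-type martingale inequality (his Theorem~3) that controls both $\sup_i|M_i-M_{i-1}|$ and the predictable quadratic variation $\sum_i\mathbb{E}\bigl[(M_i-M_{i-1})^2\mid\mathcal{F}_{i-1}\bigr]$; the $e^{-c_2x}$ rate is precisely the large-deviation branch of that inequality, reflecting the unbounded increments. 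Two further points need repair: the event $\{v_i\in\Pi\}$ is not $\mathcal{F}_{i-1}$-measurable, so the increment bound must be routed through the conditional variance rather than a deterministic bound on $|M_i-M_{i-1}|$; and the side condition on $\mathbb{E}[T_k]/k-\mu$ is not simply ``the subadditivity correction in Kingman's theorem'' --- it is a separate approximation statement that Kesten establishes with its own argument. None of this affects the paper, which is entitled to cite the theorem, but as a standalone proof your sketch is incomplete at the concentration step.
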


Using the above theorem, we obtain the following lower bound on the conditional probability that the spread of unhappy agents takes a sufficiently large amount of time.
\begin{lemma*} \label{Lemma:Unhappy_growth}
Let $\mathcal{N}_{\radius}$ be a neighborhood with radius ${\radius}>{\size}^3$ and let $u^+$ denote an arbitrary (+1) agent. 
Let 
\begin{align*}
A = \left\{\forall v\in \mathcal{N}_{\radius}, \;  u^{+}   \mbox{ {\normalfont would be happy at the location of $v$ at time $t=0$}}\right\}.
\end{align*}
There exist constants $c,c',c'' \in \mathbb{R}^+$ independent of $N$, such that for all ${\size \ge 1}$,
\begin{align*}
P\left(T(\radius/2)>c''\frac{{\radius}}{{\size}^{3/2}}\given[\Big] A\right) > 1 - {c{\radius}^2}e^{-c' {\radius}^{1/3}},
\end{align*}
where $T(\radius)$ is defined in (\ref{Tinfdef}). 
\end{lemma*}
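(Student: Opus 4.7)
The strategy is to dominate the spread of unhappiness by an i.i.d.\ first-passage percolation on the renormalized grid $G'_n$ and then invoke the Kesten theorem quoted above. To each $w$-block $v\in G'_n$ I would attach a positive random variable $t(v)$ equal to the real time needed for unhappiness to enter $v$, given that it has already reached an adjacent block. Under event $A$ every $(+1)$ agent in $\mathcal{N}_{\radius}$ is happy at time $0$, so for $u^+$ to become unhappy at a target location $v^\star \in \mathcal{N}_{\radius/2}$ a chain of earlier flips must already have changed the majority of $\mathcal{N}(v^\star)$. This chain has length at least $k=\Theta(\radius/w)$ blocks in $G'_n$, and its real-time duration is bounded below by the first-passage time $T_k$ from the annulus $\mathcal{N}_{\radius}\setminus\mathcal{N}_{\radius/2}$ to $v^\star$. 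Because each block hosts $\Theta(\size)$ independent Poisson clocks, $t(v)$ lives on scale $1/\size$, so I would rescale to $\size\,t(v)$ with a common distribution $F$ that is, up to lower-order terms, independent of $\size$.

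Next I would verify the hypotheses of the Kesten theorem. The first Poisson ring in a block occurs at a strictly positive time, so $F(0)=0<p_c(\mathbb{Z}^2)$; and since only a bounded number of flips per block is needed to forward the cascade, the relevant waiting time is a finite sum of exponentials and $F$ has finite exponential moments. Kesten's bound then yields $P(T_k<\mu k-x\sqrt{k})\le c_1 e^{-c_2 x}$ for $x<c_3 k$, with $\mu=\lim_k T_k/k>0$. Choosing $x=(\mu/2)\sqrt{k}$ gives $T_k\ge (\mu/2)k$ with probability at least $1-c_1 e^{-c_2'\sqrt{k}}$, and undoing the rescaling yields
\begin{equation*}
T(\radius/2)\;\ge\;\frac{T_k}{\size}\;\ge\;\frac{\mu k}{2\size}\;=\;\Theta\!\left(\frac{\radius}{w\size}\right)\;=\;\Theta\!\left(\frac{\radius}{\size^{3/2}}\right),
\end{equation*}
supplying the constant $c''$. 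Since $\radius\ge \size^{3}$ and $w=\Theta(\sqrt{\size})$, one checks $\sqrt{k}=\Theta(\sqrt{\radius/w})\ge \radius^{1/3}$, so a single start-target pair fails with probability at most $c_1 e^{-c'\radius^{1/3}}$. A union bound over the $\Theta(\radius^2)$ choices of starting and target blocks produces the prefactor $c\radius^2$ in the statement.

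The main obstacle is the first step: the weights $t(v)$ are not literally i.i.d., because each agent's neighborhood has radius $w$, so geometrically adjacent blocks share both agents and Poisson clocks. I would resolve this by choosing the block side to be a sufficiently large constant multiple of $w$ and defining $t(v)$ in terms of events confined to a strict interior sub-block of $v$, which decouples non-adjacent blocks; the remaining short-range dependence among neighbors can be handled either by a two-colour (odd/even) decomposition exploiting the memoryless property of the Poisson clocks, or via the FKG-type correlation inequality mentioned in the Techniques subsection, producing stochastic domination of the true passage time by an i.i.d.\ first-passage model with the same marginals. The extra constants introduced by this coarsening are absorbed into $c$, $c'$ and $c''$.
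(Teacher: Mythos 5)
Your proposal follows essentially the same route as the paper: renormalize into $w$-blocks, dominate the spread of unhappiness by a first-passage percolation with weights on the scale $1/\size$, apply Kesten's concentration theorem with $k=\Theta(\radius/\sqrt{\size})$, and take a union bound over the $O(\radius^2)$ pairs of boundary blocks. The dependence issue you flag at the end is actually moot in the paper's construction: the blocks partition the grid, and in the dominating process the weight of a block is just the waiting time to the next Poisson ring among its own agents (the pessimistic rule that a single flip in any adjacent block infects the whole block removes the neighborhood-overlap dependence), so the weights are genuinely i.i.d.\ exponentials of mean $1/\size$ and no decoupling or FKG argument is needed.
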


\begin{figure}[!t]
\centering
\includegraphics[width=\textwidth]{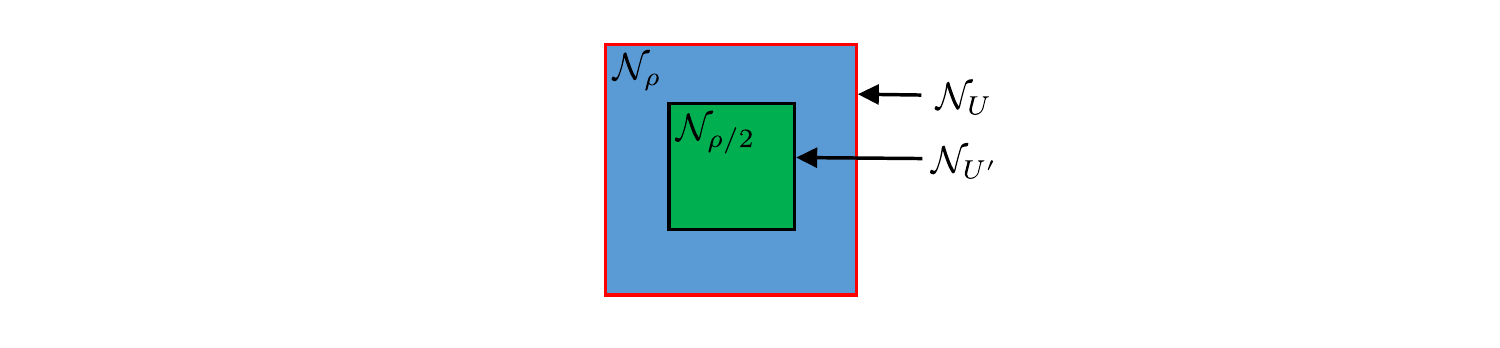}
\caption{Neighborhoods described in the proof of Lemma~\ref{Lemma:Unhappy_growth}.}
\label{fig:Unhappy_growth}
\end{figure}

\begin{proof}
We renormalize the grid into $w$-blocks starting with the block at the center of $\mathcal{N}_{\radius}$ and construct $G'_n$ as described above. Let $\mathcal{N}_U$ be the set of all the $w$-blocks  on the outside boundary of $\mathcal{N}_{\radius}$ (these are the blocks that are connected to $\mathcal{N}_{\radius}$ in $G'_n$). In order to find an upper bound for the speed of the spread of the unhappy agents, assume that all the (+1) agents in a $w$-block will become unhappy with a single flip in one of its eight $l_\infty$ closest neighboring $w$-blocks. Also assume that all the agents in $\mathcal{N}_U$ are unhappy of type (+1). Finally, denote the $w$-blocks on the outside boundary of $\mathcal{N}_{{\radius}/2}$ with $\mathcal{N}_{U'}$.


We show that the speed of the spread of unhappy blocks, i.e., $w$-blocks containing unhappy agents, is independent of the configuration of the agents outside the neighborhood $\mathcal{N}_{\radius} \cup \mathcal{N}_U$ and then  use Theorem~\ref{Thrm:Kesten_thrm1} to obtain the final result.

 Consider $G'_n$ in which each vertex is a $w$-block as described above. 
Here we attach i.i.d. random variables $\{t(v):v\in G'_n\}$ to each vertex. Let these random variables have a common exponential distribution with mean $1/{\size}$. 
 Consider a path $\eta$ consisting of the verticies $v_1,...,v_k$ and the passage time $T^*(\eta) = \sum_{i=1}^k t (v_i)$. Let 
\begin{align*} 
T' = \inf_{\eta \in  (\mathcal{N}_U \leftrightarrow   \mathcal{N}_{U'})} T^*(\eta),
\end{align*} 
where $(\mathcal{N}_U \leftrightarrow   \mathcal{N}_{U'})$ is the set of paths connecting $\mathcal{N}_U$ to  $\mathcal{N}_{U'}$.
It is easy to see that $T' \le T({\radius}/2)$.

We now argue that regardless of the configuration of agents in the blocks of the graph $G'_n$ containing $\mathcal{N}_{\radius} \cup \mathcal{N}_U$, the path with the smallest $T^*(\eta)$ consists only of $w$-blocks inside $\mathcal{N}_{\radius} \cup \mathcal{N}_U$. Assume that this is not the case, then a $w$-block is in $T^*(\eta)$ but it is not in $\mathcal{N}_{\radius} \cup \mathcal{N}_U$. There needs to be a path from this block to a block in $\mathcal{N}_{U'}$.  This path has to cross the $\mathcal{N}_U$, and as a result there is another path from $\mathcal{N}_U$ to $\mathcal{N}_{U'}$ that is at least as short as $\eta$. It follows that the shortest path from $\mathcal{N}_U$ to $\mathcal{N}_{U'}$ only consists of blocks from $\mathcal{N}_{\radius}$. 

Now we can assume that $\mathcal{N}_{\radius} \cup \mathcal{N}_U$  is in an infinite lattice of blocks $\mathbb{L}$, where i.i.d. random variables $\{t(v):v\in\mathbb{L}\}$ are attached to its nodes.
Let $B_{U}$ and $B_{U'}$ be two blocks in $\mathcal{N}_{U}$ and $\mathcal{N}_{U'}$ that have the minimum $l_1$ distance. We let
\begin{align*}
T'' = \inf_{\eta \in  (B_U \leftrightarrow   B_{U'})} T^*(\eta).
\end{align*} 
By Theorem~\ref{Thrm:Kesten_thrm1} and  since the neighborhood is divided into $w$-blocks so that $k$ is proportional to ${\radius}/\sqrt{{\size}}$, we conclude that there exist a constant $c'' \in \mathbb{R}^+$ such that for any pair of $w$-blocks in $\mathcal{N}_U$ and $\mathcal{N}_{U'}$, there exist constants $c,c' \in \mathbb{R}^+$ such that for all ${\size \ge 1}$
\begin{align*}
P\left(T'' \le c''\frac{{\radius}}{{\size}^{3/2}} \given[\Big] A \right) &\le P\left(T'' \le \frac{{\radius}}{{\size}^{1/2}}\frac{\mu}{{\size}}-x\sqrt{\frac{{\radius}}{\sqrt{{\size}}}} \given[\Big] A\right) \\
&\le P\left(T'' \le \mathbb{E}[T'']-x\sqrt{\frac{{\radius}}{\sqrt{{\size}}}} \given[\Big] A \right) \\
& \le  ce^{-c'{({\radius})^{1/3}}},
\end{align*} 
where  $x={\radius}^{1/3}$ and we have used the fact that if for a  first passage percolation process with exponential distribution with unit mean we have $\lim_{n\rightarrow \infty}{T_n}/{n} = \mu$, then for the passage times of our process, which is assumed  to be exponential with mean ${1}/{{\size}}$, we have $\lim_{n\rightarrow \infty} {T_n}/{n} = {\mu}/{{\size}}$. Finally, by the union bound, the probability that any of the unhappy agents in $\mathcal{N}_U$ affects an agent in $\mathcal{N}_{U'}$ before or at time $c''{{\radius}}/{{\size}^{3/2}}$ is at most $c{(4{\radius})}{(8{\radius})}e^{-c'({\radius})^{1/3}}$. Hence, we have 
\begin{align*}
P\left(T({\radius}/2)>c''\frac{{\radius}}{{\size}^{3/2}}\given[\Big] A\right) &\ge P\left(T'>c''\frac{{\radius}}{{\size}^{3/2}}\given[\Big] A\right) \\
&> 1 - c{(4{\radius})}{(8{\radius})}e^{-c'({\radius})^{1/3}},
\end{align*}
which tends to one as $\size \rightarrow \infty$.   
\end{proof}

Call  a \emph{region of expansion}   any neighborhood whose configuration is such that by placing a neighborhood $\mathcal{N}_{w/2}$ of type (+1) agents anywhere inside it, all the (-1) agents on the outside boundary of $\mathcal{N}_{w/2}$   become unhappy with probability one.

\begin{lemma*} \label{Lemma:monoch_spread_1}
Let $\tau \in (\tau_1,1/2)$ and let $\mathcal{N}_{4r}$ be a neighborhood of radius $4r=2^{[1-H(\tau')]{\size}/2-o({\size})}$ such that ${\radius} > 8r$. Let
\begin{align*}
D = \left\{\text{$\forall t < T({\radius}/2), \; \mathcal{N}_{4r}${ \normalfont is a region of expansion}}\right\},
\end{align*}
then $D$ occurs w.h.p.
\end{lemma*}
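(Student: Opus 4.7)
My plan is to split the proof into two parts: first, show that at $t = 0$ the configuration in $\mathcal{N}_{4r}$ satisfies the region-of-expansion property with high probability, and then argue that this property is preserved throughout $[0, T(\rho/2))$ by a monotonicity argument. Observe that on the event $\{T(\rho/2) = 0\}$ the statement $D$ holds vacuously, so it is enough to prove the high-probability bound in the first step and then check preservation on $\{T(\rho/2) > 0\}$.

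Unpacking the definition, for a fixed center $c \in \mathcal{N}_{4r}$ and a grid location $v$ on the outside boundary of the hypothetical block $\mathcal{N}_{w/2}(c)$, the requirement reduces to the statement that, whenever $v$ is currently of type $(-1)$, the number of $(-1)$ agents in the ``free set'' $\mathcal{N}(v) \setminus \mathcal{N}_{w/2}(c)$ is strictly less than $\tau N$, since the overlap $\mathcal{N}(v) \cap \mathcal{N}_{w/2}(c)$ would become all $(+1)$ under the hypothetical placement. A short geometric computation shows that, whether $v$ sits on a side or on a corner of the block boundary, the free set has size $(3/4)N + O(w)$. At $t = 0$ the number of $(-1)$ agents in the free set is binomially distributed with mean $(3/8)N + O(w)$, and the Chernoff bound in entropy form gives $P(\mathrm{count} \geq \tau N) \leq 2^{-(3/4)[1 - H(4\tau/3)]N + o(N)}$. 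The union bound over all $(c,v)$ pairs uses $|\mathcal{N}_{4r}| \cdot O(w) = 2^{[1 - H(\tau')]N + o(N)}$ choices, for a total failure probability of $2^{\{[1 - H(\tau')] - (3/4)[1 - H(4\tau/3)]\}N + o(N)}$. Since $\tau' \to \tau$, the defining equation of $\tau_1$ makes this exponent strictly negative for every $\tau \in (\tau_1, 1/2)$, so the property holds at $t = 0$ with high probability.

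For the dynamical step on $\{T(\rho/2) > 0\}$, I exploit the fact that by the definition of $T(\rho/2)$, any hypothetical $(+1)$ agent placed anywhere in $\mathcal{N}_{\rho/2}$ would be happy throughout $[0, T(\rho/2))$. Since $\rho > 8r$ implies that every neighborhood $\mathcal{N}(v)$ entering the region-of-expansion check lies inside $\mathcal{N}_{\rho/2}$, no actual $(+1)$ agent in the area under consideration can be unhappy, and hence no $(+1)$ flip occurs there. The only flips that happen are $(-1) \to (+1)$, and each such flip either makes $v$ irrelevant (if $v$ itself flips and is no longer a boundary $(-1)$ agent to be checked) or strictly decreases the $(-1)$ count in $\mathcal{N}(v) \setminus \mathcal{N}_{w/2}(c)$; in both cases the strict inequality is preserved.

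The main technical obstacle is the balance of exponents in the initial step: the combinatorial count $2^{[1-H(\tau')]N}$ of $(c,v)$ pairs must be strictly dominated by the concentration tail $2^{-(3/4)[1-H(4\tau/3)]N}$, and the two exponents coincide exactly at $\tau = \tau_1$. This is why the sharp entropy form of the Chernoff bound, rather than the looser Hoeffding bound, is essential here, and why the strict inequality $\tau > \tau_1$ furnishes the positive exponential gap that absorbs the $o(N)$ error terms and delivers the w.h.p.\ conclusion.
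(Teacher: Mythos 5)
Your proof is correct and follows essentially the same route as the paper: both reduce the region-of-expansion condition to the event that each prospective boundary $(-1)$ agent sees fewer than $\tau N$ agents of its type in the roughly $\tfrac{3}{4}N$ cells outside the hypothetical monochromatic block, bound this by $2^{-\frac{3}{4}[1-H(4\tau/3)]N+o(N)}$ via the entropy form of the Chernoff bound, and combine over the $2^{[1-H(\tau')]N+o(N)}$ placements using the defining equation of $\tau_1$ to get a strictly negative exponent. The only differences are cosmetic: you combine the events with a union bound where the paper invokes the FKG inequality (both yield the same order), and you spell out the time-monotonicity reduction (no $(+1)$ agent in the relevant region can flip before $T(\rho/2)$, and $(-1)\to(+1)$ flips only help) that the paper dispatches in a single sentence.
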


\begin{proof}
Since $D$ is increasing in a flip of a (-1) agent, we can focus on the case when the initial configuration is preserved. 
In this case, for the configuration to be expandable we need to make sure that any agent right outside the boundary of a monochromatic $w$-block will be unhappy. We  obtain a lower bound for the probability of this event. With the same argument as in the proof of Lemma \ref{Lemma:unhappyprob} of the Appendix, a lower bound for the probability that a given agent right outside the boundary of a monochromatic neighborhood $\mathcal{N}_{w/2}$ is unhappy, is 
\begin{align*}
1-2^{-[1-H(\frac{4}{3}\tau)]\frac{3}{4}{\size}-o({\size})}.
\end{align*}
Let us denote the latter event for the (-1) agents right outside the boundary of $\mathcal{N}_{w/2}$ by $A_1, ... , A_L$, where $L$ is the number of (-1) agents right outside the boundary of  $\mathcal{N}_{w/2}$. It is easy to see that these   are all increasing events and using  the FKG inequality we conclude that
\begin{align*}
P(A_1\cap ... \cap A_L) \ge P(A_1)P(A_2)...P(A_L) \ge (1-2^{-[1-H(\frac{4}{3}\tau)]\frac{3}{4}{\size}-o({\size})})^L.
\end{align*}
Now, for any $v \in \mathcal{N}_{4r}$  let $B_v$  be the event that all the (-1) agents outside $\mathcal{N}_{w/2}$ centered at $v$ are unhappy. It is also easy to see that $B_v$'s are increasing events. Hence, with another application of the FKG inequality we have
\begin{align*}
P\left(\bigcap\limits_{v \in \mathcal{N}_{4r}} B_v\right) \ge \left(1-2^{-[1-H(\frac{4}{3}\tau)]\frac{3}{4}{\size}-o({\size})}\right)^{2^{[1-H(\tau)]{\size}+o({\size})}},
\end{align*}
where we have used the fact that $L < \size$.  
\end{proof}
 



 Consider a disc of radius $r$,   centered at an agent such that all the agents inside the disc are of the same type. It is easy to see that if $r$ is sufficiently large then all the agents inside the disc will remain happy regardless of the configuration of the agents outside the disc. Lemma 6 in \cite{immorlica2015exponential} shows that for $r>w^3$ this would be the case for sufficiently large $w$. Here we state a similar lemma but for an annulus, i.e., a firewall, without proof.
 
\begin{lemma*}   \label{Lemma:firewall}
Let $A_r(u)$ be the set of agents contained in an annulus of outer radius $r \ge w^3$ and of width $\sqrt{2} w$ centered at $u$. For all $\tau \in (\tau_2,1/2)$ and for a sufficiently large constant $w$, if $A_r(u)$ is monochromatic at time $t$,  then it will remain monochromatic at all times  $t'>t$. 
\end{lemma*}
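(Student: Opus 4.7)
The plan is to prove the stronger claim that every agent $v \in A_r(u)$ sees at least $\size/2 + \Omega(w)$ same-type neighbors lying inside the annulus itself, independently of the configuration outside. Since $\tau < 1/2$, this will force $s(v) > \tau$ for all sufficiently large $w$, so $v$ is (and remains) happy; hence no agent in $A_r(u)$ ever flips and $A_r(u)$ is preserved as a monochromatic set for all $t' > t$.

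The main step is a geometric count of the lattice points in $\mathcal{N}(v) \cap A_r(u)$. On the scale of $\mathcal{N}(v)$ the two boundary circles of the annulus are nearly straight, since the sagitta of a chord of length $2w$ in a circle of radius at least $r \ge w^3$ is bounded by $w^2/(2r) \le 1/(2w)$, so each boundary deviates from its tangent line by at most $O(1/w)$ inside $\mathcal{N}(v)$. Aligning coordinates so that the tangent to the outer circle at the point closest to $v$ is horizontal, and letting $y_0 \in [0,\sqrt{2}\,w]$ denote the offset of $v$ above the inner boundary line, the intersection $\mathcal{N}(v) \cap A_r(u)$ is, up to these $O(1)$ sagitta corrections, a rectangle of width $2w+1$ and height
\begin{align*}
h(y_0) := \min\bigl(\sqrt{2}\,w,\, y_0 + w\bigr) - \max\bigl(0,\, y_0 - w\bigr).
\end{align*}
A short case analysis over $y_0 \in [0,\sqrt{2}\,w]$ shows $h(y_0) \ge w$, with the minimum attained exactly when $v$ lies on one of the two boundary circles. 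Counting lattice points therefore gives
\begin{align*}
|\mathcal{N}(v) \cap A_r(u)| \;\ge\; (2w+1)(w+1) - O(1) \;=\; \frac{\size}{2} + w - O(1).
\end{align*}
Since $A_r(u)$ is monochromatic at time $t$ and contains $v$, all these lattice points carry $v$'s type, so $s(v) \ge 1/2 + (w - O(1))/\size > \tau$ whenever $w$ exceeds a constant depending on $\tau \in (\tau_2,1/2)$.

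The dynamical conclusion is then routine. At time $t$ no agent of $A_r(u)$ is unhappy, so the next Poisson ring inside the annulus cannot trigger a flip; any Poisson ring at an agent outside $A_r(u)$ leaves $\mathcal{N}(v) \cap A_r(u)$ untouched for every $v \in A_r(u)$, hence leaves $s(v)$ unchanged and keeps each such $v$ happy. Induction on the ordered sequence of flip events then preserves the monochromaticity of $A_r(u)$ at every $t' > t$. The only real obstacle is the geometric bookkeeping — managing the sagitta correction, the lattice-versus-continuous discrepancy, and the behaviour at the corners of the discrete $l_\infty$-square — but the available slack $(1/2-\tau)\size = \Theta(w^2)$ dominates each of these $O(w)$ corrections for large $w$, so this is routine rather than substantive.
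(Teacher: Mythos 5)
The paper does not actually prove this lemma --- it states it without proof, deferring to the disc version (Lemma~6 of Immorlica et al.) --- so your strategy is the intended one: a deterministic count showing every agent of $A_r(u)$ has at least $\tau\size$ same-type neighbors lying in the annulus itself, followed by induction over flip events. The dynamical half of your argument is fine. The gap is in the geometric count. You ``align coordinates so that the tangent is horizontal'' and then treat $\mathcal{N}(v)\cap A_r(u)$ as a $(2w+1)\times h(y_0)$ rectangle, but the square $\mathcal{N}(v)$ is axis-aligned with the \emph{lattice}, while the radial direction of the annulus at $v$ can point anywhere; you cannot have both the tangent and the square axis-aligned. The binding case is the diagonal direction, and there your bound fails. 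Concretely, put $v$ at the origin on the outer circle with $u=-(r/\sqrt2)(1,1)$. Then $\|y-u\|\le r$ is equivalent to $y_1+y_2\le -\|y\|^2/(\sqrt2\,r)$, and since $\|y\|^2/(\sqrt2\,r)\le \sqrt2/w<1$ on $\mathcal{N}(v)$ while $y_1+y_2$ is an integer, the lattice points of $\mathcal{N}(v)$ in the annulus are exactly $\{0\}\cup\{y:y_1+y_2\le-1\}$ (the inner constraint excludes nothing, as every point of the square is within Euclidean distance $\sqrt2\,w$ of $v$). That set has cardinality $\size/2-w+1/2<\size/2$, so the claimed $\size/2+\Omega(w)$ is false and $s(v)>1/2$ does not hold.

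The lemma nevertheless survives, and your closing remark about slack is the correct repair, but it needs to be made the main estimate rather than a safety net. The right statement is a two-sided cap bound valid for \emph{every} orientation: the excluded points are two caps of the square cut by lines perpendicular to the radial direction $\hat n$ at distances $a$ and $\sqrt2\,w-a$ from $v$ (plus $O(w)$ lattice points of curvature/sliver corrections, since the sagitta is $O(1/w)$), and one checks that the function $\phi(a)=A(a)+A(\sqrt2\,w-a)$, where $A(s)$ is the area of the square beyond distance $s$ in direction $\hat n$, is maximized at the endpoints with value exactly $2w^2=\tfrac12\,\mathrm{Area}$ --- this is precisely why the width is $\sqrt2\,w$ and not $w$ (a width-$w$ annulus at the diagonal orientation would leave only about $0.457\size$ points, which already fails for $\tau$ near $1/2$). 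This yields the uniform bound $|\mathcal{N}(v)\cap A_r(u)|\ge \size/2-O(w)\ge \tau\size$ for $w$ large, since $\tau<1/2$ is a fixed constant; the rest of your argument then goes through unchanged.
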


\begin{lemma*}\label{Lemma:fw}
Let $\mathcal{N}_{\radius}$, $\mathcal{N}_{{\radius}/2}$, $\mathcal{N}_{4r}$, and $\mathcal{N}_{r}$ be  all centered at  $u$  with  ${\radius} = 2^{[1-H(\tau')]{\size}/2}$ and $r = 2^{[1-H(\tau')]{\size}/2-o({\size})}$, $r<{\radius}/8$. Let $u^+$ denote an arbitrary (+1) agent, $T({\radius})$ be as defined in (\ref{Tinfdef}), and $\kappa$ be such that  $\kappa  r {\size}^{1/2}$ is the sum of the number of agents in a firewall with radius $2r$ and the number of agents in a line of width $w+1$ that connects the center to the boundary of the firewall and includes   $\mathcal{N}_{w/2}$ at its center. Conditioned on the following events, w.h.p. the monochromatic region of $u$ will have at least radius $r$.
\begin{enumerate} 
\item $A = \left\{\forall v\in \mathcal{N}_{\radius}, \;  u^{+}   \mbox{ {\normalfont would be happy at the location of $v$ at $t=0$}}\right\}, $
\item $B = \{T({\radius}/2) > 2\kappa  r {\size}^{1/2} \}$,
\item $C = \left\{\text{$\mathcal{N}_r$  {\normalfont contains an expandable radical region at }$t=0$}\right\}$,
\item $D = \text{\{$\forall t < T({\radius}/2), \;\; \mathcal{N}_{4r}$  {\normalfont is a region of expansion}\}}$.
\end{enumerate}
\end{lemma*}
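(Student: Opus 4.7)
The plan is to exhibit, on $A\cap B\cap C\cap D$, a deterministic itinerary of at most $\kappa r{\size}^{1/2}$ flips whose realization by the Poisson dynamics produces a monochromatic $(+1)$ firewall of radius $2r$ centered at $u$, together with a $(+1)$ corridor joining $u$ to that firewall. Once such a structure is in place, Lemma~\ref{Lemma:firewall} freezes the firewall forever, sealing the interior disk of radius $2r$ from the rest of the grid; the region of expansion property supplied by $D$ then forces the corridor to sweep out that entire disk as monochromatic $(+1)$, so the monochromatic region of $u$ has radius at least $2r>r$.

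First I would construct the itinerary. By $C$, $\mathcal{N}_r$ contains an expandable radical region; invoking Lemma~\ref{Lemma:Trigger} under $A$ yields a first sub-sequence of at most $(w+1)^2$ flips that turns the central $\mathcal{N}_{w/2}$ of this radical region into a monochromatic $(+1)$ seed. From this seed I would greedily extend the itinerary one $(-1)$ boundary agent at a time, choosing the order so as to first grow the monochromatic region along the corridor of width $w+1$ joining $u$ to the eventual firewall boundary (passing through the seed) and then march around the annular firewall at radius $2r$. Validity of each scheduled flip is guaranteed by $D$: the scheduled target is a $(-1)$ agent sitting on the boundary of some already-monochromatic $(+1)$ block $\mathcal{N}_{w/2}$ contained in $\mathcal{N}_{4r}$, hence is unhappy, and since $\tau<1/2$ its flip to $(+1)$ always leaves it happy. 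By the definition of $\kappa$, the itinerary has at most $\kappa r{\size}^{1/2}$ entries.

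Next I would bound the realization time. At each step the scheduled target has already been made unhappy by the preceding flips, so the additional wait until its rate-one Poisson clock rings is stochastically dominated by an independent $\mathrm{Exp}(1)$ random variable; the total completion time is therefore dominated by a $\mathrm{Gamma}(\kappa r{\size}^{1/2},1)$ variable, and a standard Chernoff bound places it below $2\kappa r{\size}^{1/2}$ with probability at least $1-e^{-\Omega(r{\size}^{1/2})}$. Event $B$ gives $T({\radius}/2)>2\kappa r{\size}^{1/2}$, so throughout this window no $(+1)$ agent in $\mathcal{N}_{{\radius}/2}$ is ever unhappy, ruling out any $(+1)\to(-1)$ flip that could otherwise invalidate a scheduled step; concurrent $(-1)\to(+1)$ flips can only help the construction.

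Finally I would close the argument. After the itinerary is realized, Lemma~\ref{Lemma:firewall} keeps the firewall static, so the $(+1)$-counts in the neighborhood of every agent inside $\mathcal{N}_{2r}(u)$ are thereafter monotone non-decreasing; hence no new $(+1)\to(-1)$ flips can occur inside the sealed disk and only $(-1)\to(+1)$ flips are possible there. In particular the configuration of $\mathcal{N}_{4r}$ restricted to the sealed disk continues to differ from its $t=0$ version only by $(-1)\to(+1)$ flips, which preserve the region of expansion property of $D$, so that property persists inside $\mathcal{N}_{2r}(u)$ for all times and pushes the corridor to absorb every remaining $(-1)$ in the disk. The main obstacle I expect is the timing paragraph, specifically an inductive verification that every scheduled flip still sees an unhappy target at the instant it is triggered despite parallel random flips elsewhere; this should reduce cleanly to applying $D$ after each step together with the $\mathrm{Exp}(1)$ domination of the waiting times.
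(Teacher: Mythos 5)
Your proposal follows essentially the same route as the paper's proof: build the seed via Lemma~\ref{Lemma:Trigger} under $C$, grow it along the corridor and annulus using $D$, dominate the completion time by a sum of $\kappa r \size^{1/2}$ independent $\mathrm{Exp}(1)$ waiting times, compare against the window provided by $B$, and invoke Lemma~\ref{Lemma:firewall} plus the region-of-expansion property to make the sealed interior monochromatic. The only difference is cosmetic: the paper bounds the Gamma tail with Chebyshev's inequality (giving $O(1/(r\sqrt{\size}))$, still w.h.p.\ since $r$ is exponential in $\size$) where you use a Chernoff bound, and your write-up spells out the flip itinerary in more detail than the paper does.
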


\begin{figure}[!t]
\centering
\includegraphics[width=\textwidth]{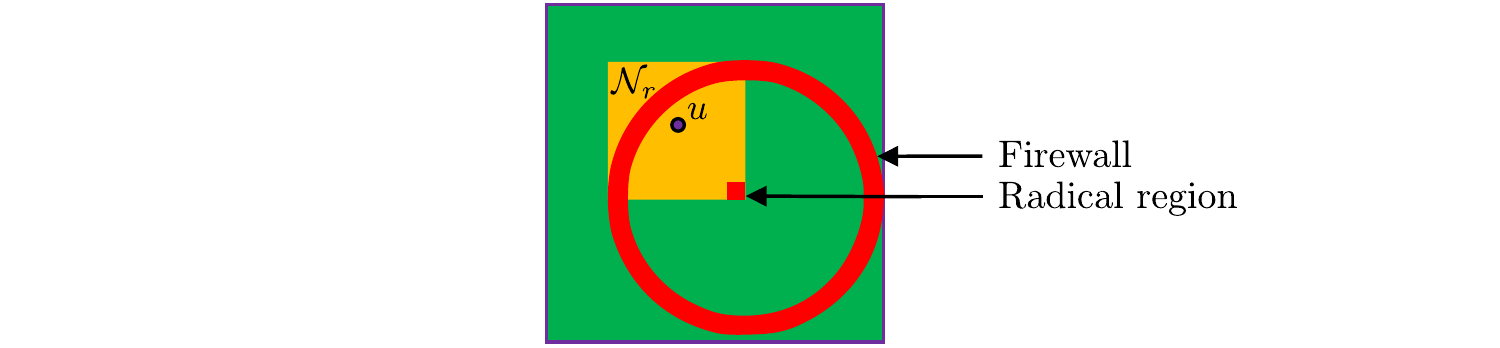}
\caption{Neighborhoods described in the proof of Lemma~\ref{Lemma:fw}.}
\label{fig:thrm_fw}
\end{figure}

\begin{proof}

Conditioned on events $A, B$, $C$, and $D$,  an expandable radical region contained in $\mathcal{N}_r$ can lead to the formation of a firewall of radius $2r$ centered at this region. Let $M(r)$ denote the event that the radius of the monochromatic region of $u$ is at least $r$. Let $T_f$ be the time at which this firewall forms, meaning that all the agents contained in the annulus become of the same type. We have
\begin{align*}
P\left(M(r)\given[\Big]A, B , C , D\right) \ge P\left(T_f<2\kappa  r \sqrt{{\size}}  \given[\Big]A,B,C,D\right) 
\end{align*}
Let $T'_f$ be the sum of $\kappa  r {\size}^{1/2}$ exponential random variables with mean one. It is easy to see that $T'_f$ is an upper bound for the time it takes until the firewall is formed, since the worst case scenario for the formation of the firewall is when the $\kappa  r {\size}^{1/2}$ agents flip to (+1), one by one. Hence, we have
\begin{align*}
 P\left(T_f<2\kappa  r \sqrt{{\size}}  \given[\Big]A,B,C,D\right)  \ge P\left(T'_f<2\kappa  r \sqrt{{\size}}\right). 
\end{align*}
Next, we bound this probability. We have
\begin{align*}
P\left(T'_f\ge 2\kappa  r \sqrt{{\size}}\right) &\le P\left(|T'_f-\mathbb{E}[T'_f]|\ge \kappa r\sqrt{{\size}}\right).
\end{align*}
By Chebyshev's inequality, we  have
\begin{align*}
P\left(T'_f\ge 2\kappa  r \sqrt{{\size}}\right) = O\left(\frac{\mbox{Var }(T'_f)}{(r\sqrt{{\size}})^2}\right) = O\left(\frac{r\sqrt{{\size}}}{(r\sqrt{{\size}})^2}\right) = O\left(\frac{1}{r\sqrt{{\size}}}\right).
\end{align*}
It follows that w.h.p. agent $u$ will be trapped inside a firewall together with an expandable radical region and the interior of the firewall will be a region of expansion until the end of the process. Hence this interior will eventually become monochromatic and, as a result, agent $u$ will have a monochromatic region of size at least proportional to $r^2$, as desired.  
\end{proof} 

We can now  give the proof for the first part of Theorem~\ref{Thrm:first_theorem}. 

\noindent {\bf Proof of Theorem~\ref{Thrm:first_theorem}} {{\normalfont (for $\tau_1< \tau< 1/2$)}}
First, we derive the lower bound in the theorem letting 
\begin{equation}
a(\tau) = \left[1-(2\epsilon'+\epsilon'^2)\right]\left[1-H(\tau')\right],
\end{equation}
where $\epsilon' > f(\tau)$, and $\tau'=(\tau {\size} -2)/({\size}-1)$.

We consider  neighborhoods $\mathcal{N}_{\radius}$,  $\mathcal{N}_{\radius/2}$,  and  $\mathcal{N}_r$,  with ${\radius}  =  2^{[1-H(\tau')]{\size}/2}$ and $r< \radius/8$,
all centered at node $u$   as depicted in Figure~\ref{fig:thrm1}.
We let
$u^{+}$ be an arbitrary (+1) agent, and consider the following event in the initial configuration
\begin{align}
A = \left\{\forall v\in \mathcal{N}_{\radius}, \;  u^{+}   \mbox{ {\normalfont would be happy at the location of $v$ at  $t=0$}}\right\}.
\end{align}
By Lemma \ref{Lemma:R_unhappy} of the Appendix, we have
\begin{align}
P(A) \rightarrow 1,\  \mbox{ as } \    {\size} \rightarrow \infty.
\label{pa}
\end{align}
We then consider a firewall of radius $2r$ centered anywhere inside $\mathcal{N}_r$, let $\kappa >0$ so that $\kappa  r {\size}^{1/2}$ is the sum of the number of agents in it and the number of agents in a line of width $w+1$ that connects its center to its boundary and includes $\mathcal{N}_{w/2}$ at its center. Consider the event
\begin{align*} 
B = \left\{\text{$T({\radius}/2) > 2\kappa  r {\size}^{1/2}$}\right\},
\end{align*}
where $T({\radius})$ is defined in (\ref{Tinfdef}).
By  Lemma~\ref{Lemma:Unhappy_growth}, we can choose $r$ proportional to  ${\radius}/({\size}^2)$ so  that
\begin{align}
P(B|A) \rightarrow 1,\  \mbox{ as } \    {\size} \rightarrow \infty.
\label{pea}
\end{align}
With this choice, we also have
\begin{align*}
r &= 2^{[1-H(\tau')]{\size}/2-o({\size})},
\end{align*}
and if we consider the event
\begin{align*}
C = \left\{\text{$\mathcal{N}_r$  contains an expandable radical region at $t=0$}\right\},
\end{align*}
by Lemma \ref{Lemma:expandable}, we have  for $N$ sufficiently large
\begin{align}
P(C) \geq 2^{-[1-H(\tau')](2\epsilon'+\epsilon'^2){\size} -o({\size})}.
\label{pc}
\end{align}
Consider a neighborhood $\mathcal{N}_{4r}$ also centered at $u$ and the event
\begin{align*} 
D = \left\{\text{$\forall t < T({\radius}/2), \; \mathcal{N}_{4r}$  is a region of expansion}\right\}.
\end{align*}
By Lemma \ref{Lemma:monoch_spread_1}, we have
\begin{align}
P(D) \rightarrow 1,\  \mbox{ as } \    {\size} \rightarrow \infty.
\label{pd}
\end{align}

 \begin{figure}[!t]
\centering
\includegraphics[width=\textwidth]{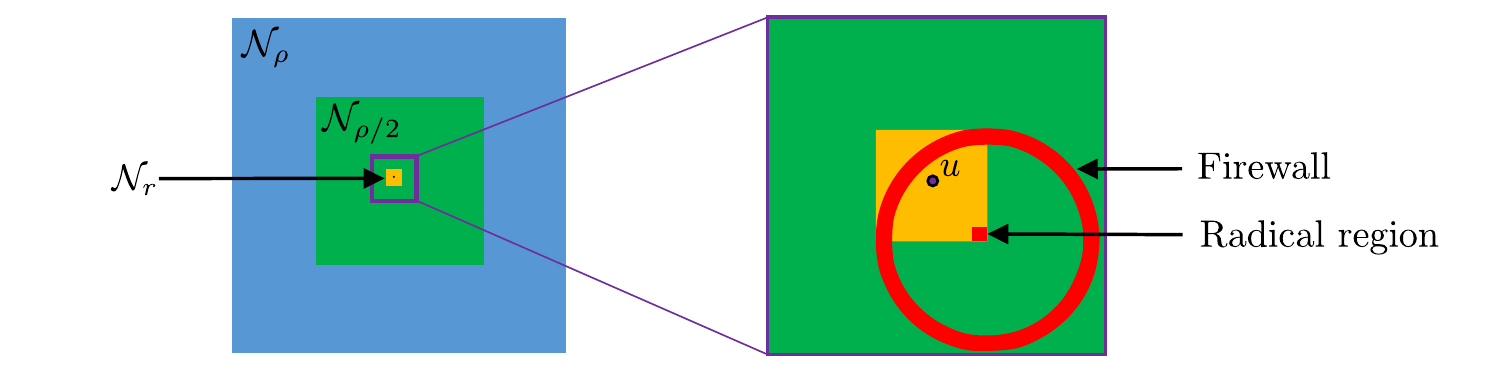}
\caption{Neighborhoods described in the proof of Theorem~\ref{Thrm:first_theorem}.}
\label{fig:thrm1}
\end{figure}

We now note that  $A$, $B$, $C$, $D$ are increasing events with respect to a partial ordering on their outcomes.
More precisely, consider two outcomes of the sample space $\omega, \omega' \in \Omega$ such that $\omega, \omega' \in E$ where $E$ is an event. We define a partial ordering on the outcomes such that $\omega' \ge \omega$ if for all time steps, the set of agents of type (+1) in $\omega$ is a subset of the set of agents of type (+1) in $\omega'$. Event $E$ is increasing if $1_E(\omega') \ge 1_E(\omega)$ where $1_E$ is the indicator function of the event $E$. According to this definition, $A$, $B$, $C$,   $D$ are increasing events.
By combining (\ref{pa}), (\ref{pea}), (\ref{pc}), and (\ref{pd}), and using  a version of the FKG inequality adapted  to our dynamic process, stated in Lemma~\ref{Lemma:FKG_Harris} of the Appendix, it follows  that for ${\size}$ sufficiently large 
\begin{align}
P(A \cap B \cap C \cap D) &\ge P(A)P(B)P(C)P(D) \nonumber \\
&\ge P(A)P(B\cap A)P(C)P(D) \nonumber \\
&=  P(B|A)[P(A)]^2P(C)P(D) \nonumber \\
&=  2^{-[1-H(\tau')][2\epsilon'+\epsilon'^2]{\size} -o({\size})}.\label{Eq:intersect1}
\end{align} 
Since by Lemma~\ref{Lemma:fw} we have that conditioning on  $A, B, C,$ and $D$, at the end of the process w.h.p. agent $u$ will be  part of a monochromatic region with radius at least $r$, it follows that (\ref{Eq:intersect1}) is also a lower bound for the probability that the monochromatic neighborhood of agent $u$ will have size of at least proportional to $r^2$. The desired lower bound on the expected size of the monochromatic region now easily follows by multiplying  (\ref{Eq:intersect1}) by the size  of a neighborhood of radius $r$.

 
Next, we show the corresponding upper bound, letting
\begin{align*} 
b(\tau)= \left[\frac{3}{2}(1+\epsilon')^2\right][1-H(\tau')], 
\end{align*}
and $\epsilon'$ and $\tau'$  as defined above. For any $\delta>0$, consider a neighborhood $\mathcal{N}_{{\radius}'}$ such that 
\begin{align*}
{\radius}' = 2^{(1+\epsilon')^2[1-H(\tau')]{\size}/2+ \delta {\size}/2},
\end{align*}
 and divide $\mathcal{N}_{{\radius}'}$ into blocks of size $\mathcal{N}_{\radius}$ in the obvious way. Let $M_{+1}$ and $M_{-1}$ denote the events of   $\mathcal{N}_{{\radius}'}$ being monochromatic of type (+1) and (-1) respectively. Also let $E_{+1}$ and $E_{-1}$ be the events of having a monochromatic region of type (+1) and  (-1) inside a firewall of radius $2r$ centered anywhere  inside $\mathcal{N}_{{\radius}'}$. 
We have  that for $N$ sufficiently large 
\begin{align}
P(M_{+1}\cup M_{-1}) &\le P(M_{+1})+P(M_{-1}) \nonumber \\
&=P(M_{+1}\cap E^C_{-1}) +  P(M_{-1}\cap E^C_{+1}) \nonumber \\
& \le P(E^C_{-1}) + P(E^C_{+1}) \nonumber \\
& = 2P(E^C_{-1}) \nonumber \\
& \le 2(1-2^{-[1-H(\tau')]\left(2\epsilon'+\epsilon'^2 \right){\size}-o({\size})})^{{\radius}'^2/{\radius}^2} \nonumber \\
  &= e^{-2^{\delta {\size} - o({\size})}}.
  \label{finalM}
\end{align}

By considering the set of all the neighborhoods  of radius $\radius'$ sharing agent $u$, by the union bound the probability that at least one of them will be monochromatic of only one type is also bounded by (\ref{finalM}).
We now consider the expected size of the monochromatic region of agent $u$, that is bounded as
\begin{align*}
\mathbb{E}[M] \le \sum_{m=1}^{n}m^2p_m,
\end{align*}
where $p_m$ denotes the probability of having a monochromatic region of size $m^2$ containing  $u$. 
We let
\begin{align*}
{\radius}'' = 2^{[(1+\epsilon')^2(1-H(\tau')]{\size}/2+ o({\size})},
\end{align*} and divide the series into two parts
\begin{align}\label{Eq:upper_bound}
\mathbb{E}[M] &\le \sum_{m=1}^{{\radius}''}m^2p_m + \sum_{m={\radius}''+1}^{n}m^2p_m \nonumber \\
&\le 2^{\left[\frac{3}{2}(1+\epsilon')^2(1-H(\tau')\right]{\size}+ o({\size})} + \sum_{m={\radius}''+1}^{n}m^2p_m,
\end{align}
where the first inequality follows from $p_m \leq 1$.
Since by (\ref{finalM})  for all $m \geq {\radius}'$, the probability of having a monochromatic region of  size $m^2$ containing  $u$ has at most a double exponentially small probability, the tail of the remaining series in (\ref{Eq:upper_bound}) converges to a constant, while    for sufficiently large $N$ the sum of the first $\radius'-\radius''-1$ terms is smaller than the first term of  (\ref{Eq:upper_bound}), and the proof is complete.  

\subsection{Almost monochromatic region}
We now turn our attention to the case where $\tau \in (\tau_2,\tau_1]$. 
We define an $\mbox{$m$-block}$ to be a neighborhood of radius $m/2$. Let $\mathcal{I}$ be the collection of sets of agents in the possible intersections of a $w$-block with an $m$-block on the grid in the initial configuration. Also, let $W_I$ be the random variable representing the number of (-1)'s in $I \in \mathcal{I}$, and ${\size}_I$ be the total number of agents in $I \in \mathcal{I}$.


\

 \begin{figure}[!t]
\centering
\includegraphics[width=2.5in]{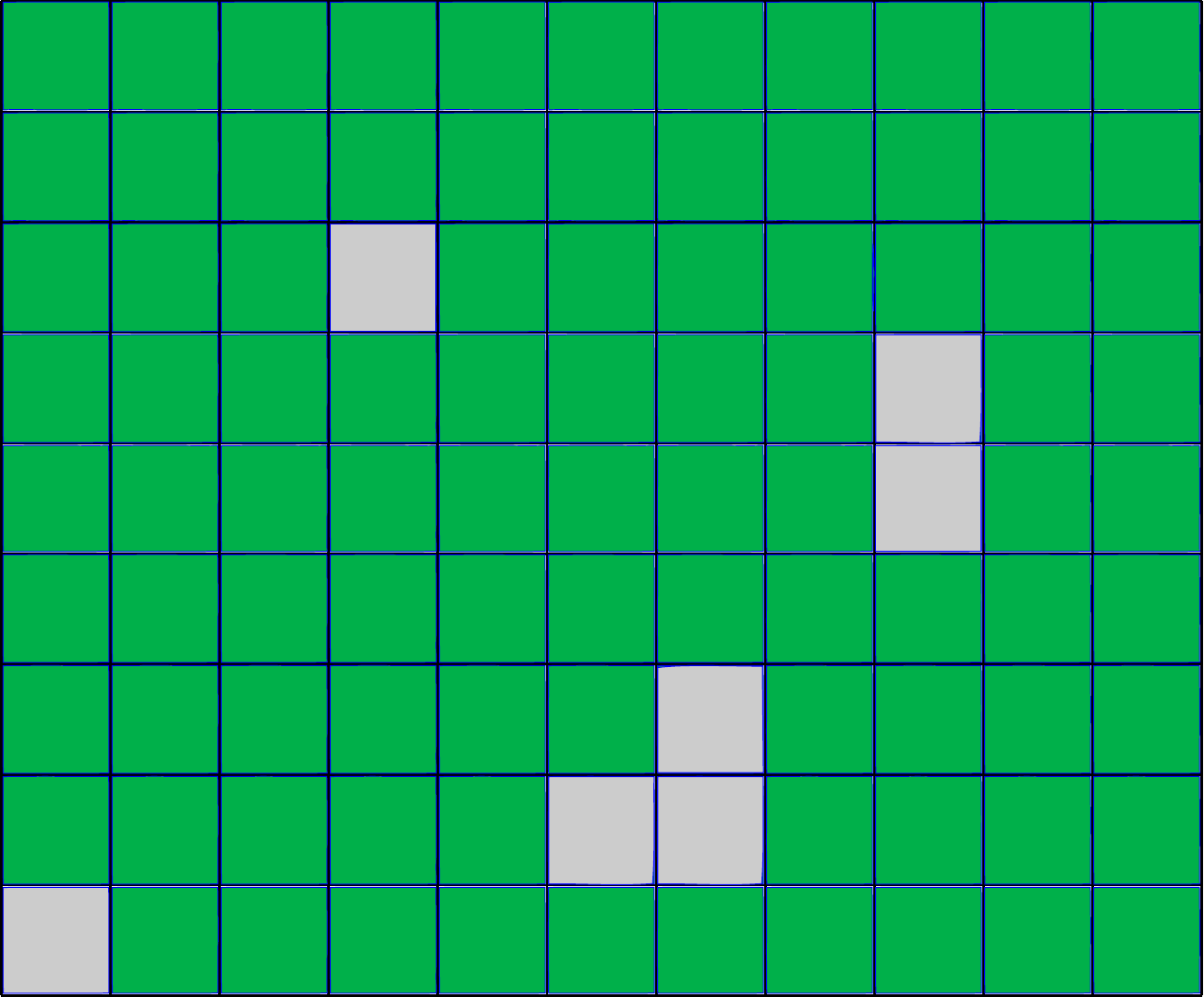}
\caption{Part of the grid renormalized into $m$-blocks.  Green and gray indicate good and bad blocks respectively. }
\label{fig:goodbad}
\end{figure}
\textit{Good block.} For any $\epsilon \in (0,1/2)$, a \textit{good $m$-block} is an $m$-block such that for all $I\in \mathcal{I}$ we have $W_I-{\size}_I/2  < {\size}^{1/2+\epsilon}$. The $m$-blocks that do not satisfy this property are called \textit{bad $m$-blocks} (see Fig. \ref{fig:goodbad}). It is easy to see that all the blocks contained in a good $m$-block are also good blocks.

\

For the following two definitions, we assume that the grid is renormalized into $m$-blocks. In this setting each $m$-block is horizontally or vertically adjacent to four other $m$-blocks.

\textit{$m$-path}. An $m$-path is an ordered set of $m$-blocks such that each pair of consecutive $m$-blocks are either horizontally or vertically adjacent and no $m$-block appears more than once in the set. The \textit{length} of the path is the number of $m$-blocks in the path. Two $m$-blocks are \textit{connected} if there exists an $m$-path between them.

\

\textit{$m$-cycle}.
An $m$-cycle is a closed path in which the last $m$-block in its ordered set is adjacent to the first $m$-block. An $m$-cycle divides the  $m$-blocks of the grid into two sets of $m$-blocks referred to as its \textit{interior} and its \textit{exterior}.

\

\textit{$r$-chemical path}.
Renormalize the grid into $\bsize$-blocks starting from the block centered at agent $u$.
To define an $r$-chemical path, consider two neighborhoods $\mathcal{N}_{3r}$ and $\mathcal{N}_{r}$ with radii $3r$ and $r$ respectively and both centered at an agent $u$.  

Let $r>12w^3$. An \textit{$r$-chemical path} centered at $u$, is the union of a $\bsize$-cycle of good $\bsize$-blocks contained in $\mathcal{N}_{3r} \setminus \mathcal{N}_{r}$ such that $u$ is in its interior, and a path of good $\bsize$-blocks from the $\bsize$-block at the center of $\mathcal{N}_r$ to a $\bsize$-block in the $\bsize$-cycle, such that the total length of the $\bsize$-cycle and the $\bsize$-path is proportional to $r/(\bsize)$  (see Fig. \ref{fig:chemical_in_r}).

\

\begin{figure}[!t]
\centering
\includegraphics[width=2.5in]{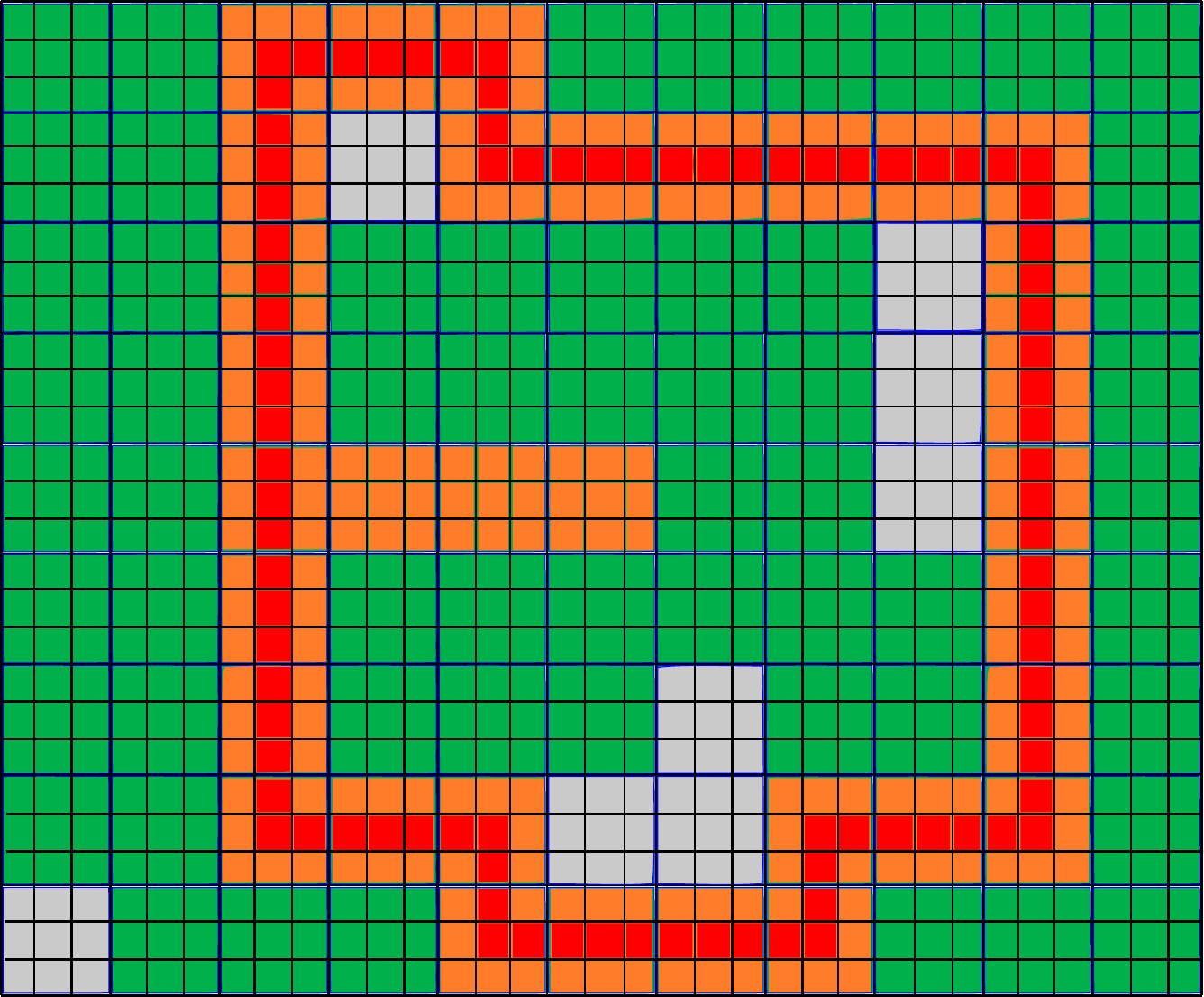}
\caption{  Larger blocks are $\bsize$-blocks and smaller ones are $\bfsize$-blocks. The red cycle indicates the chemical firewall which is in the cycle of an $r$-chemical path (orange). }
\label{fig:chemical_in_r}
\end{figure}

\textit{Chemical firewall.}  
Renormalize the grid into $\bfsize$-blocks starting from the block centered at agent $u$ and consider the $r$-chemical path defined above in this setting.
A \textit{chemical firewall} with radius $r$ is a $\bfsize$-cycle contained in the cycle of the $r$-chemical path such that agent $u$ is in its interior and all the agents in the $\bfsize$-cycle are of the same type (see Fig. \ref{fig:chemical_in_r}).


Although the structure of a chemical firewall is very different from the annular firewall defined before, the size of the $m$-blocks are chosen such that it is easy to see that, with similar arguments given for Lemma \ref{Lemma:firewall}, it acts as a firewall, i.e., the flips of the agents in its exterior cannot affect the agents in its interior.

\

An \textit{$r$-expandable radical region} of type (-1) is a radical region such that it is  expandable and it is located at the center of an $r$-chemical path.

Before proceeding with the first part of the proof of Theorem~\ref{Thrm:second_theorem}, we need the following results. The following lemma gives a lower bound for the probability that an arbitrary $m$-block with $m \le \size^3$ is a good $m$-block. Using this lemma, by renormalizing the grid into $m$-blocks we will argue that the probability that a block is a bad block can be arbitrary small for sufficiently large $\size$. 

\begin{lemma*} \label{Lemma:goodblock}
Let $\epsilon \in (0,1/2)$ and $m\le \size^3$. For all $I\in \mathcal{I}$ we have $W_I-{\size}_I/2 < {\size}^{1/2+\epsilon}$ with probability at least
\begin{align*}
1 - e^{-c{\size}^{2\epsilon}+o({\size}^{2\epsilon})}.
\end{align*}
\end{lemma*}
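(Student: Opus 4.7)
The plan is to apply Hoeffding's inequality to each fixed intersection $I \in \mathcal{I}$ and then take a union bound over all such intersections. Since in the initial configuration agent types are i.i.d.\ Bernoulli$(1/2)$, $W_I$ is a sum of i.i.d.\ indicators with mean $\size_I/2$, so single‑intersection concentration is immediate; the only thing one needs to control is that $|\mathcal{I}|$ grows only polynomially in $\size$, so the union bound preserves the exponential decay.

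First I would fix $I \in \mathcal{I}$ and note that, because $I$ is contained in a $w$-block of size $(w+1)^2 \le \size$, we have $\size_I \le \size$. Hoeffding's inequality then gives
\begin{align*}
P\!\left(W_I - \size_I/2 \ge \size^{1/2+\epsilon}\right) \;\le\; \exp\!\left(-\frac{2\,\size^{1+2\epsilon}}{\size_I}\right) \;\le\; e^{-2\size^{2\epsilon}}.
\end{align*}

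Next I would bound $|\mathcal{I}|$. Each element of $\mathcal{I}$ is determined by the relative position of some $w$-block with respect to the fixed $m$-block, and a $w$-block can intersect the $m$-block only if its center lies within $l_\infty$-distance $m/2 + w/2$ of the $m$-block's center, giving at most $(m+w+1)^2$ candidate $w$-blocks. Since $m \le \size^3$ and $w = O(\sqrt{\size})$, we obtain $|\mathcal{I}| = O(\size^6)$. A union bound then yields
\begin{align*}
P\!\left(\exists\, I \in \mathcal{I}\colon W_I - \size_I/2 \ge \size^{1/2+\epsilon}\right) \;\le\; |\mathcal{I}|\,e^{-2\size^{2\epsilon}} \;\le\; e^{-2\size^{2\epsilon} + O(\log \size)},
\end{align*}
which is $e^{-c\,\size^{2\epsilon} + o(\size^{2\epsilon})}$ with $c = 2$, using $\log \size = o(\size^{2\epsilon})$ for any $\epsilon > 0$. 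Taking the complement gives the claimed bound.

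I do not expect any real obstacle: the argument is a routine Hoeffding‑plus‑union‑bound, and the only point to verify is the polynomial growth of $|\mathcal{I}|$, which ensures that the logarithmic overhead from the union bound is absorbed into the $o(\size^{2\epsilon})$ correction in the exponent. The upper bound $m \le \size^3$ in the hypothesis is what makes this absorption possible; otherwise the union‑bound loss could overwhelm the Hoeffding decay.
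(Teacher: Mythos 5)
Your proof is correct and follows essentially the same route as the paper: a per-intersection concentration bound (your Hoeffding bound is equivalent to the paper's Azuma-based Lemma on balanced neighborhoods, since $W_I$ is a sum of i.i.d.\ indicators) followed by a union bound over the polynomially many elements of $\mathcal{I}$, with the logarithmic overhead absorbed into the $o(\size^{2\epsilon})$ term. Your explicit count $|\mathcal{I}| = O(\size^6)$ is more careful than the paper's stated ``less than $\size^3$,'' but either way the bound is polynomial and the conclusion is unaffected.
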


 \begin{proof}
By Lemma \ref{Lemma:balanced} of the Appendix, for an arbitrary $I\in \mathcal{I}$  we have
\begin{align*}
P\left(W_I-{\size}_I/2\ge {\size}^{1/2+\epsilon}\right) < e^{-c{\size}^{2\epsilon}},
\end{align*}
where $\epsilon \in (0,1/2)$ and $c>0$.  Since there are less than ${\size}^3$ elements in $\mathcal{I}$, we   have
\begin{align*}
P\left(W_I-{\size}_I/2< {\size}_I^{1/2+\epsilon} \mbox{ for all }I\in \mathcal{I}\right) \ge 1 - {\size}^{3}e^{-c{\size}^{2\epsilon}}.
\end{align*}  
\end{proof}

Let us consider a neighborhood consisting of exponentially large number of $m$-blocks where $m\le \size^3$. Based on the following lemma,  the ratio between bad blocks and good blocks in this neighborhood is exponentially small w.h.p. 
 
\begin{lemma*} \label{Lemma:bad_ratio}
Let $c$ be a positive constant and $\epsilon \in (0,1/2)$. Let $\mathcal{N}_{\radius}$ be a neighborhood consisting of  $m$-blocks and with $2^{c{\size}}$ agents. The ratio between bad blocks and good blocks is less than $e^{-{\size}^\epsilon}$ w.h.p.
\end{lemma*}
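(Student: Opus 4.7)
The plan is to partition $\mathcal{N}_{\radius}$ into disjoint $m$-blocks via renormalization, apply Lemma~\ref{Lemma:goodblock} to bound the probability that any single block is bad, and then use Markov's inequality to control the total number of bad blocks.

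First I would renormalize $\mathcal{N}_{\radius}$ into disjoint $m$-blocks, so that the total number of $m$-blocks in $\mathcal{N}_{\radius}$ is $K = \Theta(2^{c\size}/m^2)$. Since $m \le \size^{3}$, we have $K \ge 2^{c\size}/\size^{6}$, which grows exponentially in $\size$. By Lemma~\ref{Lemma:goodblock}, each individual $m$-block is bad with probability at most $e^{-c_{1}\size^{2\epsilon}+o(\size^{2\epsilon})}$ for some constant $c_{1}>0$.

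Next, letting $B$ denote the number of bad $m$-blocks in $\mathcal{N}_{\radius}$, linearity of expectation gives
\begin{align*}
\mathbb{E}[B] \le K \cdot e^{-c_{1}\size^{2\epsilon}+o(\size^{2\epsilon})}.
\end{align*}
Applying Markov's inequality with threshold $K e^{-\size^{\epsilon}}/2$,
\begin{align*}
P\!\left(B \ge \tfrac{K}{2}\,e^{-\size^{\epsilon}}\right) \le \frac{2\,\mathbb{E}[B]}{K\,e^{-\size^{\epsilon}}} \le 2\,e^{\,\size^{\epsilon} - c_{1}\size^{2\epsilon}+o(\size^{2\epsilon})} \longrightarrow 0,
\end{align*}
as $\size \to \infty$, since $2\epsilon > \epsilon$. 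Hence w.h.p.\ we have $B < Ke^{-\size^{\epsilon}}/2 < K/2$, which in particular implies that the number of good blocks is at least $K/2$. Consequently, the ratio of bad to good blocks is at most
\begin{align*}
\frac{B}{K-B} \le \frac{Ke^{-\size^{\epsilon}}/2}{K/2} = e^{-\size^{\epsilon}},
\end{align*}
which is the desired bound.

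The one subtle point is that the events ``$m$-block $i$ is bad'' are not strictly independent across disjoint $m$-blocks, because the quantities $W_I$ in the definition of good block involve intersections with $w$-blocks, and a $w$-block straddling the boundary of two adjacent $m$-blocks could contribute to both. However, Markov's inequality only requires linearity of expectation and not independence, so this dependency is harmless for the argument above; no Chernoff-type concentration is needed. This is the main observation that makes the proof short, and I would highlight it explicitly rather than attempt a sharper (and unnecessary) concentration bound.
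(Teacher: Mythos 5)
Your proof is correct and follows essentially the same route as the paper: bound the per-block probability of being bad via Lemma~\ref{Lemma:goodblock}, deduce that the number of bad blocks is exponentially small relative to the total w.h.p., and conclude the ratio bound. The paper leaves the counting step as ``easy to show''; your Markov's-inequality argument is exactly the natural way to fill it in (and your remark about dependence is harmless, though in fact the events are independent across disjoint blocks since each $I\in\mathcal{I}$ consists only of agents inside the given $m$-block).
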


\begin{proof}
By Lemma \ref{Lemma:goodblock}, the probability of having a bad block is less than $e^{-{\size}^{2\epsilon}+o({\size}^{2\epsilon})}$. It is easy to show that the number of bad blocks is less than $2^{c{\size}}e^{-{\size}^{2\epsilon}+o({\size}^{2\epsilon})}$ w.h.p. Hence, the ratio between the number of bad blocks and the number of good blocks is less than $e^{-{\size}^\epsilon}$ w.h.p., see Figure \ref{fig:goodbad}.  
\end{proof}

We now want to argue that the formation of a chemical firewall is likely. 
We first notice that a monochromatic $w$-block located inside a good $\bsize$-block can make at least a $\bfsize$-block at the center of the good block monochromatic. This means that a monochromatic $w$-block at the center of the $r$-chemical path can create a chemical firewall (see Fig. \ref{fig:chemical_in_r}). Our next goal is to show that the existence of an $r$-chemical path is likely.  The critical step is to show that the length of the $r$-chemical path  is proportional to $r/\bsize$. 

We use a result from percolation theory \cite{garet2007large} restated in the following. Consider site percolation on square lattice in the supercritical regime. Let $D(0,x) = \inf_{\Gamma} |\Gamma|$, where $\Gamma$ is a path from the origin to the vertex $x$ and $|\Gamma|$ is the number of vertices in the path. Let $0\leftrightarrow x$ denote that $0$ and $x$ belong to the same connected component. The following   is Theorem~1.4 from \cite{garet2007large}, and it asserts that the length of the shortest path between the origin and an arbitrary vertex $x$ cannot be much different from its $l_1$   distance $\| x \|_1$, see Figure~\ref{fig:chemical}.

\begin{figure}[!t]
\centering
\includegraphics[width=2.5in]{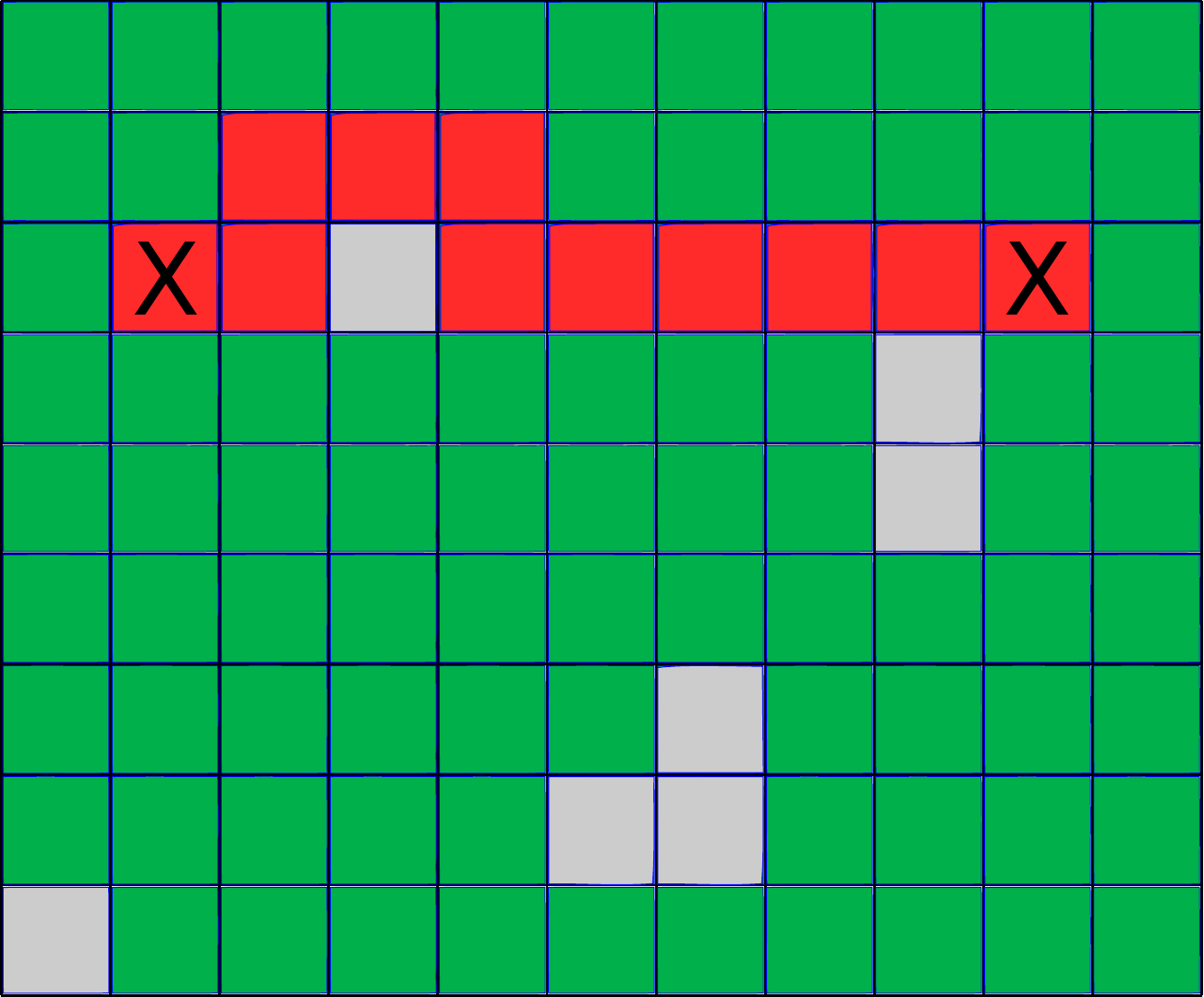}
\caption{ The length of the shortest path of good blocks between two arbitrary vertices denoted by X is w.h.p. not much different from its $l_1$-distance between them in the supercritical regime. }
\label{fig:chemical}
\end{figure}

\begin{theorem*}[Garet and Marchand] \label{Thrm:Garet}
For all $\alpha > 0$, there exists $p'(\alpha) \in (p_c(d),1)$ such that for all $p \in (p'(\alpha),1]$, we have:
\begin{align*}
\limsup_{\|x\|_1\rightarrow +\infty} \frac{\ln P_p\left(0\leftrightarrow x,D(0,x) \ge (1+\alpha)\|x\|_1\right)}{\|x\|_1} < 0.
\end{align*}

\end{theorem*}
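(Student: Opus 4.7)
The plan is to combine a shape theorem for the chemical distance with a renormalization argument that forces any large excess $D(0,x)-\|x\|_1$ to arise from an exponentially rare blocking event on a rescaled lattice. Let $C_\infty$ denote the almost surely unique infinite open cluster. Since $D(0,x)\ge\|x\|_1$ trivially and $D(0,x)=\|x\|_1$ deterministically when $p=1$, the content of the statement is an upper-tail large-deviation bound on this excess in the strongly supercritical regime.

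First I would establish a shape theorem for $D$. Applying Kingman's subadditive ergodic theorem to a regularized chemical distance $\widetilde D(0,nx)$, defined as the distance inside $C_\infty$ between the $C_\infty$-points closest to $0$ and to $nx$, yields a deterministic norm $\mu_p$ on $\mathbb R^d$ with $\widetilde D(0,nx)/n\to\mu_p(x)$ almost surely. Monotonicity of the open cluster in $p$ together with $\mu_1(\cdot)=\|\cdot\|_1$ forces $\mu_p(x)\to\|x\|_1$ uniformly on $\{x:\|x\|_1=1\}$ as $p\uparrow 1$, so one can choose $p_0'(\alpha)\in(p_c(d),1)$ with $\mu_p(x)\le(1+\alpha/3)\|x\|_1$ for every $p>p_0'(\alpha)$ and every $x\neq 0$.

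Second I would promote this law of large numbers to exponential concentration by renormalization. Partition $\mathbb Z^d$ into cubes $B_z$ of side length $L$ and call $B_z$ \emph{good} when (i) it contains a unique crossing cluster that touches every face of $B_z$ and is connected to the crossing clusters of all neighboring cubes, and (ii) any two boundary vertices of this crossing cluster are joined inside the doubled cube $2B_z$ by an open path of length at most $(1+\alpha/3)L$. Property (i) is the classical Grimmett--Marstrand--Pisztora block event, while property (ii) is extracted from the shape theorem by a finite-scale bootstrap. For any $q>0$, choosing $L$ large and $p$ sufficiently close to $1$ makes $P_p(B_z\text{ is good})\ge 1-q$; the good-cube indicators form a finite-range-dependent field that stochastically dominates Bernoulli site percolation on the rescaled lattice at a parameter arbitrarily close to one.

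Finally, the Peierls step. On $\{0\leftrightarrow x,\,D(0,x)\ge(1+\alpha)\|x\|_1\}$, any straight tube of cubes from the block of $0$ to the block of $x$ consisting entirely of good cubes would, after concatenating the internal traversals supplied by (ii), yield an open path of length at most $(1+\alpha/3)L\cdot(\|x\|_1/L+O(1))<(1+\alpha)\|x\|_1$ for $\|x\|_1$ large, contradicting the hypothesis. Every such tube must therefore contain a bad cube, forcing a $*$-connected cluster of bad cubes of diameter on the order of $\|x\|_1/L$ to separate $0$ from $x$ on the rescaled lattice. A standard Peierls enumeration then bounds this probability by $\sum_{k\ge c\|x\|_1/L}(Cq)^k\le e^{-c'\|x\|_1}$ provided $q<1/(2C)$, giving the required exponential decay. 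The main obstacle is verifying property (ii): that inside a large finite box any two boundary vertices $y,y'$ of the crossing cluster are joined by an open path of length at most $(1+\alpha)\|y-y'\|_1$ with probability at least $1-q$. This finite-scale chemical estimate is the technical heart of the argument and is itself obtained by bootstrapping the shape theorem through a multiscale renormalization.
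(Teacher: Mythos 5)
This statement is not proved in the paper at all: it is quoted verbatim as Theorem~1.4 of Garet and Marchand \cite{garet2007large} and used as an imported black box, so there is no in-paper proof to compare your argument against. On its own terms, your sketch follows the right general template (renormalization into good blocks, stochastic domination by highly supercritical site percolation, a blocking-event estimate), which is indeed the spirit of the actual argument in \cite{garet2007large}, but two steps are genuinely broken as written. The most serious is the Peierls step. From ``every straight tube of cubes from the block of $0$ to the block of $x$ contains a bad cube'' you cannot conclude that a $*$-connected cluster of bad cubes of diameter of order $\|x\|_1/L$ separates $0$ from $x$: a single bad cube placed in the (essentially unique) straight tube already defeats all such tubes, and the event that the tube contains at least one bad cube has probability of order $q\|x\|_1/L$, which tends to one rather than decaying exponentially. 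The correct deduction in arguments of this type is different: one shows that an excess $D(0,x)-\|x\|_1\ge\alpha\|x\|_1$ forces a near-geodesic to meet (or circumvent) at least $c\alpha\|x\|_1/L$ bad blocks, because each isolated bad block can be bypassed at a cost of only $O(L)$ extra length; only then does a sum of the form $\sum_{k\ge c\alpha\|x\|_1/L}(Cq)^k$ legitimately appear. Your sketch omits exactly this accounting.

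There is also a circularity in your first step. Monotonicity in $p$ together with $\mu_1=\|\cdot\|_1$ only shows that $\mu_p$ decreases, as $p\uparrow 1$, to some norm that is $\ge\|\cdot\|_1$; it does not force the limit to equal $\|\cdot\|_1$. Proving $\mu_p\to\|\cdot\|_1$ requires constructing, for $p$ close to $1$, open paths of length $(1+o(1))\|x\|_1$ --- which is essentially the renormalization estimate you are trying to establish and is also the content of your unproven property (ii). Moreover, property (ii) as stated (an open path of length at most $(1+\alpha/3)L$ between arbitrary boundary vertices of the crossing cluster) is impossible for vertices at $l_1$-distance exceeding $(1+\alpha/3)L$, and even the corrected version (length at most $(1+\alpha/3)$ times the $l_1$-distance) does not suffice by itself: concatenating locally near-optimal paths through arbitrarily placed face points can inflate the total length by a dimension-dependent constant factor unless the waypoints are chosen to track the $l_1$ geodesic. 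Since the paper only uses this theorem as a citation, the right course is either to cite it as the paper does or to carry out the bad-block counting and the finite-scale chemical estimate in full.
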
 
Now consider a two dimensional lattice which consists of good $\bsize$-blocks and bad $\bsize$-blocks. The probability of a site being good then, is at least the value computed in Lemma \ref{Lemma:goodblock}, hence for sufficiently large $\size$ we are dealing with a percolation problem in the super-critical regime. Let us denote a radical region with radius $\epsilon'$ by $\epsilon'$-\textit{radical region}.

\begin{lemma*}
 \label{Lemma:firewall_path}
W.h.p. an $\epsilon'$-radical region is at the center of an $r$-chemical path at time $t=0$ where $r < n/10$.
\end{lemma*}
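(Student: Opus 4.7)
The plan is to treat the problem as one of supercritical site percolation on the lattice of renormalized $\bsize$-blocks. First, I would renormalize the grid into $\bsize$-blocks starting from the block containing the $\epsilon'$-radical region. By Lemma~\ref{Lemma:goodblock}, each individual block is good with probability at least $1-e^{-cN^{2\epsilon}+o(N^{2\epsilon})}$; since the good-block condition only involves agents contained inside the block, the goodness indicators of disjoint $\bsize$-blocks are independent. Thus for $N$ sufficiently large, the induced site percolation parameter exceeds the threshold $p'(\alpha)$ of Theorem~\ref{Thrm:Garet} for any desired small $\alpha>0$, placing us deep in the supercritical regime.

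Next, I would build the enclosing $\bsize$-cycle inside the annulus $\mathcal{N}_{3r}\setminus\mathcal{N}_r$. Choose four anchor $\bsize$-blocks at approximately $(\pm 2r,0)$ and $(0,\pm 2r)$ relative to $u$, and apply Theorem~\ref{Thrm:Garet} to each consecutive pair of anchors. Each anchor lies in the giant cluster of good blocks with overwhelming probability, and the theorem provides a chemical path between consecutive anchors of length at most $(1+\alpha)\|x_{i+1}-x_i\|_1 = O(r/\bsize)$ with probability exponentially close to one. Concatenating the four pieces yields a good $\bsize$-cycle of total length $O(r/\bsize)$ that encircles $\mathcal{N}_r$.

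To finish the $r$-chemical path I need to attach a good $\bsize$-path from the block at the center of $\mathcal{N}_r$ to a block of the cycle. The central block is itself good w.h.p.\ even after conditioning on the presence of the radical region: the conditioning only \emph{lowers} each count $W_I$ of $(-1)$ agents in an intersection, while the good-block condition $W_I - N_I/2 < N^{1/2+\epsilon}$ is one-sided, so the remaining estimate follows from Lemma~\ref{Lemma:balanced} applied to the $w$-blocks in the non-radical part of the central $\bsize$-block. A final application of Theorem~\ref{Thrm:Garet} to the central block and the nearest anchor produces a good $\bsize$-path of length $O(r/\bsize)$. A union bound over the five pairs of endpoints gives the conclusion.

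The main obstacle I anticipate is \emph{spatial containment}: Theorem~\ref{Thrm:Garet} bounds chemical length but not trajectory, so a careless chemical geodesic could cut into $\mathcal{N}_r$ or exit $\mathcal{N}_{3r}$ and thereby violate the definition of an $r$-chemical path. The natural way to handle this is to choose $\alpha$ small enough and to place the anchors well inside the annulus, using the deterministic observation that any lattice path of length $(1+\alpha)L$ joining two points at $l_1$-distance $L$ cannot stray more than $O(\sqrt{\alpha}\,L)$ from the straight segment between them, which keeps the whole construction inside $\mathcal{N}_{3r}\setminus\mathcal{N}_r$.
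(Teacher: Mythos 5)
Your proposal follows essentially the same route as the paper: renormalize into good/bad $\bsize$-blocks, observe via Lemma~\ref{Lemma:goodblock} that the induced site percolation is supercritical (so Theorem~\ref{Thrm:Garet} applies), build the enclosing cycle from chemical paths of length $O(r/\bsize)$ between anchor blocks in the annulus plus a radial path from the center, and combine the finitely many high-probability events at the end (you use a union bound where the paper uses the FKG inequality; both are valid since each individual probability tends to one). You are in fact more careful than the paper on two points it leaves implicit: the spatial containment of the chemical geodesics inside $\mathcal{N}_{3r}\setminus\mathcal{N}_r$ (your $l_1$-detour observation actually gives deviation $O(\alpha L)$, even better than $O(\sqrt{\alpha}\,L)$, and the paper's choice of anchors at the corners of $\mathcal{N}_{2r}$ plays the role of your ``anchors well inside the annulus''), and the fact that conditioning on the $\epsilon'$-radical region only helps the one-sided good-block condition for the central block.
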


\begin{proof}
Since an $r$-chemical path is contained in a neighborhood of radius $3r$, without loss of generality we can assume that this neighborhood is contained in a $\mathbb{Z}^2$ lattice. It is also clear that the flip of a (-1) agent, can only increase the probability of formation of the $r$-chemical path. Divide the resulting lattice into $m$-blocks such that the $\epsilon'$-radical region is at the center of an $m$-block and call the resulting renormalized  lattice $\mathbb{L'}$. Consider performing site percolation on this lattice by considering good $\bsize$-blocks as open sites of $\mathbb{L'}$ and bad $\bsize$-blocks as its closed sites. As discussed above, for sufficiently large $\size$ we are dealing with a percolation problem in its super-critical regime. Consider two blocks containing agents $(2r,2r)$ and $(-2r,2r)$ in the original lattice denoted by $0$ and $x$ respectively. By Theorem~\ref{Thrm:Garet} we conclude that for sufficiently large ${\size}$ there exists a constant $c>0$ such that
\begin{align*}
P_p\left(0\leftrightarrow x,D(0,x) \ge (1.25)\| x \|_1\right) \le e^{-c\|x\|_1}
\end{align*}
 where $\|x\|_1$ is the $l_1$ distance of $x$ from $0$ and we have put $\alpha = 0.25$. By the union bound and the FKG inequality, we have 
 \begin{align*}
 P_p\left(D(0,x) < 1.25\| x \|_1\right) &\ge P\left(0\leftrightarrow x\right) - P\left(0\leftrightarrow x,D(0,x) \ge (1.25) \| x \|_1\right) \\
  &\ge \theta(p)^2 - e^{-c \| x \|_1},
 \end{align*} 
where $\theta(p)$ is the probability that a node belongs to an infinite cluster and we have used the FKG inequality to conclude that $P(0\leftrightarrow x)\ge \theta(p)^2$. Now, using Lemma \ref{Lemma:goodblock} it is easy to see that for sufficiently large values of ${\size}$ this lower bound is as close as we want to one.

For each pair of corner agents of $\mathcal{N}_{2r}$ on the same side the above argument holds. A similar argument also holds for the existence of a path from the center of $\mathcal{N}_r$ to an arbitrary block on the boundary of $\mathcal{N}_{3r}$, i.e., a $\bsize$-block which contains agents with $l_\infty$-distance of $3r$ from the center of $\mathcal{N}_{3r}$. It is also easy to see that these events are all increasing events, i.e., their indicator functions can only increase by changing a closed site to an open site, in this case, a bad $\bsize$-block to a good $\bsize$-block. Hence, by the FKG inequality,   the joint probability of the existence of the above paths is at least their product which can be made arbitrary close to one for large values of ${\size}$.  
\end{proof}

We need to show that w.h.p. the radical region located inside the firewall can make the interior of the firewall almost monochromatic by the end of the process. We show that there are no clusters of bad blocks of radius larger than a polynomial function of ${\size}$ in a neighborhood with exponential size in $\size$.
To show this we first restate a result from \cite{grimmett1999percolation}. Let $S(k)$ be the ball of radius $k$ with center at the origin, i.e., $S(k)$ is the set of all vertices $x$ in $\mathbb{Z}^2$ for which $\Delta(0,x) \le k$, where $\Delta$ denotes the $l_1$ distance. Let $\partial S(k)$ denote the surface of $S(k)$, i.e., the set of all $x$ such that $\Delta(0,x)=k$. Let $A_k$ be the event that there exists an open path joining the origin to some vertex in $\partial S(k)$. Let the \textit{radius} of a bad cluster be defined as 
\begin{align*}
\sup \{\Delta(0,x): x \in \mbox{bad cluster}\}.
\end{align*}
 The following result is Theorem~5.4 in \cite{grimmett1999percolation}.

\begin{theorem*} [Grimmett] \label{Thrm:grimmett_bad_cluster}
 \textsl{(Exponential tail decay of the radius of an open cluster.)}
If $p<p_c$, there exists $\psi(p) >0 $ such that
\begin{align*}
P_p(A_k) < e^{-k\psi(p)}, \ \ \  for \ all \ k.
\end{align*}
\end{theorem*}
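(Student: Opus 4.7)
The plan is to combine a submultiplicativity estimate for the radius-connection probabilities with Menshikov's sharpness theorem in the subcritical regime. Write $g_k(p) := P_p(A_k) = P_p(0 \leftrightarrow \partial S(k))$. The target bound $g_k(p) \le e^{-k\psi(p)}$ will follow from two ingredients: (i) an approximate submultiplicativity of the form $g_{m+n}(p) \le C(m)\, g_m(p)\, g_n(p)$ for some at-most-polynomial prefactor $C(m)$, and (ii) the sharp input $g_k(p) \to 0$ as $k\to\infty$, which upgrades approximate submultiplicativity into strictly exponential decay.

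For (i), the geometric observation is that any open path witnessing $0 \leftrightarrow \partial S(m+n)$ must first reach some vertex $x \in \partial S(m)$ and then continue a further $\ell_1$-distance of at least $n$ to reach $\partial S(m+n)$, through two essentially disjoint sub-paths. Summing over the possible values of $x$ and applying a BK-style disjoint-occurrence bound to the events $\{0 \leftrightarrow x\}$ and $\{x \leftrightarrow \partial S(x,n)\}$, together with translation invariance, yields
\[
g_{m+n}(p) \;\le\; |\partial S(m)|\cdot g_m(p)\cdot g_n(p).
\]
Since $|\partial S(m)|$ grows only polynomially in $m$, this prefactor is harmless for the final exponential bound.

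For (ii), I would invoke Menshikov's theorem (equivalently, the Aizenman--Barsky differential inequality), which asserts that the susceptibility $\chi(p) = \sum_x P_p(0 \leftrightarrow x)$ is finite throughout the subcritical regime $p < p_c$. By the union bound, $g_k(p) \le \sum_{x \in \partial S(k)} P_p(0 \leftrightarrow x)$, so summing over $k$ yields $\sum_k g_k(p) \le \chi(p) < \infty$, and in particular $g_k(p) \to 0$. Choose $k_0$ large enough that $|\partial S(k_0)|\,g_{k_0}(p) < 1/2$. Iterating the submultiplicative inequality then gives $g_{n k_0}(p) \le 2^{-(n-1)} g_{k_0}(p)$, and since $g_k$ is monotonically non-increasing in $k$ (any path to $\partial S(k)$ must cross $\partial S(k-1)$), interpolating over the residues mod $k_0$ gives $g_k(p) \le e^{-k\psi(p)}$ for all $k$ and some $\psi(p) > 0$ depending only on $p$.

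The main obstacle is Menshikov's theorem itself, which is a deep result resting on the Russo formula and a delicate induction on the expected number of pivotal edges, and which occupies essentially a full chapter of Grimmett's book. For the present statement it is cited as a black box; the remaining combinatorial content of the proof — the BK-based submultiplicativity and the bootstrap from $g_k \to 0$ to strictly exponential decay — is then essentially routine.
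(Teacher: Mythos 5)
The paper does not actually prove this statement: it is quoted verbatim as Theorem~5.4 of Grimmett's \emph{Percolation} and used as a black box, so there is no in-paper argument to compare yours against. Your sketch is, in outline, the standard proof from that book — a BK-type submultiplicativity for $g_k(p)=P_p(A_k)$ combined with the Menshikov/Aizenman--Barsky input that $\chi(p)<\infty$ throughout $p<p_c$, bootstrapped into strictly exponential decay — and that architecture is correct; your explicit acknowledgement that Menshikov's theorem carries the real weight is appropriate. Three small points if you wanted to make the sketch airtight. First, in the site-percolation version the two witnessing sub-paths share the vertex $x\in\partial S(m)$, so BK does not apply verbatim to $\{0\leftrightarrow x\}$ and $\{x\leftrightarrow\partial S(x,n)\}$; one takes the first hitting point of $\partial S(m)$ and pays a factor $1/p$, or phrases the second event on a neighbour of $x$ — standard, but worth a sentence. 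Second, your prefactor $|\partial S(m)|\,g_m(p)$ is lossier than what the decomposition naturally yields, namely $g_{m+n}(p)\le g_n(p)\sum_{x\in\partial S(m)}P_p(0\leftrightarrow x)$; since the latter prefactor summed over $m$ is exactly $\chi(p)<\infty$, one can immediately choose $m$ making it less than $1/2$, whereas your version additionally needs $\liminf_k |\partial S(k)|\,g_k(p)=0$ — true in $d=2$ where $|\partial S(k)|=4k$ and $\sum_k g_k<\infty$ forbids $g_k\ge c/k$, but it requires saying. Third, the iteration delivers $g_k\le C\,2^{-k/k_0}$ with a constant $C>1$, while the theorem asserts $g_k<e^{-k\psi}$ with no prefactor for all $k$; this is repaired in the usual way by shrinking $\psi$, using that $g_k<1$ strictly for every $k\ge 1$. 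None of these affects the soundness of the approach, which is essentially the proof Grimmett himself gives.
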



\begin{lemma*}\label{Lemma:bad_cluster}
W.h.p. there are no clusters of bad $\bsize$-blocks with radius greater than ${\size}^2$ blocks in a neighborhood with radius $4r= 2^{[1-H(\tau')]{\size}/2-o({\size})}$ at time $t=0$.
\end{lemma*}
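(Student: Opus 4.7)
The plan is to reduce the statement to a subcritical site-percolation estimate on the lattice obtained by renormalizing the grid into $\bsize$-blocks. Recall that a $\bsize$-block depends only on the agents inside it (the "balanced" condition defining a good block is a property of the intersections of that block with every $w$-block, i.e., of finitely many sums of Bernoulli random variables supported inside the block). Tiling the neighborhood of radius $4r$ by disjoint $\bsize$-blocks therefore produces a family of i.i.d.\ $\{0,1\}$-valued random variables, where we declare a block \emph{open} if it is bad.

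The first step is to control the marginal. By Lemma~\ref{Lemma:goodblock} applied with $m=\bsize$ (which is $O(w^3)=O((\log n)^{3/2}) \le N^3$ for large enough $N$), the probability that a given $\bsize$-block is bad is at most $e^{-cN^{2\epsilon}+o(N^{2\epsilon})}$. Since this tends to zero as $N\to\infty$, for $N$ large enough it is strictly smaller than the critical probability $p_c(\mathbb{Z}^2)$ for site percolation, so we are comfortably in the subcritical regime.

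The second step is to invoke Theorem~\ref{Thrm:grimmett_bad_cluster} (Grimmett) on the exponential decay of the radius of an open cluster: there is $\psi=\psi(p)>0$ such that, for any fixed $\bsize$-block $v$,
\begin{align*}
P\bigl(v \text{ lies in a bad cluster of radius} > N^2\bigr) \le e^{-\psi N^2}.
\end{align*}
Here $\psi$ can be taken bounded away from zero because the bad-block probability tends to $0$, so $\psi(p)\to\psi(0)$, which is strictly positive.

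The third step is a union bound. The number of $\bsize$-blocks contained in a neighborhood of radius $4r$ is at most
\begin{align*}
\frac{(8r+1)^2}{\bsize^2} \le 2^{[1-H(\tau')]N-o(N)},
\end{align*}
since $r=2^{[1-H(\tau')]N/2-o(N)}$ and $\bsize = O((\log n)^{3/2})$. Hence
\begin{align*}
P\bigl(\exists \text{ bad cluster of radius} > N^2 \text{ in } \mathcal{N}_{4r}\bigr) \le 2^{[1-H(\tau')]N-o(N)}\,e^{-\psi N^2},
\end{align*}
which tends to $0$ as $N\to\infty$, proving the claim. The only delicate point is the independence of the badness indicators across the tiling, which follows from the fact that the balanced condition defining a good block is measurable with respect to the types of the agents inside that block alone; everything else is a clean subcritical-percolation calculation.
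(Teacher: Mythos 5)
Your proof is correct and follows essentially the same route as the paper's (very terse) argument: bound the bad-block probability via Lemma~\ref{Lemma:goodblock} so the renormalized site-percolation model is subcritical, apply Theorem~\ref{Thrm:grimmett_bad_cluster} with $k=N^2$, and union-bound over the exponentially many blocks in $\mathcal{N}_{4r}$. You in fact supply details the paper leaves implicit (independence of the block indicators and the explicit union bound); the only nitpick is that the uniform lower bound on $\psi$ is cleanest via monotonicity of $P_p(A_k)$ in $p$ below a fixed $p_0<p_c$, rather than an appeal to ``$\psi(0)$''.
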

\begin{proof}
Let $p$, 
be the probability of having a bad $\bsize$-block, and let $k={\size}^2$. By Theorem~\ref{Thrm:grimmett_bad_cluster} it follows that w.h.p. there is no cluster of bad $\bsize$-blocks containing a bad $\bsize$-block with $l_1$-distance from its center greater than ${\size}^2$ $\bsize$-blocks in a neighborhood with exponential radius in ${\size}$.  
\end{proof}


\

It is easy to check that for $\tau > 3/8$, a monochromatic $w$-block in a good block can   make the whole block monochromatic (except for possibly  a margin of $w$ at the borders). On the other hand, Lemma~\ref{Lemma:Sufficient2} shows that the same condition of Lemma~\ref{Lemma:Trigger}  leads to the formation of a monochromatic $3w/2$-block for $\tau \in (\tau_1,3/8)$ because once the $\epsilon'$-radical region leads to a monochromatic $w$-block at its center, it can as well lead w.h.p  to a monochromatic $3w/2$-block. 
Lemma \ref{Lemma:Monoch_spread2}  then shows that the spread of the monochromatic $3w/2$-blocks is indeed possible.

\begin{lemma*} \label{Lemma:Sufficient2}
Consider the $\mathcal{N}_S$ neighborhood defined in Lemma \ref{Lemma:Trigger} and co-centered with a neighborhood $\mathcal{N}_{\radius}$ of radius ${\radius}>{\size}$ with the property that no (+1) agent inside $\mathcal{N}_{\radius}$ will become unhappy until some time $T(\radius)$. Then w.h.p.   there exists a set of flips with the following property: if they  happen before $T(\radius)$ then  all the agents inside a neighborhood with radius $3w/2$   concentric with $\mathcal{N}_{\radius}$   will be of the same type.
\end{lemma*}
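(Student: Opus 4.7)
The plan is to iterate the argument of Lemma~\ref{Lemma:Trigger}. First, I would invoke Lemma~\ref{Lemma:Trigger} to secure, w.h.p., a finite sequence of flips of $(-1)$ agents inside the radical region $\mathcal{N}_{(1+\epsilon')w}$ that turns $\mathcal{N}_{w/2}$ into a monochromatic $(+1)$ neighborhood before $T(\radius)$. The remaining task is to extend the monochromatic region outward to radius $3w/2$ by a further sequence of flips.

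For the extension, I would fix a $(-1)$ agent $v$ just outside the current monochromatic core $\mathcal{N}_r$ (with $w/2 \le r < 3w/2$) and upper-bound the number of $(-1)$ agents in $\mathcal{N}(v)$ using the same three-region decomposition as in the proof of Lemma~\ref{Lemma:Trigger}: (a) $\mathcal{N}(v)\cap\mathcal{N}_r$ contributes no $(-1)$'s by the inductive hypothesis that $\mathcal{N}_r$ is already monochromatic $(+1)$; (b) $\mathcal{N}(v)\cap(\mathcal{N}_{(1+\epsilon')w}\setminus\mathcal{N}_r)$ contributes at most approximately $\tau$ times its area in $(-1)$'s, by Proposition~\ref{Prop:firstprop}; (c) $\mathcal{N}(v)\setminus\mathcal{N}_{(1+\epsilon')w}$ contributes at most approximately $1/2$ times its area, by Lemma~\ref{Lemma:balanced} of the Appendix. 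If this sum is strictly less than $\tau{\size}$, then $v$ is unhappy and can be flipped, enlarging the monochromatic core by one agent. I would order the flips from the inside out; monotonicity guarantees that the unhappiness of any agent in a later shell is preserved by all prior $(-1)\to(+1)$ flips in earlier shells, since such flips can only decrease the $(-1)$ count in any neighborhood, so the cascade extends the monochromatic region shell by shell.

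The hardest part is the geometric bookkeeping at the corners of $\mathcal{N}_{3w/2}$, where the intersection of the neighborhood of a corner agent with the original $\mathcal{N}_{w/2}$ reduces to essentially a single cell and therefore contributes a negligible fraction. There, the iterative nature of the cascade is essential: by the time the front reaches the corner, the previously filled intermediate shells take over the role of the initial $\mathcal{N}_{w/2}$ and supply the $(+1)$ agents needed to destabilize the corner $(-1)$'s. Verifying that the condition $\epsilon' > f(\tau)$ inherited from Lemma~\ref{Lemma:Trigger} suffices at every intermediate stage, and in particular does not stall before the corners of $\mathcal{N}_{3w/2}$ are reached, is the key technical step. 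The governing inequality is a direct extension of the one preceding (\ref{eq:ftau}), with the term $\tau\epsilon'^2{\size}$ replaced by the area of the expanding monochromatic core contained in $\mathcal{N}(v)$; since this area grows as the cascade proceeds, the corresponding condition becomes progressively easier to satisfy, which is what ultimately carries the argument out to radius $3w/2$.
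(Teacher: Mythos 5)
Your overall strategy---first invoke Lemma~\ref{Lemma:Trigger} to obtain a monochromatic $\mathcal{N}_{w/2}$, then push the monochromatic region outward by bounding the $(-1)$ count seen by boundary agents via Proposition~\ref{Prop:firstprop} and Lemma~\ref{Lemma:balanced}---is the same as the paper's. The gap is in the corner analysis, and it bites in exactly the parameter range this lemma exists to cover. If the core you grow is a square $\mathcal{N}_r$, the neighborhood of a corner agent of the next shell meets the core in a $w\times w$ square for every $r\ge w/2$: the overlap is pinned at $\approx {\size}/4$ and does \emph{not} grow with $r$, so your claim that ``the corresponding condition becomes progressively easier to satisfy'' is backwards. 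What actually changes as $r$ increases is that the corner agent's neighborhood drifts out of the low-density radical region $\mathcal{N}_{(1+\epsilon')w}$; by the time $r$ approaches $3w/2$ (with $\epsilon'<1/2$) the part of the neighborhood outside the core lies entirely outside the radical region, carries density $\approx 1/2$, and the corner agent sees roughly $\frac{3}{4}\cdot\frac{1}{2}{\size}=\frac{3}{8}{\size}$ agents of the opposite type. The square cascade therefore stalls at the corners unless $\tau>3/8$, whereas the lemma must hold for all $\tau\in(\tau_2,\tau_1]$, which contains $(0.344,0.375]$; the paper even states explicitly that for $\tau>3/8$ the naive spreading already works and that this lemma is needed precisely for smaller $\tau$.

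The paper avoids the obstruction by not growing a square front. From the monochromatic $w$-block it first makes monochromatic four trapezoids erected on the \emph{sides} of the block: agents there satisfy a weaker condition on $\tau$ because a larger portion of the non-monochromatic part of their neighborhood still lies inside the low-density radical region, so Proposition~\ref{Prop:firstprop} makes them unhappy throughout $(\tau_2,\tau_1]$. Only then does it treat a corner agent of $\mathcal{N}_{3w/2}$, which by that stage sees the $w$-block \emph{together with} portions of two trapezoids and so falls below the threshold. To salvage your shell-by-shell scheme you would have to order the flips so that the front is non-square when it reaches the corners (sides first, corners last, with the intermediate shapes chosen so each stage closes); that deferred geometric bookkeeping is not a routine extension of the inequality preceding (\ref{eq:ftau})---it is where the constants $\tau_2$ and $3/8$ actually come from.
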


\begin{proof}
By  Lemma \ref{Lemma:Trigger}, w.h.p. there exists a set of flips that if they happen before $T(\radius)$ will make a $w$-block at the center of $\mathcal{N}_{\radius}$ unhappy. By Proposition \ref{Prop:firstprop}, it follows that this monochromatic block will make all the (-1) agents in four identical trapezoids outside the $w$-block whose larger bases are the sides of the $w$-block unhappy, and hence monochromatic w.h.p. Now, with another application of Proposition \ref{Prop:firstprop} we have  that for $\tau > \tau_1$, all the (-1) agents in a $3w/2$-block with the same center as the $w$-block will be unhappy, hence the $3w/2$-block can become monochromatic w.h.p.   
\end{proof}

\begin{figure}[!t]
\centering
\includegraphics[width=\textwidth]{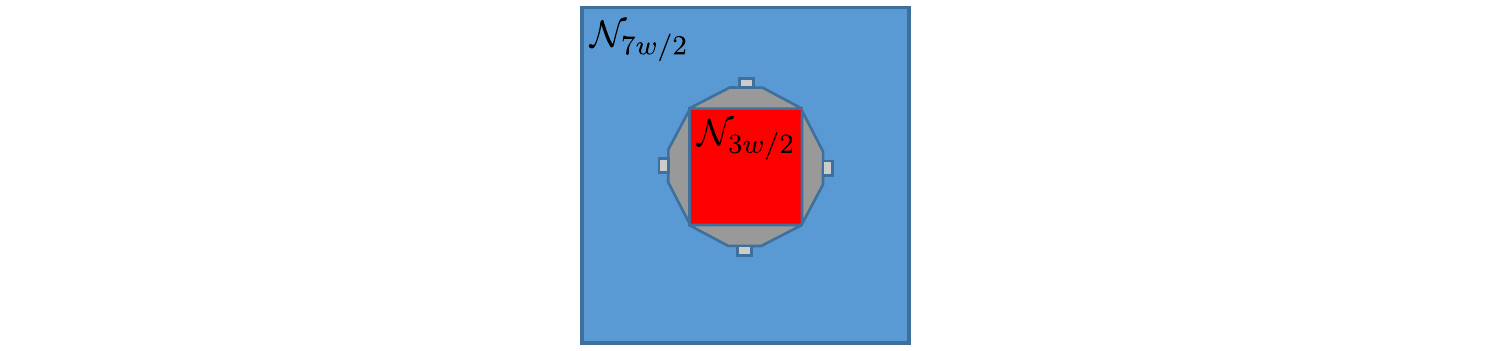}
\caption{Neighborhoods described in the proof of Lemma \ref{Lemma:Monoch_spread2}.}
\label{fig:Lemma_monoch_spread2}
\end{figure}

\begin{lemma*} \label{Lemma:Monoch_spread2}
Consider a good block at the center of $\mathcal{N}_{\radius}$ with ${\radius} > m$. A $3w/2$-block with (+1) agents at the center of a $7w/2$-block contained in the good block will make all (-1) agents right outside the $3w/2$-block unhappy with probability one and with at most $(3w/4+1)^2$ flips happening before $T({\radius})$,  for sufficiently large ${\size}$.
\end{lemma*}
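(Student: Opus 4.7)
The plan is to count, for every $(-1)$ agent $u$ lying immediately outside the monochromatic $3w/2$-block $\mathcal{B}$, the number of $(-1)$'s in $\mathcal{N}(u)$, showing it falls strictly below $\tau N$ either immediately or after a short cascade of unhappy flips concentrated near a corner of $\mathcal{B}$.

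First I would decompose $\mathcal{N}(u) = (\mathcal{N}(u)\cap\mathcal{B}) \cup (\mathcal{N}(u)\setminus\mathcal{B})$. Because $\mathcal{B}$ is centered in the $7w/2$-block and $u$ sits adjacent to $\mathcal{B}$, the complement $\mathcal{N}(u)\setminus\mathcal{B}$ is contained in the $7w/2$-block, hence inside the good $m$-block. The defining property of a good block (equivalently, the deviation estimate of Lemma~\ref{Lemma:balanced}) then pins down the count of $(-1)$'s in $\mathcal{N}(u)\setminus\mathcal{B}$ to within $N^{1/2+\epsilon}$ of $\tfrac{1}{2}|\mathcal{N}(u)\setminus\mathcal{B}|$. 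Writing $\gamma = |\mathcal{N}(u)\cap\mathcal{B}|/N$ for the geometric overlap fraction, this yields $s(u) = (1-\gamma)/2 + o(1)$ for the fraction of $(-1)$'s in $\mathcal{N}(u)$ once $\mathcal{B}$ has been made monochromatic.

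A direct geometric computation then gives $\gamma \to 3/8$ when $u$ is in a ``side'' position and $\gamma \to 1/4$ when $u$ is in a ``corner'' position, so $s(u) \to 5/16$ or $3/8$, respectively. Since $5/16 < \tau_2 < \tau$, every side agent is unhappy with probability one and no auxiliary flips are needed. The corner positions, for which $3/8$ may exceed $\tau$ when $\tau \in (\tau_2, 3/8]$, are the main difficulty. To handle each corner, I would single out a region of at most $(3w/4+1)^2$ sites placed just outside the relevant corner of $\mathcal{B}$ and flip its $(-1)$ agents in an order that walks outward from $\mathcal{B}$: the first agent flipped lies in the ``side-like'' portion of the region, where the good-block count already certifies unhappiness, and each subsequent flip enlarges the effective monochromatic region, so that a fresh application of the good-block bound shows that the next candidate is also unhappy. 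After at most $(3w/4+1)^2$ such flips, the fraction of $(-1)$'s in the corner agent's neighborhood drops by at least $(3w/4+1)^2/N$, which by a direct arithmetic check is enough to push $s(u)$ below $\tau$ for every $\tau > \tau_2$.

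The hard part will be the bookkeeping of the corner cascade: one must verify, at every intermediate step, that the agent chosen to be flipped is unhappy in the current configuration, which requires tracking how the overlap with the growing effective monochromatic region changes from flip to flip and reapplying the good-block deviation bound to each intermediate neighborhood. Once this ordering is specified and verified, applying the same argument at each of the four corners completes the proof.
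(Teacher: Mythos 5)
Your setup is on the right track and matches the paper's: you correctly compute that a $(-1)$ agent adjacent to a side of the monochromatic $3w/2$-block sees a fraction of $(-1)$'s tending to $5/16<\tau_2$ (hence is unhappy with probability one by the good-block property), while a corner agent sees $3/8$, which exceeds $\tau$ for $\tau\in(\tau_2,3/8]$, so the corners are where all the work lies. However, the step you defer as ``the hard part'' --- exhibiting a concrete ordering of at most $(3w/4+1)^2$ flips and verifying that each flipped agent is unhappy in the configuration current at its turn --- is not bookkeeping; it is the entire content of the lemma and the place where the value of $\tau_2$ (the root of $1024\tau_2^2-384\tau_2+11=0$) actually comes from. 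Your proposed cascade, a region placed ``just outside the relevant corner'' walked outward from $\mathcal{B}$, is in real danger of failing: an agent near the corner at diagonal distance $d$ overlaps $\mathcal{B}$ in only about $(w-d)^2$ sites, so the very agents you want to flip first are the hardest ones to certify unhappy, and the fraction they see grows quickly above $\tau$ as $d$ increases. The paper circumvents this by building the cascade on the \emph{sides}, not at the corner: four isosceles trapezoids with larger bases on the sides of the $3w/2$-block, with smaller bases $2(3/4-2\zeta)w$ and heights $2\nu w$ where $\zeta=(3-8\tau)/2$ and $\nu=(16\tau-5)/6$ are tuned so that every agent inside them is unhappy given the block alone; then four small rectangles beyond the trapezoids, unhappy given the block plus the trapezoids; and only then does the corner agent, which borrows mass from two adjacent trapezoids and two rectangles, satisfy
\begin{align*}
\left[1-\frac{1}{4}-\left(\frac{1}{4}+\frac{1}{2}-\zeta\right)\nu-\frac{1}{4}\left(\frac{1}{8}-\nu\right)\right]\frac{1}{2}+\frac{o(N)}{N}<\tau ,
\end{align*}
which is exactly the inequality defining $\tau_2$. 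Without specifying shapes of this kind and checking the threshold at each stage, the claim that ``each subsequent flip enlarges the effective monochromatic region, so the next candidate is also unhappy'' is an assertion, not a proof.

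Two smaller points. First, your claimed drop of ``at least $(3w/4+1)^2/N$'' in the corner agent's fraction overcounts: only the $(-1)$ agents in the region flip (roughly half of them), and only those lying inside the corner agent's neighborhood contribute; the correct worst-case margin is about half of that, which is why the final arithmetic is tight and must be done carefully. Second, the flips must all be possible before $T(\radius)$ and must be ones that make the flipping agent happy (which is automatic here since $\tau<1/2$), but your ordering should be stated so that the total count is visibly bounded by $(3w/4+1)^2$; the paper's trapezoids-plus-rectangles region is sized to respect this bound.
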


\begin{proof}
Consider four identical isosceles trapezoids outside the $3w/2$-block whose larger bases are the sides of the $3w/2$-block (see Figure \ref{fig:Lemma_monoch_spread2}). Let $\zeta = (3-8\tau)/2$ and $\nu = (16\tau -5)/6$. Let the smaller bases of the above trapezoids be $2(3/4-2\zeta)w$ and their heights be $2 \nu w$. For $\tau > 0.3463$, since these trapezoids are located inside a good block for sufficiently large $\size$ all the agents of type (-1) in these trapezoids will be unhappy with probability one. Consider the case where these trapezoids have become monochromatic after the flips of (-1) agents happening before $T(\radius)$. Now consider four identical rectangles located outside the trapezoids. Let one side of each of these rectangles be at the center of one of the smaller bases of each of the four trapezoids and of length $2(1/8-\nu)w$ and let the other sides of the triangles be $w/4$. For $\tau > \tau_1$, all the agents of type (-1) located inside these rectangles will be unhappy. Now, as a worst case scenario, let us consider an agent outside the  $3w/2$-block and next to its corner which shares the smallest number of agents with the monochromatic regions. When the unhappy agents in the rectangles flip before $T(\radius)$, for this agent to be unhappy we need to have
\begin{align*}
\left[1-\frac{1}{4}- \left(\frac{1}{4}+\frac{1}{2}-\zeta \right)\nu - \frac{1}{4}\left(\frac{1}{8}-\nu \right)\right]\frac{1}{2} + \frac{o({\size})}{{\size}}<\tau,
\end{align*}
which can be simplified to (\ref{Eq:tau2_eq}).
This means that for $\tau < \tau_1$ and for sufficiently large $\size$ this agent will be unhappy with probability one. Since all the other agents of type (-1) right outside the $3w/2$-block share at least the same number of agents with the single-type regions, we have that for sufficiently large $\size$, all the (-1) agents right outside the $3w/2$-block will be unhappy with probability one.  
 \end{proof}


\begin{figure}[!t]
\centering
\includegraphics[width=\textwidth]{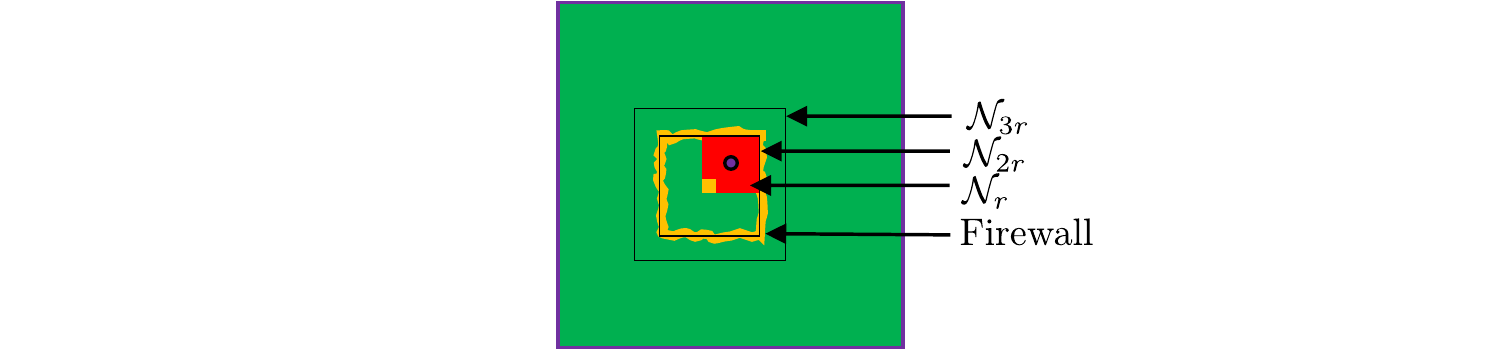}
\caption{Neighborhoods described in the proof of Lemma~\ref{Lemma:fw2}. Agent $u$ is depicted by the circle in the red square and the $\epsilon'$-radical region is depicted by the small orange square in the red square.}
\label{fig:thrm_fw2}
\end{figure}
\

The following lemma, which can be thought of as the counterpart of Lemma \ref{Lemma:fw} for $\tau \in (\tau_2,\tau_1]$, shows that conditional on some events, the size of the almost monochromatic region of an arbitrary agent is exponential in $\size$. Unless otherwise stated, by a good block we mean a good $\bsize$-block and by a bad block we mean a bad $\bsize$-block.

\begin{lemma*}\label{Lemma:fw2}
 Let $\mathcal{N}_{\radius}$, $\mathcal{N}_{{\radius}/2}$, $\mathcal{N}_{4r}$, and $\mathcal{N}_{r}$ be  all centered at  $u$  with  
 \begin{align*}
 {\radius} = 2^{[1-H(\tau')]{\size}/2}, \\
 r = 2^{[1-H(\tau')]{\size}/2-o({\size})}, 
 \end{align*}
 and $r<{\radius}/8$. Let $u^+$ denote an arbitrary (+1) agent, $T({\radius})$ be as defined in (\ref{Tinfdef}), and $\kappa>0$ be such that $\kappa  r {\size}^{3/2}$ is the total number of agents in a $2r$-chemical path. Conditioned on the following events, w.h.p. the almost monochromatic region of $u$ will have at least radius $r$.
\begin{enumerate} 
\item  $A = \left\{\forall v\in \mathcal{N}_{\radius}, \;  u^{+}   \mbox{ { would be happy at the location of $v$ at $t=0$}}\right\}, $
\item  $\textit{B = \{$T({\radius}/2) > 2\kappa  r {\size}^{3/2}$\}}$,
\item  $C = \text{\{$\mathcal{N}_r$  { contains a $2r$-expandable radical region at $t=0$\}}},$
\item  $D = \mbox{\{{$\not \exists$ cluster of bad blocks with $l_1$-radius} $r' > {\size}^2$ { in} $\mathcal{N}_{4r}$ at $t=0$\}},$
\item  $E = \text{\{${\size}_B/{\size}_G < e^{-{\size}^\epsilon}$ in $\mathcal{N}_{r}$ at $t=0$\}}$, where ${\size}_B$ is the number of bad blocks and ${\size}_G$ is the number of good blocks in $\mathcal{N}_{r}$.
\end{enumerate}
\end{lemma*}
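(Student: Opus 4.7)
The plan is to mirror the architecture of the proof of Lemma~\ref{Lemma:fw}, replacing the annular firewall by a chemical firewall and ``monochromatic'' by ``almost monochromatic''. Event $C$ places an expandable radical region at the center of a $2r$-chemical path inside $\mathcal{N}_r$. By Lemma~\ref{Lemma:Trigger} this radical region produces a monochromatic $w$-block at its center w.h.p., and Lemma~\ref{Lemma:Sufficient2} extends it to a monochromatic $3w/2$-block. Lemma~\ref{Lemma:Monoch_spread2} then shows that monochromaticity propagates across the $\bsize$-blocks of the $2r$-chemical path one good block at a time, ultimately rendering the $\bfsize$-cycle inside that path monochromatic; by construction this $\bfsize$-cycle is a chemical firewall enclosing $u$.

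Next I would bound the time $T_f$ needed for the chain of at most $\kappa r N^{3/2}$ flips that build the chemical firewall. In the worst case, these flips happen sequentially, so $T_f$ is stochastically dominated by the sum $T'_f$ of $\kappa r N^{3/2}$ independent mean-one exponentials. Writing
\begin{align*}
P\bigl(T'_f \ge 2\kappa r N^{3/2}\bigr) \le P\bigl(|T'_f - \mathbb{E}[T'_f]| \ge \kappa r N^{3/2}\bigr),
\end{align*}
Chebyshev's inequality then yields
\begin{align*}
P\bigl(T'_f \ge 2\kappa r N^{3/2}\bigr) = O\!\left(\frac{\kappa r N^{3/2}}{(\kappa r N^{3/2})^{2}}\right) = O\!\left(\frac{1}{r N^{3/2}}\right),
\end{align*}
which tends to zero. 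Conditioning on event $B$ then guarantees that the chemical firewall is completed strictly before any $(+1)$ agent in $\mathcal{N}_{\radius}$ would become unhappy, so no outside disturbance reaches the interior during its formation. Once in place, the same argument used for Lemma~\ref{Lemma:firewall} (invoked right after the definition of a chemical firewall) shows that the chemical firewall remains static and impenetrable for all subsequent times, decoupling the interior from the evolution outside.

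It then remains to argue that the interior of the chemical firewall becomes \emph{almost} monochromatic by the end of the process. After $T_f$, the $3w/2$-block that ignited the firewall continues to spread through every good $\bsize$-block inside the firewall using Lemma~\ref{Lemma:Monoch_spread2}, each good block transmitting monochromaticity to all of its good-block neighbors. By event $D$, every connected cluster of bad blocks inside $\mathcal{N}_{4r}$ has $l_1$-radius at most $N^2$, so the bad clusters are polynomially small islands around which the spreading wave routes. Event $E$ further ensures that inside $\mathcal{N}_r$ the total number of bad blocks is at most $e^{-N^{\epsilon}}$ times the number of good blocks; hence the number of agents failing to join the majority type is bounded by an $e^{-N^{\epsilon}}$ fraction of those that do, matching the definition of an almost monochromatic region of radius at least $r$.

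The main obstacle I anticipate is certifying the two propagation steps rigorously. First, within the window $[T_f, T(\radius/2)]$ there must be enough time for the monochromatic wave to traverse every good block inside the firewall; this follows from the Kesten-type first-passage bound employed in Lemma~\ref{Lemma:Unhappy_growth}, adapted to the renormalized $\bsize$-block lattice restricted to good blocks, which is supercritical by Lemma~\ref{Lemma:goodblock}. Second, and more delicately, one must verify that the bad-block islands do not seed secondary clusters of the opposite type that, before being surrounded by the incoming wave, grow enough to violate the $e^{-N^{\epsilon}}$ ratio; here the polynomial bound from event $D$, combined with the fact that each bad island is encircled by good blocks already frozen to the majority type and with the sparsity provided by event $E$, furnishes the needed control.
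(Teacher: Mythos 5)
Your proposal follows essentially the same route as the paper's proof: condition on $A$, $B$, $C$ to form the chemical firewall, bound its formation time $T_f$ by a sum of $\kappa r N^{3/2}$ mean-one exponentials and apply Chebyshev, then use $D$ and $E$ to conclude the interior becomes almost monochromatic. If anything, you spell out the propagation chain (Lemma~\ref{Lemma:Trigger} $\to$ Lemma~\ref{Lemma:Sufficient2} $\to$ Lemma~\ref{Lemma:Monoch_spread2}) more explicitly than the paper does, and your variance computation $\mathrm{Var}(T'_f)=\kappa r N^{3/2}$ is internally consistent where the paper's displayed bound still carries the $r\sqrt{N}$ term over from Lemma~\ref{Lemma:fw}.
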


\begin{proof}
Conditional on events $A, B$, and $C$, w.h.p. a $2r$-expandable radical region will lead to the formation of a firewall that contains $\mathcal{N}_r$. With additional conditioning on events $D$ and $E$ once the firewall is formed, the expandable radical region will turn all the interior of at least $\mathcal{N}_r$ almost monochromatic by the end of the process. 
Let $M(r)$ denote the event that the radius of the almost monochromatic region of $u$ is at least $r$. 
Let $T_f$ be the time at which the firewall forms, i.e., its agents become monochromatic. We have
\begin{align*}
P\left(M(r)\given[\Big]A, B , C , D, E \right) \ge P\left(T_f<2\kappa  r \sqrt{{\size}}  \given[\Big]A,B,C,D,E \right).
\end{align*}
Let $T'_f$ be the sum of $\kappa  r {\size}^{3/2}$ exponential random variables with mean one, where $\kappa  r {\size}^{3/2}$ is the total number of agents in the $2r$-chemical path. It is easy to see that $T'_f$ is an upper bound for the time it takes until the firewall is formed, i.e., all agents inside the firewall flip to (+1), one by one. Hence, we have
\begin{align*}
  P\left(M(r)\given[\Big]A,B,C,D,E\right) \ge P\left(T'_f<2\kappa  r \sqrt{{\size}}\right). 
\end{align*}
Next we bound this probability. We have
\begin{align*}
P\left(T'_f\ge 2\kappa  r {\size}^{3/2}\right) &\le P\left(|T'_f-\mathbb{E}[T'_f]|\ge \kappa r {\size}^{3/2}\right).
\end{align*}
By Chebyshev's inequality we have
\begin{align*}
P\left(T'_f\ge 2\kappa  r {\size}^{3/2}\right) = O\left(\frac{VarT'_f}{(r{\size}^{3/2})^2}\right) =O\left(\frac{r\sqrt{{\size}}}{(r{\size}^{3/2})^2}\right) = O \left(\frac{1}{r{\size}^{3/2}}\right),
\end{align*}
leading to the desired result.  
\end{proof}

With the above definitions and results, we can proceed to the first part of the proof of Theorem~\ref{Thrm:second_theorem} (for $\tau_2< \tau \leq \tau_1$).

\

\noindent {\bf Proof of Theorem~\ref{Thrm:second_theorem} {\normalfont (for $\tau_2< \tau \le \tau_1$)}:}
First, we derive the lower bound in the theorem letting
\begin{equation}
a(\tau) = \left[1-(2\epsilon'+\epsilon'^2)\right]\left[1-H(\tau')\right],
\end{equation}
where $\epsilon' > f(\tau)$, and $\tau'=(\tau {\size} -2)/({\size}-1)$.
 

We consider  neighborhoods $\mathcal{N}_{\radius}$,  $\mathcal{N}_{\radius/2}$,  and  $\mathcal{N}_r$,  with ${\radius}  =  2^{[1-H(\tau')]{\size}/2}$ and $r< \radius/8$,
all centered at node $u$   as depicted in Figure~\ref{fig:thrm2}. 
We let
${\radius}  =  2^{[1-H(\tau')]{\size}/2}$,
and $u^{+}$ be an arbitrary (+1) agent and consider the following event in the initial configuration
\begin{align*}
A = \{\forall v\in \mathcal{N}_{\radius},   u^{+}   \mbox{ would be happy at the location of } v \mbox{ at } t=0 \}. 
\end{align*}
By Lemma \ref{Lemma:R_unhappy} of the Appendix, we have
\begin{align} 
P(A) \rightarrow 1,\  \mbox{ as } \    {\size} \rightarrow \infty.
\label{pa2}
\end{align}
We then consider a chemical firewall of radius $2r$ centered anywhere inside $\mathcal{N}_r$, let $\kappa >0$   so that $\kappa  r {\size}^{3/2}$ is an upper bound on the total number of agents in the $2r$-chemical path containing it, and
consider the event  
\begin{align*}
\textit{B = \{$T({\radius}/2) > 2\kappa  r {\size}^{3/2}$\}},
\end{align*}
where $T({\radius})$ is defined in (\ref{Tinfdef}).
By  Lemma~\ref{Lemma:Unhappy_growth}, we can choose $r$ proportional to ${\radius}/({\size}^3)$ so  that
\begin{align}
P\left(B\given A\right) \rightarrow 1,\  \mbox{ as } \    {\size} \rightarrow \infty.
\label{pea2}
\end{align}
With this choice, we also have
\begin{align*}
r &= 2^{[1-H(\tau')]{\size}/2-o({\size})},
\end{align*}
and if we consider the event
\begin{align*}
C = \text{\{$\mathcal{N}_r$  contains a $2r$-expandable radical region at $t=0$\}},
\end{align*}
by Lemma \ref{Lemma:expandable} and Lemma \ref{Lemma:firewall_path} and the FKG inequality,  since $\epsilon'>f(\tau)$ we conclude  that for sufficiently large ${\size}$ 
\begin{align}
P(C) \ge 2^{-[1-H(\tau')][2\epsilon'+(\epsilon')^2]{\size} -o({\size})}, 
\label{pc2}
\end{align}
and there is a $2r$-expandable radical region surrounding $u$.
Let us divide the grid into $m$-blocks in the obvious way. Let the \textit{radius} of a bad cluster be defined as 
\begin{align*}
\sup \{\Delta(0,x): x \in \mbox{bad cluster}\}.
\end{align*}
where $\Delta$ denotes the $l_1$ distance.
Let
\begin{align*}
D = \mbox{\{{$\not \exists$ cluster of bad blocks with $l_1$-radius} $r' > {\size}^2$ { blocks in} $\mathcal{N}_{4r}$ at $t=0$\}}.
\end{align*} 
By Lemma \ref{Lemma:bad_cluster}, we have
\begin{align} 
P(D) \rightarrow 1,\  \mbox{ as } \    {\size} \rightarrow \infty.
\label{pd2}
\end{align}
Finally, let $\epsilon \in (0,1/2)$ and let ${\size}_B$ and ${\size}_G$ denote the total number of bad blocks sharing at least one agent with $\mathcal{N}_r$ and good blocks contained in $\mathcal{N}_r$ respectively and let
\begin{align*}
E = \text{\{${\size}_B/{\size}_G < e^{-{\size}^\epsilon}$ in $\mathcal{N}_{r}$ at $t=0$\}}.
\end{align*}
By an application of Lemma \ref{Lemma:bad_ratio}, also
\begin{align} 
P(E) \rightarrow 1,\  \mbox{ as } \    {\size} \rightarrow \infty.
\label{pe2}
\end{align}
See Figure \ref{fig:thrm2} for a visualization of the neighborhoods defined above.

 \begin{figure}[!t]
\centering
\includegraphics[width=\textwidth]{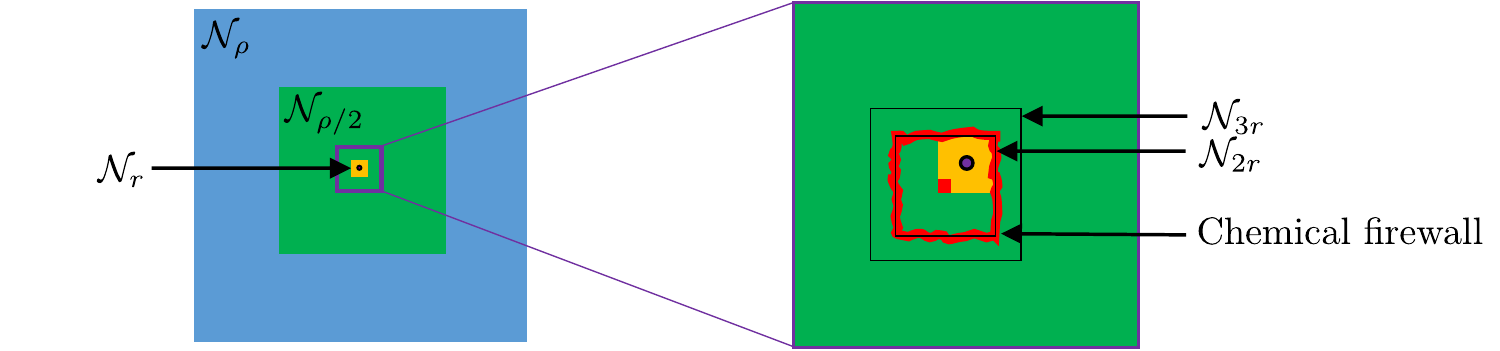}
\caption{Neighborhoods described in the proof of Theorem~\ref{Thrm:second_theorem}.}
\label{fig:thrm2} 
\end{figure}
 
Now it is easy to see that the events $A$, $B$, $C$, $D$, and $E$ are increasing. By combining (\ref{pa2}), (\ref{pea2}), (\ref{pc2}), (\ref{pd2}), and (\ref{pe2}), and using  a version of the Fortuin-Kasteleyn-Ginibre (FKG) inequality adapted  to our dynamic process described in Lemma~\ref{Lemma:FKG_Harris} of the Appendix, it follows  that for ${\size}$ sufficiently large 
\begin{align}
P(A \cap B \cap C \cap D \cap E) &\ge P(A)P(B)P(C)P(D)P(E) \\
&\ge P(A)P(A\cap B)P(C)P(D)P(E) \nonumber \\
&=  P(B|A)[P(A)]^2P(C)P(D)P(E) \nonumber \\
&=  2^{-[1-H(\tau')](2\epsilon'+(\epsilon')^2){\size} -o({\size})}. \label{Eq:intersect2}
\end{align}
 
 Since by Lemma~\ref{Lemma:fw2} we have that conditional on  $A, B, C ,D, E$, at the end of the process w.h.p. agent $u$ will be  part of an almost monochromatic region with radius at least $r$, it follows that (\ref{Eq:intersect2}) is also a lower bound for the probability that the monochromatic neighborhood of agent $u$ will have   size of at least proportional to $r^2$. The desired lower bound on the expected size of the monochromatic region now easily follows by multiplying  (\ref{Eq:intersect2}) by the size  of a neighborhood of radius $r$.
 The second part of the proof follows the same argument as the second part of the proof of Theorem~\ref{Thrm:first_theorem}.  
\subsection{Extension to the interval $1/2<\tau<1-\tau_2$}
We call  \textit{super-unhappy agents} the unhappy agents that can potentially become happy once they flip their type.
While for $\tau<1/2$ unhappy agents  can alway  become happy by flipping their type,  for $\tau>1/2$ this is only true for the super-unhappy agents. 
It follows that for  $\tau > 1/2$ super-unhappy agents act  in the same way as unhappy agents do for   $\tau < 1/2$. 


We let $\bar{\tau} = 1 - \tau + 2/{\size}$. A \textit{super-unhappy agent} of type (-1) is an agent for which $W < \bar{\tau}{\size}$ where $W$ is the number of (-1) agents in its neighborhood. The reason for adding the term $2/{\size}$ in the definition is to account for the strict inequality that is needed for being unhappy and the flip of the agent at the center of the neighborhood which adds one agent of its type to the neighborhood. 
A \textit{super-radical region} is a neighborhood $\mathcal{N}_S$ of radius $S=(1+\epsilon')w$ such that $W_S <  \bar{\tau}'(1+\epsilon')^2{\size} $, where    $\epsilon \in (0,1/2)$ and
\begin{equation*}
\bar{\tau}' = \left(1-\frac{1}{\bar{\tau} {\size}^{1/2-\epsilon}}\right)\bar{\tau}. 
\end{equation*}

By replacing $\tau$ with $\bar{\tau}$, ``unhappy agent" with ``super-unhappy agent"  and   ``radical region" with  ``super-radical region,"   it can be checked that all proofs extend to the interval $1/2<\tau<1-\tau_2$.


\section{Concluding Remarks }  \label{Sec:Conclusion}

The main  lesson learned  from our study is that even a small amount of intolerance can lead to segregation at the large scale. We remark, however, that the model is somewhat naturally biased towards segregation because agents can flip their type when a sufficiently large number of their neighbors are different from themselves, but they never flip when a  large number of their neighbors are of their same type. Variations   where agents could potentially flip in both situations, namely they are ``uncomfortable" being  both a minority or a majority in a largely segregated area, would be of interest. Another direction of further study could be the investigation of how the parameter of the initial  distribution  of the agents influences segregation,  since it is only known that complete segregation occurs w.h.p.\  for $\tau=1/2$ and $p \in (1-\epsilon,1)$, while we have shown that for $0.344 < \tau < 1/2$ and $p=1/2$ the size of the monochromatic region  is at most exponential in the size of its neighborhood, w.h.p. We also point out that for $\tau = 1/2$ and for   $\tau \in [1/4,\tau_2] \cup [1-\tau_2,3/4]$ the behavior of the model is unknown. Finally,    our results only show lower bounds on the expected size of the monochromatic region containing a given agent, but they do not show that in the steady state every agent ends up  in an exponentially large monochromatic region with high probability. A possibility that is consistent with these results (but inconsistent with the simulation results) is that only an exponentially small fraction of the nodes are contained in large monochromatic regions at the end of the  process, but that those regions are so large that the expected radius of the monochromatic region containing any node is exponentially large. Proving an exponential lower bound on the size of the monochromatic region w.h.p., rather than in expectation, would rule out this possibility.


\section*{Acknowledgment}
The authors  thank Prof. Jason Schweinsberg  of the Mathematics Department of  University of California at San Diego for providing invaluable feedback on earlier drafts of the paper and for suggesting some improved proofs.

\printbibliography

\newpage
\section{Appendix}








\subsection{Concentration bound on the number of agents in the initial configuration} \label{Sec:A_Sufficient}

\begin{lemma*}\label{Lemma:balanced}
Let $\epsilon \in (0,1/2)$, and let $\mathcal{N}$ be an arbitrary neighborhood in the grid with ${\size}$ agents. 
There exist $c,c' \in \mathbb{R}^+$, such that 
\begin{align}\label{Eq:balanced}
P\left(|W - {\size}/2| < c{\size}^{1/2+\epsilon}\right) \ge 1-2e^{-c'{\size}^{2\epsilon}}.
\end{align}
\end{lemma*}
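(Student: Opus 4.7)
The plan is to recognize this as a standard concentration inequality for a binomial random variable and apply Hoeffding's inequality (or equivalently, a Chernoff bound) directly. Since every agent in the grid is independently assigned type $(-1)$ with probability $1/2$ in the initial configuration, the count $W$ of $(-1)$ agents in the fixed neighborhood $\mathcal{N}$ is a sum of ${\size}$ i.i.d.\ Bernoulli$(1/2)$ random variables, so $\mathbb{E}[W] = {\size}/2$.

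First I would write $W = \sum_{i=1}^{\size} X_i$, where each $X_i \in \{0,1\}$ is bounded, and then invoke Hoeffding's inequality to obtain
\begin{align*}
P\left(|W - {\size}/2| \ge t\right) \le 2 e^{-2t^2/{\size}}
\end{align*}
for any $t > 0$. Substituting $t = c {\size}^{1/2 + \epsilon}$ yields
\begin{align*}
P\left(|W - {\size}/2| \ge c {\size}^{1/2+\epsilon}\right) \le 2 e^{-2c^2 {\size}^{2\epsilon}},
\end{align*}
which gives the desired bound with $c' = 2c^2$. Alternatively, one could mimic the martingale argument used in Lemma~\ref{Lemma:lemma_0} (building the Doob martingale $M_n = \mathbb{E}[W \mid X_1,\dots,X_n]$ with unit bounded differences) and apply Azuma's inequality directly, which is arguably more consistent with the style already employed in the paper.

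There is no real obstacle here: the statement is a standard tail bound on a binomial, and the only minor bookkeeping is matching the constants $c$ and $c'$. The role of $\epsilon \in (0,1/2)$ is only to ensure that $c {\size}^{1/2+\epsilon} \le {\size}/2$ for ${\size}$ large enough so that the bound is meaningful, but since the statement is for arbitrary ${\size} \ge 1$ with constants $c, c'$ depending on $\epsilon$, one may simply take $c$ large enough (or absorb small-${\size}$ cases into the constants) so the inequality holds vacuously there.
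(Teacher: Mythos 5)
Your proof is correct and is essentially the same as the paper's: the paper builds the Doob martingale $M_n = \mathbb{E}[W\mid \mathcal{F}_n]$ with bounded differences $1/2$ and applies Azuma's inequality, which for independent Bernoulli$(1/2)$ summands is exactly the Hoeffding bound $2e^{-2t^2/\size}$ you invoke. Your remark that one could instead run the martingale/Azuma argument is precisely what the paper does, so the two routes coincide.
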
 

\begin{proof} Let $W_i$ be the random variable associated with the type of the $i$'th agent in $\mathcal{N}$ such that it is one whenever the type is (-1) and zero otherwise. Let $\mathcal{F}_i = \sigma(W_1,...,W_i)$. Then it is easy to see that $M_n = \mathbb{E}[W|\mathcal{F}_n]$ for $n=1,...,{\size}$ is a martingale. It is also easy to see that $M_0 = \mathbb{E}[W] =  {\size}/2$, and $M_{{\size}} = W$. We also have 
\begin{align*}
|M_n - M_{n-1}|  &= \left|\mathbb{E}\left(\sum_{i=1}^{{\size}}W_i|\mathcal{F}_n\right) - \mathbb{E}\left(\sum_{i=1}^{{\size}}W_i|\mathcal{F}_{n-1}\right)\right|  \\
&= \left|W_n+ ({\size}-n)/2 - [{\size}-(n-1)]/2\right| \\
&\le \left|W_n - \frac{1}{2}\right| \le 1/2,
\end{align*}
for  $n=1,2,...,{\size}$. Now using Azuma's inequality, there exist constants $c_1,c_2 \in \mathbb{R}^+$ such that  
\begin{align*}
P\left(W - {\size}/2 \ge c{\size}^{1/2+\epsilon}\right) \le e^{-c_1{\size}^{2\epsilon}},
\end{align*} 
and
\begin{align*}
P\left(W - {\size}/2 \le -c'{\size}^{1/2+\epsilon}\right) \le e^{-c_2{\size}^{2\epsilon}}.
\end{align*} 
It follows by an application of Boole's inequality  that there exists a constant $c\in \mathbb{R}^+$ such that (\ref{Eq:balanced}) holds.   
\end{proof}

\subsection{Preliminary results for the proof of Theorem~\ref{Thrm:first_theorem}} \label{Sec:A_proof_theorem_1}

First, we give a bound on the probability of having an unhappy agent in the initial configuration, we then extend this bound for a radical region.
\begin{lemma*} \label{Lemma:unhappyprob}
Let $p_u$ be the probability of being unhappy for an arbitrary agent in the initial configuration. There exist positive constants $c_l$ and $c_u$ which depend only on $\tau$ such that
\begin{align*}
c_{l}\frac{2^{-[1-H(\tau')]{\size}}}{\sqrt{{\size}}}  \le p_u \le c_{u}\frac{2^{-[1-H(\tau')]{\size}}}{\sqrt{{\size}}}.
\end{align*}
 where $\tau' = \frac{\tau {\size} - 2}{{\size}-1}$, and $H$ is the binary entropy function.
\end{lemma*}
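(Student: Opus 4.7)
The plan is to express $p_u$ exactly as a lower tail probability of a symmetric binomial and then invoke the standard tight two-sided estimate on partial binomial sums. First I would condition on the type of $u$: by the symmetry of the Bernoulli$(1/2)$ initial distribution, I may assume without loss of generality that $u$ is of type $(+1)$. The types of the other $N-1$ agents in $\mathcal{N}(u)$ are then i.i.d.\ Bernoulli$(1/2)$ and independent of the type of $u$, so the count $X$ of the remaining $(+1)$ agents in $\mathcal{N}(u)$ is $\operatorname{Binomial}(N-1,1/2)$-distributed. By definition, $u$ is unhappy iff $1+X < \tau N$, and since $\tau N \in \mathbb{Z}$ by the standing assumption on $\tau$, this is equivalent to $X \le \tau N - 2 = (N-1)\tau'$. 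Hence
\begin{equation*}
p_u \;=\; \frac{1}{2^{N-1}}\sum_{k=0}^{(N-1)\tau'}\binom{N-1}{k}.
\end{equation*}

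Next, I would apply the elementary sandwich already used earlier in the paper in the proof of Lemma~\ref{Lemma:lemma_2}: for any $\alpha \in (0,1/2)$ and $n \ge 1$,
\begin{equation*}
\binom{n}{\lfloor\alpha n\rfloor} \;\le\; \sum_{k=0}^{\lfloor\alpha n\rfloor}\binom{n}{k} \;\le\; \frac{1-\alpha}{1-2\alpha}\binom{n}{\lfloor\alpha n\rfloor},
\end{equation*}
combined with the Stirling-type asymptotic $\binom{n}{\lfloor\alpha n\rfloor} = \Theta\bigl(2^{nH(\alpha)}/\sqrt{n}\bigr)$ whose multiplicative constants are continuous on any compact sub-interval of $(0,1/2)$. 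Plugging $n=N-1$ and $\alpha = \tau'$, and observing that $\tau' \to \tau \in (\tau_2,1/2)$ stays bounded away from both $0$ and $1/2$ for all sufficiently large $N$, I obtain
\begin{equation*}
p_u \;=\; \Theta\!\left(\frac{2^{-(N-1)[1-H(\tau')]}}{\sqrt{N-1}}\right).
\end{equation*}

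Finally, I would absorb the additive discrepancies into the constants, using that $2^{-(N-1)[1-H(\tau')]} = 2^{1-H(\tau')}\cdot 2^{-N[1-H(\tau')]}$ with $2^{1-H(\tau')}$ bounded between two positive constants depending only on $\tau$, and similarly $\sqrt{N-1} = \Theta(\sqrt{N})$. This yields the claimed bounds with some $c_l,c_u$ depending only on $\tau$. There is no real obstacle here; the whole argument is a routine application of binomial tail asymptotics. The only minor subtlety is the integer rounding, which is precisely why the statement is phrased in terms of $\tau'$ rather than $\tau$: the identity $(N-1)\tau'=\tau N-2$ makes the upper summation limit an exact integer, so no $\pm 1$ rounding losses appear in the exponent.
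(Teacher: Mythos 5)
Your proposal is correct and follows essentially the same route as the paper: write $p_u$ as a lower-tail partial sum of a symmetric binomial with $N-1$ trials and cutoff $\tau N-2=(N-1)\tau'$, sandwich the partial sum by its largest term times a constant, and apply Stirling. Your handling of the conditioning on the type of $u$ and of the uniformity of the Stirling constants (via $\tau'\to\tau$ staying in a compact subinterval of $(0,1/2)$) is, if anything, slightly more careful than the paper's.
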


\begin{proof}
We have 
\begin{align} \label{eq:pu}
p_u = \frac{1}{2^{\size}} \sum_{k = 0 }^{\tau {\size} - 2}{{{\size}-1}\choose{k}} +   \frac{1}{2^{\size}} \sum_{k = 0 }^{\tau {\size} - 2}{{{\size}-1}\choose{k}},
\end{align}
where the two unit reduction is to account for the strict inequality and the agent at the center of the neighborhood. Let $\tau' = \frac{\tau {\size} - 2}{{\size}-1}$. After some   algebra we have
\begin{align*}
{{\size}-1 \choose \tau' ({\size}-1)} \le \sum_{k = 0 }^{\tau' ({\size}-1)}{{{\size}-1}\choose{k}} \le \frac{1-\tau'}{1-2\tau'}{{\size}-1 \choose \tau' ({\size}-1)},
\end{align*}
and using Stirling's formula, there exist constants $c,c' \in \mathbb{R}^+$ such that
\begin{align*}
{c\frac{2^{-[1-H(\tau')]({\size}-1)}}{\sqrt{({\size}-1)\tau'(1-\tau')}}} \le {{\size}-1 \choose \tau' ({\size}-1)} \le c'\frac{2^{-[1-H(\tau')]({\size}-1)}}{\sqrt{({\size}-1)\tau'(1-\tau')}}.
\end{align*}
The result follows by combining the above inequalities.  
\end{proof}

\begin{lemma*} \label{Lemma:super_unhappy_prob}
There exist positive constants $c_l$ and $c_u$ which depend only on $\tau$ such that in the initial configuration, an arbitrary neighborhood with radius $(1+\epsilon')w$ is a radical region with probability $p_{\epsilon'}$ where we have
\begin{align*}
c_{l}{2^{-[1-H(\tau'')](1+\epsilon')^2{\size}-o({\size})} \le p_{\epsilon'} \le c_{u} 2^{-[1-H(\tau'')](1+\epsilon')^2{\size}+o({\size})}},
\end{align*}
where $\tau'' = (\lfloor \hat{\tau}(1+\epsilon')^2{\size} \rfloor - 1) / (1+\epsilon')^2{\size} $, $\hat{\tau} = (1-{1}/{(\tau {\size}^{1/2-\epsilon})})\tau$, and $H$ is the binary entropy function.
\end{lemma*}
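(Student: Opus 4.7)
The proof will follow the same template as Lemma~\ref{Lemma:unhappyprob}. Since the neighborhood $\mathcal{N}_{(1+\epsilon')w}$ consists of $n := (2(1+\epsilon')w+1)^2 = (1+\epsilon')^2 \size + O(\sqrt{\size})$ i.i.d.\ Bernoulli(1/2) agents in the initial configuration, the event that it is a radical region is precisely the event that the number $W_S$ of $(-1)$ agents satisfies $W_S < \hat{\tau}(1+\epsilon')^2 \size$. Writing $K := \lfloor \hat{\tau}(1+\epsilon')^2 \size\rfloor - 1$, this is a binomial tail probability
\begin{align*}
p_{\epsilon'} \;=\; \frac{1}{2^{n}}\sum_{k=0}^{K}\binom{n}{k}.
\end{align*}
By definition of $\tau''$ we have $K = \tau'' (1+\epsilon')^2 \size$, and since $\hat{\tau}<1/2$ (for large $\size$) so is $\tau''$, hence the sum is below the mean.

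Next, I would apply the standard two-sided inequality used in Lemma~\ref{Lemma:unhappyprob}: for any integer $0<k<n/2$,
\begin{align*}
\binom{n}{k} \;\le\; \sum_{j=0}^{k}\binom{n}{j} \;\le\; \frac{n-k}{n-2k}\binom{n}{k},
\end{align*}
which sandwiches the binomial tail between the central term $\binom{n}{K}$ and a polynomial-in-$\size$ multiple of it, using that $\tau''$ is bounded away from $1/2$ for all sufficiently large $\size$. Applying Stirling's formula then gives
\begin{align*}
\binom{n}{K} \;=\; \frac{2^{\,H(\tau'')\,n}}{\sqrt{2\pi n\,\tau''(1-\tau'')}}\bigl(1+o(1)\bigr),
\end{align*}
and combining with the normalization $1/2^n$ yields $p_{\epsilon'}$ up to a factor polynomial in $\size$, which is absorbed into the $o(\size)$ term in the exponent. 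Since $n = (1+\epsilon')^2\size + O(\sqrt{\size})$, replacing $n$ by $(1+\epsilon')^2\size$ in the exponent produces only an additional $o(\size)$ error, establishing both the lower and upper bounds of the lemma with appropriate constants $c_l,c_u$ depending on $\tau$ through $\tau''$.

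There is no real obstacle here: this is essentially the same argument as Lemma~\ref{Lemma:unhappyprob} applied to a slightly larger neighborhood, with the only bookkeeping concerns being (i) matching the strict inequality in the definition of a radical region against the floor function inside $\tau''$, which is what motivates the ``$-1$'' in the definition of $\tau''$ and forces the upper summation limit to be $K$ rather than $K+1$; and (ii) tracking how the lower-order terms in $n = (1+\epsilon')^2\size + O(\sqrt{\size})$, together with the polynomial factor from the sandwich inequality and Stirling's formula, get swept into the $o(\size)$ term in the exponent. Both are routine.
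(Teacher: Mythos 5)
Your proposal is correct and follows exactly the route the paper intends: the paper's own proof of this lemma is just the one-line remark that it ``follows the same lines as Lemma~\ref{Lemma:unhappyprob},'' and what you wrote is precisely that argument carried out for the larger neighborhood --- binomial tail for a Binomial$(n,1/2)$ count with $n=(1+\epsilon')^2\size+O(\sqrt{\size})$, the two-sided sandwich around the boundary term, and Stirling, with the polynomial prefactors and the $O(\sqrt{\size})$ discrepancy in $n$ absorbed into the $o(\size)$ in the exponent. Your bookkeeping remarks about the strict inequality and the ``$-1$'' in $\tau''$ are consistent with the paper's conventions in Lemma~\ref{Lemma:unhappyprob}.
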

\begin{proof}
The proof follows the same lines as in the proof of Lemma \ref{Lemma:unhappyprob}.  
\end{proof}


\begin{lemma*} \label{Lemma:R_unhappy}
Let ${\radius}  =  2^{[1-H(\tau')]{\size}/2}$ and 
\begin{align*}
A = \left\{\forall v\in \mathcal{N}_{\radius}, \;  u^{+}   \mbox{ {\normalfont would be happy at the location of $v$ at  $t=0$}}\right\}. 
\end{align*}
 Then $A$ occurs w.h.p. 
\end{lemma*}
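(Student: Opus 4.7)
The plan is to apply a simple union bound over all positions in $\mathcal{N}_{\radius}$, using Lemma~\ref{Lemma:unhappyprob} as the per-vertex estimate. For a fixed $v \in \mathcal{N}_{\radius}$, the event that $u^{+}$ would be unhappy at the location of $v$ depends only on the types of the $N-1$ agents at positions in $\mathcal{N}(v)\setminus\{v\}$, which are i.i.d.\ Bernoulli$(1/2)$. Placing a $(+1)$ at $v$ makes it unhappy iff fewer than $\tau N - 1$ of these $N-1$ agents are of type $(+1)$, so the probability of this event equals $P(\mathrm{Bin}(N-1,1/2) \le \tau N - 2)$. This is exactly one of the two summands appearing in the expression for $p_u$ in Lemma~\ref{Lemma:unhappyprob} (the other being the symmetric term for $(-1)$), so by the bound proved there, it is at most $c_u\, 2^{-[1-H(\tau')]N}/\sqrt{N}$ for some constant $c_u > 0$ depending only on $\tau$.

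Next I would bound the number of positions in the neighborhood as $|\mathcal{N}_{\radius}| = (2\radius+1)^2 \le 4\radius^2 = 4\cdot 2^{[1-H(\tau')]N}$, where we used $\radius = 2^{[1-H(\tau')]N/2}$. The union bound then yields
\begin{align*}
P(A^c) \le |\mathcal{N}_{\radius}|\cdot c_u\frac{2^{-[1-H(\tau')]N}}{\sqrt{N}} \le \frac{4c_u}{\sqrt{N}} = O\!\left(\frac{1}{\sqrt{N}}\right),
\end{align*}
which tends to zero as $N \to \infty$. Equivalently, $P(A) \to 1$, giving the claimed w.h.p.\ statement.

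The only point worth care is the reduction from Lemma~\ref{Lemma:unhappyprob}, which bounds the probability that an agent with a random type is unhappy, to the bound needed here, where the type at $v$ is fixed to $(+1)$. This is handled by conditioning: by symmetry of the Bernoulli$(1/2)$ distribution, the conditional probability of being unhappy given type $(+1)$ equals the conditional probability given type $(-1)$, and both equal the unconditional probability $p_u$ (up to the factor of two visible in the proof of Lemma~\ref{Lemma:unhappyprob}). Note that no independence between the events indexed by different $v$'s is needed; the union bound suffices because the careful choice of $\radius = 2^{[1-H(\tau')]N/2}$ makes the product of the number of positions and the per-position failure probability decay like $1/\sqrt{N}$. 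No major obstacle is anticipated.
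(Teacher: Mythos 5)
Your proposal is correct and follows essentially the same route as the paper: a union bound over the $|\mathcal{N}_{\radius}|\le 4\cdot 2^{[1-H(\tau')]N}$ positions, each contributing at most $p_u = O\bigl(2^{-[1-H(\tau')]N}/\sqrt{N}\bigr)$ by Lemma~\ref{Lemma:unhappyprob}, so that $P(A^c)=O(1/\sqrt{N})$. (A minor quibble: the probability that a $(+1)$ agent placed at $v$ is unhappy is the full binomial tail $P(\mathrm{Bin}(N-1,1/2)\le \tau N-2)$, which equals $p_u$ itself rather than one of its two summands, but this only changes the constant and does not affect the argument.)
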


\begin{proof}
Let $U_i$ for $i=1,2,...,|\mathcal{N}_{\radius}|$ be the event that agent $u^+$ would be happy at the location of $i$'th agent of $\mathcal{N}_{\radius}$. It is easy to see that $P(U_i)=p_u$ (see~(\ref{eq:pu})). Hence we have
\begin{align*}
P(A) &= P\left(U_1^C \cap ... \cap U^C_{|\mathcal{N}_{\radius}|}\right) \\
&= 1 - P\left(U_1 \cup ... \cup U_{|\mathcal{N}_{\radius}|}\right)  \\
&\ge  1 - |\mathcal{N}_{\radius}|\frac{2^{-[1-H(\tau')]{\size}}}{\sqrt{{\size}}} \\
&\ge 1 - \frac{5}{\sqrt{{\size}}} 
\end{align*}
which tends to one as ${\size} \rightarrow \infty$.  
\end{proof}

The following lemma gives a simple lower bound for the probability of having a radical region inside a neighborhood which has radius $r= 2^{[1-H(\tau')]{\size}/2-o({\size})}$. We call a radical region with radius $(1+\epsilon')w$ an $\epsilon'$-radical region.

\begin{lemma*} \label{Lemma:r_unhappy}
Any arbitrary neighborhood $\mathcal{N}_r$ with radius $r= 2^{[1-H(\tau')]{\size}/2-o({\size})}$ in the initial configuration has at least one $\epsilon'$-radical region in it with probability at least $2^{-[1-H(\tau')](2\epsilon'+\epsilon'^2){\size} -o({\size})}$.
\end{lemma*}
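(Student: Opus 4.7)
The strategy is to partition $\mathcal{N}_r$ into a maximal collection of pairwise disjoint sub-neighborhoods of radius $(1+\epsilon')w$ and exploit independence of their configurations under the i.i.d.\ Bernoulli$(1/2)$ initial placement. Each such tile is a square of side $2(1+\epsilon')w+1 = O(\sqrt{\log n})$, so the number of disjoint tiles I can pack into $\mathcal{N}_r$ is
\begin{equation*}
k \;\ge\; \Bigl\lfloor \tfrac{r}{(2(1+\epsilon')w+1)} \Bigr\rfloor^{2} \;=\; \frac{r^{2}}{(1+\epsilon')^{2}\size}\,(1-o(1)) \;=\; 2^{[1-H(\tau')]\size - o(\size)},
\end{equation*}
where the polynomial factor $1/\size$ has been absorbed into $2^{-o(\size)}$, and I have used the assumption $r = 2^{[1-H(\tau')]\size/2 - o(\size)}$.

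By Lemma~\ref{Lemma:super_unhappy_prob}, each individual tile is an $\epsilon'$-radical region with probability at least $c_{l}\,2^{-[1-H(\tau'')](1+\epsilon')^{2}\size + o(\size)}$. Since $\tau''=\tau'+O(\size^{-1/2+\epsilon})$ and $H$ is Lipschitz on any compact subinterval of $(0,1)$, we have $H(\tau'')=H(\tau')+o(1)$, and hence
\begin{equation*}
p \;:=\; P(\text{a given tile is an }\epsilon'\text{-radical region}) \;\ge\; 2^{-[1-H(\tau')](1+\epsilon')^{2}\size - o(\size)}.
\end{equation*}
The crucial observation is that disjoint tiles depend on disjoint blocks of the initial random types, hence the events ``tile $i$ is an $\epsilon'$-radical region'' are mutually independent across $i$.

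Combining independence with the above bounds,
\begin{equation*}
P(\mathcal{N}_r \text{ contains no } \epsilon'\text{-radical region}) \;\le\; (1-p)^{k} \;\le\; e^{-pk}.
\end{equation*}
A direct expansion of $(1+\epsilon')^{2}=1+2\epsilon'+\epsilon'^{2}$ gives
\begin{equation*}
pk \;\ge\; 2^{-[1-H(\tau')](2\epsilon'+\epsilon'^{2})\size - o(\size)},
\end{equation*}
which tends to zero (slowly) as $\size\to\infty$. Applying the elementary inequality $1-e^{-x}\ge x/2$ for $x\in(0,1)$ then yields
\begin{equation*}
P\bigl(\mathcal{N}_r \text{ contains an }\epsilon'\text{-radical region}\bigr) \;\ge\; 1-e^{-pk} \;\ge\; \tfrac{pk}{2} \;\ge\; 2^{-[1-H(\tau')](2\epsilon'+\epsilon'^{2})\size - o(\size)},
\end{equation*}
which is the claimed bound.

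The only real technical care is the careful bookkeeping of subexponential terms: justifying $H(\tau'')=H(\tau')+o(1)$ uniformly, absorbing the polynomial tiling loss $1/\size$ into $2^{-o(\size)}$, and verifying that $pk$ lies in the regime where the linearization $1-e^{-pk}\ge pk/2$ applies. None of these steps presents a real obstacle; the combinatorial core of the argument is simply a tiling plus independence plus a union bound on the complementary event.
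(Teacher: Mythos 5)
Your proposal is correct and follows essentially the same route as the paper: tile $\mathcal{N}_r$ into disjoint blocks of radius $(1+\epsilon')w$, invoke Lemma~\ref{Lemma:super_unhappy_prob} for the per-block probability, use independence of disjoint blocks, and lower-bound the union probability by roughly (number of blocks)$\times$(per-block probability), with $H(\tau'')=H(\tau')+o(1)$ absorbing the discrepancy into the $o(\size)$ term. If anything, your bookkeeping of the $1-(1-p)^k\ge pk/2$ step is more explicit than the paper's, which writes these approximations as equalities.
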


\begin{proof}
Divide the neighborhood into $2(1+\epsilon')w$-blocks, and let ${\size}_b$ denote the number of blocks in $\mathcal{N}_r$. Define the events 
\begin{align*}
Q_i = \{ \text{The i-th block of } \mathcal{N}_r \text{ is an } \epsilon'\text{-radical region} \}, \\
Q = \{\text{There is an } \epsilon'\text{-radical region in }\mathcal{N}_r \}.
\end{align*}
Using Lemma \ref{Lemma:super_unhappy_prob}, it follows that
\begin{align*}
P(Q) \ge& \  P\left(Q_1 \cup ... \cup Q_{{\size}_b}\right) \\
=& \  1 - P\left(Q_1^C \cap ... \cap Q^C_{{\size}_b}\right) \\
=&  \ \frac{4r^2}{(1+\epsilon')^2{\size}}2^{-[1-H(\tau'')](1+\epsilon')^2{\size}-o({\size})}\\
=& \ 2^{-[1-H(\tau')][2\epsilon'+\epsilon'^2]{\size} -[H(\tau')-H(\tau'')](1+\epsilon')^2{\size}-o({\size})} \\
=& \  2^{-[1-H(\tau')][2\epsilon'+\epsilon'^2]{\size} -o({\size})}.
\end{align*}  
\end{proof}

\subsection{FKG-Harris inequality} \label{FKGus}
The following is Theorem~4 in \cite{liggett2010stochastic} which is originally by Harris \cite{harris1977correlation}. Let $\sigma_t$ be the configuration of the agents on the grid at time $t$. Let $\mathbb{E}^{\sigma_0}[X]$ be the expected value of the random variable $X$, when the initial state of the system is $\sigma_0$. A probability distribution $\mu$ on $\{0,1\}^{\mathbb{Z}^d}$ is said to be positively associated if for all increasing $f$ and $g$ we have
\begin{align*}
\mathbb{E}[f(\sigma)g(\sigma)] \ge \mathbb{E}[f(\sigma)]\mathbb{E}[g(\sigma)].
\end{align*}
\begin{theorem*}[Harris] \label{Thrm:harris}
Assume the process satisfies the following two properties:
(a) Individual transitions affect the state at only one site.
(b) For every continuous increasing function $f$ and every $t>0$, the function $\sigma_0 \rightarrow \mathbb{E}^{\sigma_0}[f(\sigma_t)]$ is increasing. Then, if the initial distribution is positively associated, so is the distribution at all later times.
\end{theorem*}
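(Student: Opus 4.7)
The plan is to reduce the statement to a pointwise inequality for the semigroup and then invoke the positive association of $\mu_0$. Write $P_t h(\sigma)=\mathbb{E}^{\sigma}[h(\sigma_t)]$. First I would show that it suffices to establish the pointwise FKG-type inequality
\begin{equation*}
P_t(fg)(\sigma)\;\ge\;(P_tf)(\sigma)\,(P_tg)(\sigma)\qquad\text{for all }\sigma,
\end{equation*}
whenever $f,g$ are (bounded continuous) increasing. Granting this, if $\mu_0$ denotes the initial distribution and $\mu_t=\mu_0 P_t$, then for increasing $f,g$,
\begin{equation*}
\mathbb{E}_{\mu_t}[fg]=\mathbb{E}_{\mu_0}[P_t(fg)]\ge\mathbb{E}_{\mu_0}[(P_tf)(P_tg)]\ge\mathbb{E}_{\mu_0}[P_tf]\,\mathbb{E}_{\mu_0}[P_tg]=\mathbb{E}_{\mu_t}[f]\,\mathbb{E}_{\mu_t}[g],
\end{equation*}
where the second inequality uses that $P_tf$ and $P_tg$ are increasing by hypothesis (b) together with positive association of $\mu_0$. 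So it remains to prove the pointwise bound.

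For the pointwise bound I would use a standard semigroup interpolation. Define, for $s\in[0,t]$,
\begin{equation*}
\Psi_s=P_s\bigl[(P_{t-s}f)(P_{t-s}g)\bigr].
\end{equation*}
Then $\Psi_0=(P_tf)(P_tg)$ and $\Psi_t=P_t(fg)$, so it is enough to show $\frac{d}{ds}\Psi_s\ge 0$. Writing $u_s=P_{t-s}f$, $v_s=P_{t-s}g$ and using $\partial_s u_s=-Lu_s$, $\partial_s v_s=-Lv_s$, together with $\partial_s P_s h=P_sLh$,
\begin{equation*}
\frac{d}{ds}\Psi_s=P_s\bigl[L(u_sv_s)-u_s(Lv_s)-v_s(Lu_s)\bigr].
\end{equation*}
Since $P_s$ is positivity-preserving, the sign question reduces to showing that the carré-du-champ-like bracket is pointwise nonnegative whenever $u_s,v_s$ are increasing, which they are by hypothesis (b).

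Finally, I would handle the bracket using hypothesis (a). Since transitions change only one coordinate at a time, the generator splits as $L=\sum_x L_x$, where $L_x$ acts on the single site $x$ by swapping $\sigma_x$ to $\sigma_x^{\,\prime}$ at some nonnegative rate $c_x(\sigma,\sigma_x^{\,\prime})$. A direct computation on each single-site operator gives the identity
\begin{equation*}
L_x(uv)(\sigma)-u(\sigma)L_xv(\sigma)-v(\sigma)L_xu(\sigma)=\sum_{\sigma_x^{\,\prime}}c_x(\sigma,\sigma_x^{\,\prime})\bigl[u(\sigma^{x,\sigma_x^{\,\prime}})-u(\sigma)\bigr]\bigl[v(\sigma^{x,\sigma_x^{\,\prime}})-v(\sigma)\bigr],
\end{equation*}
where $\sigma^{x,\sigma_x^{\,\prime}}$ denotes the configuration with site $x$ reset to $\sigma_x^{\,\prime}$. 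When $u$ and $v$ are both increasing, the two bracketed differences for any fixed replacement $\sigma_x^{\,\prime}$ have the same sign, so every summand is nonnegative. Summing over $x$ yields $L(uv)\ge uLv+vLu$, hence $\frac{d}{ds}\Psi_s\ge 0$, which closes the argument. The main obstacle is the algebraic identity in the last step: one must verify it uniformly over both directions of a single-site flip and then argue that continuity/monotonicity of $P_tf,P_tg$ (inherited from $f,g$ via (b)) is enough regularity to apply the generator calculus — a standard but somewhat technical point handled by approximating by cylinder functions in the Feller setting.
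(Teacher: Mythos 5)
This statement is not proved in the paper at all: it is imported as Theorem~4 of the cited Liggett reference (originally due to Harris) and used as a black box in the proof of the FKG--Harris lemma. So there is no in-paper proof to compare against; what you have written is a correct reconstruction of the standard argument for Harris's theorem, essentially the one found in Liggett's treatment. The reduction to the pointwise semigroup inequality $P_t(fg)\ge (P_tf)(P_tg)$ is sound, and the final chain $\mathbb{E}_{\mu_0}[(P_tf)(P_tg)]\ge\mathbb{E}_{\mu_0}[P_tf]\,\mathbb{E}_{\mu_0}[P_tg]$ correctly uses both hypothesis (b) (so that $P_tf,P_tg$ are increasing) and the positive association of the initial law. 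The interpolation $\Psi_s=P_s[(P_{t-s}f)(P_{t-s}g)]$ and the derivative identity $\frac{d}{ds}\Psi_s=P_s[L(u_sv_s)-u_sLv_s-v_sLu_s]$ are right, and the single-site decomposition of the bracket into $\sum_x\sum_{\sigma_x'}c_x(\sigma,\sigma_x')[u(\sigma^{x,\sigma_x'})-u(\sigma)][v(\sigma^{x,\sigma_x'})-v(\sigma)]$ is exactly where hypothesis (a) enters: a transition touching one site produces two comparable configurations, so both factors have the same sign. One remark on your closing caveat: the regularity issues you flag (domain of the generator, approximation by cylinder functions) are genuine in the infinite-volume Feller setting of Harris's original theorem, but they are vacuous in this paper's application, where the process lives on a finite $n\times n$ torus and is a finite-state Markov chain with bounded generator, so the semigroup calculus is elementary.
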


The following is a version of the FKG inequality~\cite{fortuin1971correlation} in our setting. The original inequality holds for a static setting and  is extended here to our time-dynamic setting using Theorem~\ref{Thrm:harris}.
\begin{lemma*}[FKG-Harris] \label{Lemma:FKG_Harris}
Let $A$ and $B$ be two increasing events defined on our process on the grid. We have
\begin{align*}
P(A\cap B) \ge P(A)P(B).
\end{align*}
\end{lemma*}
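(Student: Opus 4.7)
The plan is to deduce the lemma as a direct consequence of Theorem~\ref{Thrm:harris}, after verifying its three hypotheses: positive association of the initial distribution, and properties (a) and (b) of the dynamics.

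First, the initial distribution is the product Bernoulli$(1/2)$ law on $\{-1,+1\}^{G_n}$, which is positively associated by the classical FKG (Harris) inequality for product measures. Next, property (a) is immediate: the Poisson clocks are attached to individual sites and a firing only permits the flip of the agent at that site, so every transition affects at most one site.

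For property (b), I would establish the attractiveness of the dynamics via a monotone graphical coupling. Two copies of the process, started from initial configurations $\sigma_0 \leq \sigma'_0$ with $\leq$ the partial order in which $\sigma \leq \sigma'$ iff the $(+1)$-sites of $\sigma$ are contained in those of $\sigma'$, are driven by the same Poisson clocks. I would show by induction on the firing times that $\sigma_t \leq \sigma'_t$ is preserved. The key observation is that the local flip rule is monotone in the surrounding $(+1)$-density: a $(+1)$-agent flips to $(-1)$ only when surrounded by too few $(+1)$'s, while a $(-1)$-agent flips to $(+1)$ only when surrounded by enough $(+1)$'s. A short case analysis at the firing site $v$, with the delicate case being $\sigma_t(v) = -1$ and $\sigma'_t(v) = +1$, rules out a transition that could invert the order: flipping $(+1) \to (-1)$ in the upper process would require a shortage of $(+1)$-neighbors, whereas flipping $(-1) \to (+1)$ in the lower process would simultaneously require a surplus, which is incompatible with $\sigma_t \leq \sigma'_t$. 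From this attractiveness, $f(\sigma_t) \leq f(\sigma'_t)$ holds pointwise under the coupling for every increasing $f$, and taking expectations yields (b).

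Finally, Theorem~\ref{Thrm:harris} gives positive association of the law of the process at every time $t$, which immediately yields $P(A \cap B) \geq P(A) P(B)$ for any two increasing events $A$ and $B$ that depend on the state at a single time. For trajectory-level increasing events, the very same monotone coupling shows that conditional on the realization of the Poisson clocks, the trajectory is a coordinatewise-monotone function of $\sigma_0$; the indicators $1_A$ and $1_B$ then become increasing functions of a product-measure-distributed $\sigma_0$, so the classical FKG inequality gives $P(A \cap B \mid \text{clocks}) \geq P(A \mid \text{clocks})\,P(B \mid \text{clocks})$, and integrating out the clocks (together with a further application of attractiveness ensuring that these conditional probabilities are themselves positively correlated in the clock process) gives the lemma. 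The main obstacle is the case analysis that establishes attractiveness of the flip rule; once this is in hand, the rest of the argument is standard.
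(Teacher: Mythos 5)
Your first two steps are sound and in fact more explicit than the paper: the positive association of the product Bernoulli initial law, the single-site nature of the transitions, and the attractiveness of the flip rule (your case analysis for $\sigma_t(v)=-1$, $\sigma'_t(v)=+1$ is correct for $\tau<1/2$: a $+\to-$ flip in the upper copy needs fewer than $\tau N$ plus-agents while a $-\to+$ flip in the lower copy needs more than $(1-\tau)N$ of them, and these are incompatible under $\sigma_t\le\sigma'_t$) are exactly the hypotheses of Theorem~\ref{Thrm:harris}, which the paper merely asserts. The gap is in your last step, where you pass from fixed-time positive association to trajectory-level events $A,B$. After conditioning on the Poisson clocks $\mathcal{C}$ and correctly obtaining $P(A\cap B\mid\mathcal{C})\ge P(A\mid\mathcal{C})P(B\mid\mathcal{C})$ from the monotonicity of $\sigma_0\mapsto(\sigma_t)_t$ and the FKG inequality for the product law of $\sigma_0$, you still need
\begin{align*}
\mathbb{E}\bigl[P(A\mid\mathcal{C})\,P(B\mid\mathcal{C})\bigr]\;\ge\;\mathbb{E}\bigl[P(A\mid\mathcal{C})\bigr]\,\mathbb{E}\bigl[P(B\mid\mathcal{C})\bigr],
\end{align*}
i.e.\ nonnegative covariance of the two clock-measurable functions. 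Your parenthetical appeal to ``a further application of attractiveness'' does not deliver this: $P(A\mid\mathcal{C})$ is not a monotone function of the clock realization in any ordering for which the Poisson process is positively associated, because inserting an extra ring at a site can trigger either a $+\to-$ or a $-\to+$ flip, so it can either increase or decrease the conditional probability of an increasing event. As stated, this step would fail, and it is precisely the point of the lemma for the events to which it is applied ($B$ and $D$ in the main proofs are genuinely trajectory-level).

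The paper closes this differently, by a double induction that never conditions on the clocks. First it reproves FKG for events depending on finitely many sites at a single time by Harris's site-by-site induction. Then it inducts on the number of time steps $K$: writing $P(A\cap B)=\mathbb{E}\bigl[P\bigl(A\cap B\mid\omega(t_0),\dots,\omega(t_{K-1})\bigr)\bigr]$, it uses Theorem~\ref{Thrm:harris} to get positive association of the conditional law of $\omega(t_K)$ given the history together with the fact that $1_A,1_B$ are increasing in $\omega(t_K)$, and then observes that $P\bigl(A\mid\omega(t_0),\dots,\omega(t_{K-1})\bigr)$ and $P\bigl(B\mid\omega(t_0),\dots,\omega(t_{K-1})\bigr)$ are themselves increasing functions of the history, so the induction hypothesis applies to them. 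If you replace your clock-conditioning step with this iterated conditioning on the configuration history (keeping your attractiveness verification, which is needed to invoke Theorem~\ref{Thrm:harris} at each stage), the argument is complete.
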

\begin{proof}
Assume $A$ and $B$ are increasing random variables which depend only on the states of the sites $v_1,v_2,...,v_k$ and first time step. We proceed by induction on $k$. First, let $k=1$. Let $\omega(v_1)$ be the realization of the site $v_1$. We also have
\begin{align*}
\left(1_A(\omega_1)-1_A(\omega_2)\right) \left(1_B(\omega_1)-1_B(\omega_2) \right) \ge 0,
\end{align*}
for all pairs of vectors $\omega_1$ and $\omega_2$ from the sample space.
We have
\begin{align*}
0 &\le \sum_{\omega_1,\omega_2} \left(1_A(\omega_1)-1_A(\omega_2)\right) \left(1_B(\omega_1)-1_B(\omega_2)\right)P(\omega(v_1)=\omega_1)P(\omega(v_1)=\omega_2) \\
&=2\left(P(A\cap B)-P(A)P(B)\right),
\end{align*}
as required. Assume now that the result is valid for values of $n$ satisfying $k<n$. Then
\begin{align*}
P(A\cap B) &= \mathbb{E}\left[P\left(A\cap B \given[\Big]\omega(v_1),...,\omega(v_{n-1})\right) \right] \\
&\ge \mathbb{E}\left[P\left(A\given[\Big]\omega(v_1),...,\omega(v_{n-1})\right)P\left(B\given[\Big]\omega(v_1),...,\omega(v_{n-1})\right) \right],
\end{align*}
since, given $\omega(v_1),...,\omega(v_{n-1})$, $1_A$ and $1_B$ are increasing in the single variable $\omega(v_n)$. Now since $P\left(A|\omega(v_1),...,\omega(v_{n-1})\right)$ and $P\left(B|\omega(v_1),...,\omega(v_{n-1})\right)$ are increasing in the space of the $n-1$ sites, it follows from the induction hypothesis that 
\begin{align} \label{Eq:fkg_1}
P(A\cap B) &\ge \mathbb{E}\left[P\left(A\given[\Big]\omega(v_1),...,\omega(v_{n-1})\right)\right]\mathbb{E}\left[P\left(B\given[\Big]\omega(v_1),...,\omega(v_{n-1})\right) \right] \nonumber \\
&= P(A)P(B). 
\end{align}

Next, assume $A$ and $B$ are increasing random variables which depend only on the states of the sites in the first $k$ time steps. We proceed by induction on $k<K$ such that $K$ denotes the final time step over all the realizations. First, let $k=0$. Let $\omega(t_0)$ be the configuration of the graph at the first time step. We have
\begin{align*}
P(A\cap B) \ge P(A)P(B),
\end{align*}
by the above result. Assume now that the result is valid for all values of $k$ satisfying $k<K$. Then, since our process satisfies the conditions of   Theorem~\ref{Thrm:harris}  
and given $\omega(t_0),...,\omega(t_{K-1})$, $1_A$ and $1_B$ 
are increasing in $\omega(t_{K})$, we have
\begin{align*}
P(A\cap B) &= \mathbb{E}\left[P\left(A\cap B \given[\Big]\omega(t_0),...,\omega(t_{K-1})\right) \right] \\
&\ge \mathbb{E}\left[P\left(A\given[\Big]\omega(t_0),...,\omega(t_{K-1})\right)P\left(B\given[\Big]\omega(t_0),...,\omega(t_{K-1})\right) \right].
\end{align*}
 Now, since 
$P\left(A|\omega(t_0),...,\omega(t_{K-1})\right)$ and  
$P\left(B|\omega(t_0),...,\omega(t_{K-1})\right)$ are increasing in the space of the configurations of the graph in the first $K-1$ time steps, it follows from the induction hypothesis that 
\begin{align*}
P(A\cap B) &\ge \mathbb{E}\left[P\left(A\given[\Big]\omega(t_0),...,\omega(t_{K-1})\right)\right]\mathbb{E}\left[P\left(B\given[\Big]\omega(t_0),...,\omega(t_{K-1})\right) \right]\\
&= P(A)P(B). 
\end{align*}  
\end{proof}


%
%

\end{document}